\newtheorem{lem}{Lemma}
\newtheorem{thm}{Theorem}
\newtheorem{cor}{Corollary}
\theoremstyle{definition}
\DeclareMathOperator{\Pa}{{\mathcal{P}}}
\DeclareMathOperator{\lcm}{lcm}
\DeclareMathOperator{\Rem}{Rem}
\DeclareMathOperator{\Simp}{Simp}
\DeclareMathOperator{\Step}{Step}
\DeclareMathOperator{\Red}{Red}
\DeclareMathOperator{\Start}{Start}
\DeclareMathOperator{\End}{End}
\DeclareMathOperator{\Edges}{Edges}
\newcommand{\CDD}{{cd}}
\newcommand{\EP}{{ep}}
\newcommand{\CF}{{cr}}
\newcommand{\nc}{{\mathrm{ nc}}}
\newcommand{\cc}{{\mathrm {c}}}
\newcommand{\IR}{\mathds{R}}
\newcommand{\IN}{\mathds{N}}
\newcommand{\M}{\mathcal{M}}
\newcommand{\Bcnc}{B_{\mathrm {c}}}		
\newcommand{\Bunu}{B_{\mathrm {m/s}}}    	
\newcommand{\Bms}{B_{\mathrm {m/s}}}    	
\newcommand{\Bep}{B_\mathrm{e}}		
\newcommand{\Benp}{B_\mathrm{e}}		
\newcommand{\Bneone}{B_\mathrm{r}}		
\newcommand{\Bnetwo}{B_\mathrm{r}^\mathrm{HA}}	
\newcommand{\SCC}{\mathcal{C}}
\newcommand{\CP}{\mathcal{P}_\circlearrowleft}
\newcommand{\IRmax}{{\IR}_{\max}}
\newcommand{\IRmin}{{\IR}_{\min}}
\newcommand{\wstar}{{w}_{*}}
\newcommand{\ito}{{i\!\to}}
\newcommand{\realrem}[2]{\mathbf{N}_{\geqslant {#1}}^{({#2})}}
\renewcommand{\le}{\leqslant}
\renewcommand{\leq}{\leqslant}
\renewcommand{\ge}{\geqslant}
\renewcommand{\geq}{\geqslant}
\title{On the Transience of Linear Max-Plus Dynamical Systems}
\author{Bernadette Charron-Bost \and Matthias F\"ugger \and Thomas Nowak}
\begin{document}

\date{}

\maketitle
\begin{abstract}
We study the transients of linear max-plus dynamical systems.
For that, we consider for each irreducible max-plus matrix $A$, the weighted graph 
	$G(A)$ such that $A$ is the adjacency matrix of $G(A)$.
Based on a novel graph-theoretic counterpart to the number-theoretic Brauer's theorem, 
	we propose two new methods for the construction of arbitrarily long paths in $G(A)$
	with maximal weight.
That leads to two new upper bounds on the transient of a linear max-plus system
	which both improve on the bounds previously given by Even and Rajsbaum (STOC~1990, Theory of Computing Systems~1997),
	by Bouillard and Gaujal (Research Report~2000), and by Soto y Koelemeijer (PhD Thesis~2003),
	and are, in general, incomparable with Hartmann and Arguelles' bound (Mathematics of Operations Research~1999).
With our approach, we also show how to improve the latter bound  by a factor of two.

A significant benefit of our bounds is that each of them turns out  to be linear 
	in the size of the system in various classes of linear max-plus system whereas the bounds
     previously given are all at least quadratic. 
Our second result concerns the relationship between
     matrix and system transients:  We prove that the transient of an
     $N\times N$ matrix~$A$ is, up to some constant, equal
     to the  transient of an~$A$-linear system with an initial vector
	whose norm is quadratic in~$N$.
Finally, we study the applicability of our results to the well-known Full Reversal algorithm
	whose behavior can be described as a min-plus linear system.
\end{abstract}

\thispagestyle{empty}

\newpage
\tableofcontents
\thispagestyle{empty}
\newpage
\setcounter{page}{1}

\section{Introduction}

The mathematical theory of linear max-plus dynamical systems provides
     tools  to understand the complex behavior of many important
     distributed systems.
Of particular interest are transportation and automated manufacturing
     systems~\cite{GBO98, CDQV85, DS90}, or network
     synchronizers~\cite{even:rajs}.
As shown by Charron-Bost et al.~\cite{fr:sirocco}, another striking example is the {\em
     Full Reversal\/} distributed algorithm which  can be used to solve
     a variety of problems like routing~\cite{GB87} and
     scheduling~\cite{BG89}.
The fundamental theorem in max-plus algebra---an analog of the
     Perron-Frobenius theorem---states that the powers of an
     irreducible max-plus matrix become periodic after a finite index,
     called the {\em transient\/}  of the matrix (see for instance~\cite{workplus}).
As an immediate corollary, any linear max-plus system is periodic
     from some index, called the {\em transient\/} of the (linear)
     system, which clearly depends on initial conditions and  is at
     most equal to the transient of the matrix of the system.

For all the above mentioned applications, the study of the transient
     plays a key role in characterizing their performances: For
     example, in case of Full Reversal routing, the system transient
     corresponds to the time until the routing algorithm terminates in
     a destination oriented routing graph.
Besides that, understanding the matrix and system transient is of
     interest on its own for the theory of max-plus algebra.
While the transients of matrices and linear systems have been shown to
     be computable in polynomial time by Hartmann and
     Arguelles~\cite{hartmann:arguelles}, their algorithms provide no
     analysis of the transient phase, as they both use (binary) search
     at heart, and by that do not hint at the parameters that
     influence matrix and system transients.
Conversely, upper bounds involving these parameters would help to
     predict the duration of the transient phase, and to define
     strategies to reduce the transient as well.
Hence concerning transience bounds, our contribution is {\em both\/} numerical
     and methodological.
	
The problem of bounding the transients has already been studied:
     Bouillard and  Gaujal~\cite{bouillard:gaujal} have given an upper
     bound on the transient of a matrix which is exponential in the
     size of the matrix, and  polynomial bounds have been established
     by Even and Rajsbaum~\cite{even:rajs} for linear systems with
     integer coefficients, by Soto y Koelemeijer~\cite{koelemeijer}
     for both general matrices and linear systems, and
     by Hartmann and Arguelles~\cite{hartmann:arguelles} for general
     matrices and linear systems in max-plus algebra.
In each of these works, the problem of studying the transient is
     reduced to the study of paths in a specific graph: For every
     max-plus matrix~$A$, one considers the weighted directed
     graph~$G$ whose adjacency matrix is~$A$, and the {\em critical
     subgraph\/} of~$G$ which consists of the {\em critical closed
     paths\/} in~$G$, namely those closed paths with maximal average
     weight.
The periodic behavior of the powers of~$A$ is intimately related to
     the structure of the critical subgraph of~$G$: Bounding
     transients amounts to bounding the weights of arbitrary long
     paths in the graph.
The first step in controlling the weights of paths consists in
     reaching the critical  subgraph with sufficiently  long paths.
With respect to this first step, the methods used in the four
     above-mentioned transience  bounds are rather similar.
The approaches mainly differ in the way the critical subgraph is then
     visited.

In this article, we propose two new methods, namely the {\em
     explorative method\/} and the {\em repetitive method\/} for
     visiting the critical subgraph.
The first one consists in exploring the whole strongly connected
     components of the  critical subgraph whereas in the second one,
     the visit of the critical subgraph is confined to repeatedly
     follow only one closed path.
That leads us to two new upper bounds on the transients of linear
     systems which improve on the bounds given by Even and Rajsbaum
     and by Soto y Koelemeijer, and are incomparable with Hartmann and
     Arguelles' bound, for which we show how to improve it by a factor
     of two.

A significant benefit of our bounds lies in the fact  that each
     of them turns out  to be linear in the size of the system 
	 (i.e., the number of nodes) in some
     important graph families (e.g., trees) whereas the bounds
     previously given are all at least quadratic.
This is mainly due to the introduction of new graph parameters that
     enable a fine-grained  analysis of the transient phase.
In particular, we introduce the notion of the {\em exploration
     penalty\/} of a graph~$G$ as the least  integer~$k$ with the
     property that, for every  $n\geqslant k$ divisible by the
     cyclicity of~$G$ and every node~$i$ of~$G$, there is a closed
     path  starting and ending at~$i$ of length~$n$.
One key point is then an at most quadratic upper bound on the
     exploration penalty which we derive from  the number-theoretic
     Brauer's Theorem~\cite{Bra42}.

Another contribution of this paper concerns the relationship between
     matrix and system transients:  We prove that the transient of an
     $N\times N$ matrix~$A$---which is clearly an upper bound on all
     transients of~$A$-linear systems---is, up to some constant, equal
     to the  transient of an~$A$-linear system with an initial vector
     whose norm is quadratic in~$N$.
In addition to shedding new light on transients, this result provides
     a direct method for deriving upper bounds on matrix transients
     from upper bounds on system transients.

The paper is organized as follows.
Section~\ref{sec:prelim} introduces basic notions of graph theory and max-plus algebra.
We show an upper bound on lengths of maximum weight paths that do not visit the critical subgraph in Section~\ref{sec:visiting}.
In Section~\ref{sec:explorationpenality}, we introduce the notion of {\em exploration penalty\/} and improve a theorem by Denardo~\cite{denardo} on the existence of arbitrarily long paths in strongly connected graphs.
Sections~\ref{sec:explorative} and~\ref{sec:nonexplorative} introduce our explorative and repitive bounds, respectively.
We show how to convert upper bounds on the transients of max-plus systems to upper bounds on the transients of max-plus matrices in Section~\ref{sec:matrix}.
We discuss our results, by comparing them to previous work and by applying them to the analysis of the Full Reversal algorithm, in Section~\ref{sec:discussion}.

\section{Preliminaries}\label{sec:prelim}

\subsection{Basic definitions}

Denote by $\IN$ the set of nonnegative integers and let $\IN^* = \IN
     \setminus \{0\}$.
Let $\IRmax = \IR \cup \{-\infty\}$.
In this paper, we follow the convention $\max\emptyset=-\infty$.

A {\em (directed) graph\/}~$G$ is a pair~$(V,E)$, where~$V$ is
     a nonempty finite set and $E\subseteq V\times V$.
We call the elements of~$V$ the {\em nodes\/} of~$G$ and the elements
     of~$E$ the {\em edges\/} of~$G$.
An edge $e=(i,j)$ is called {\em incident to\/}
     node $k$ if $k = i$ or $k = j$.
We say that~$G$ is {\em nontrivial\/} if~$E$ is nonempty.

A graph $G'=(V',E')$ is a {\em subgraph\/} of~$G$ if $V'\subseteq V$
     and $E' \subseteq E$.
For a nonempty subset~$E'$ of~$E$, let the {\em subgraph  of $G$ induced by edge set $E'$\/} be the
     graph~$(V',E')$ where $V' = \{i
     \in V \mid \exists j\in V: (i,j)\in E' \vee (j,i)\in E'\}$.

A {\em path\/}~$\pi$ in $G$ is a triple $\pi=(\Start,\Edges,\End)$
     where $\Start$ and $\End$ are nodes in $G$, $\Edges$ is a
     sequence $(e_1,e_2,\dots,e_n)$ of edges  $e_l=(i_l,j_l)$ such
     that $j_l=i_{l+1}$ if $1\leqslant l\leqslant n-1$ and, if
     $\Edges$ is nonempty, $i_1=\Start$ and $j_n=\End$, and if
     $\Edges$ is empty, $\Start=\End$.
We say that~$\pi$ is {\em empty\/} if $\Edges$ is empty.
We define the operators $\Start$, $\Edges$, and $\End$ on the set of
     paths by setting $\Start(\pi)=\Start$, $\Edges(\pi)=\Edges$, and
     $\End(\pi)=\End$.
Define the {\em length\/}~$\ell(\pi)$ of~$\pi$ as the length of $\Edges(\pi)$.

Let $\Pa(i,j,G)$ denote the set of paths $\pi$ in $G$ with
     $\Start(\pi)=i$ and $\End(\pi)=j$, and by $\Pa(\ito,G)$ the set of
     paths~$\pi$ in~$G$ with $\Start(\pi)=i$.
If $\pi\in\Pa(i,j,G)$, we say that $i$ is the {\em start node\/} of
     $\pi$ and~$j$ is the {\em end node\/} of $\pi$.
We write $\Pa^n(i,j,G)$ (respectively $\Pa^n(\ito,G)$), where $n\ge
     0$, for the set of paths in $\Pa(i,j,G)$ (respectively $\Pa(\ito
     ,G)$) of length $n$.
Path~$\pi$ is {\em closed\/} if $\Start(\pi)=\End(\pi)$.
Let $\CP(G)$ denote the set of {\em nonempty\/} closed paths in graph~$G$.

Path~$\pi$ is {\em elementary\/} if the nodes~$i_l$ are pairwise distinct.
Intuitively, an elementary path visits at most one node twice: its start node.
All elementary paths~$\pi$ satisfy $\ell(\pi)\leqslant \lvert V\rvert$.
A path is {\em simple\/} if it is elementary and non-closed, or empty.
Intuitively, a path is simple if it does not visit the same node twice.
All simple paths~$\pi$ satisfy $\ell(\pi) \leqslant \lvert V\rvert -1$.

Call $G$ {\em strongly connected\/} if $\Pa(i,j,G)$ is nonempty for
     all nodes $i$ and $j$ of $G$.
A subgraph $H$ of $G$ is called a {\em strongly connected
     component\/} of $G$ if $H$ is maximal with respect to the
     subgraph relation such that $H$ is strongly connected.
Denote by $\SCC(G)$ the set of strongly connected components of~$G$.
For every node $i$ of $G$ there exists exactly one $H$ in $\SCC(G)$
     such that $i$ is a node of $H$.

For two paths $\pi$ and $\pi'$ in $G$, we say that $\pi'$ is a {\em
     prefix of $\pi$\/} if $\Start(\pi)=\Start(\pi')$ and
     $\Edges(\pi')$ is a prefix of $\Edges(\pi)$.
We say that $\pi'$ is a {\em postfix of $\pi$\/} if
     $\End(\pi)=\End(\pi')$ and $\Edges(\pi')$ is a postfix of
     $\Edges(\pi)$.
We call $\pi'$ a {\em subpath of $\pi$\/} if it is the postfix of some
     prefix of $\pi$.
We say a node~$i$ is a {\em node of path $\pi$\/} if there exists a
     prefix $\pi'$ of $\pi$ with $\End(\pi')=i$, and an edge $e$ is an
     {\em edge of path $\pi$\/} if $e$ occurs within $\Edges(\pi)$.
For two paths $\pi_1$ and $\pi_2$ with $\End(\pi_1) = \Start(\pi_2)$,
     define the concatenation $\pi = \pi_1 \cdot \pi_2$ by setting
     $\Start(\pi) = \Start(\pi_1)$, $\Edges(\pi) = \Edges(\pi_1) \cdot
     \Edges(\pi_2)$ and $\End(\pi) = \End(\pi_2)$.

We define the {\em girth\/}~$g(G)$ and {\em circumference\/}~$\CF(G)$ of~$G$ to be the minimum resp.\ maximum length of
     nonempty elementary closed paths in $G$.
Define the {\em cab driver's diameter\/}~$\CDD(G)$ of~$G$ to be the
     maximum length of simple paths in $G$.
The {\em cyclicity\/} of~$G$ is defined as 
\begin{equation*}
  c(G)  = \lcm\big\{ \gcd\{ \ell(\gamma)  \mid \gamma \in \CP(H) \}  \mid H \in \SCC(G) \big\}\enspace.
\end{equation*}
For a strongly connected graph~$G$, the cyclicity~$c(G)$ is the greatest common divisor of closed path lengths in~$G$; in particular~$c(G)\leqslant g(G)\leqslant \lvert V\rvert$.
We say that~$G$ is {\em primitive\/} if $c(G)=1$.
A graph is primitive if and only if all its strongly connected components are primitive.
We also define the two graph parameters
\[ d(G) = \max \{ c(H) \mid H\in \SCC(G) \} \]
and
\[ p(G) = \lcm \{ \ell(\gamma) \mid \gamma \in \CP(H) \wedge \gamma \text{ is elementary}\}\enspace. \]
It is $d(G)\leqslant c(G)\leqslant p(G)$.

\begin{lem}\label{lem:path:split}
Let $\pi$ be a path in $G$.
Then there exist paths $\pi_1$, $\pi_2$ and an elementary closed
     path $\gamma$ such that $\pi=\pi_1\cdot\gamma\cdot\pi_2$.
Moreover, if $\pi$ is non-simple, then $\gamma$ can be chosen to be nonempty.
\end{lem}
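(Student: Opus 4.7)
The first assertion is essentially vacuous once we allow $\gamma$ to be empty: the empty path at $\mathrm{End}(\pi)$ is closed (its Start equals its End by definition) and vacuously elementary, so setting $\pi_1=\pi$ with $\gamma$ and $\pi_2$ both taken to be empty at $\mathrm{End}(\pi)$ immediately gives $\pi=\pi_1\cdot\gamma\cdot\pi_2$. The substance of the lemma is therefore the second claim, and the plan concentrates on it.

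For the non-simple case I would split on whether $\pi$ itself is elementary. If $\pi$ is elementary, then being non-simple and nonempty forces it to be closed, so $\pi$ is already a nonempty elementary closed path and we just take $\gamma=\pi$ with both $\pi_1$ and $\pi_2$ empty. Otherwise $\pi$ is non-elementary, and we extract a closed subpath by the standard shortest-repetition trick. Writing $\mathrm{Edges}(\pi)=(e_1,\dots,e_n)$ with $e_l=(i_l,j_l)$, non-elementarity supplies a pair $1\leqslant l<m\leqslant n$ with $i_l=i_m$; choose such a pair minimizing $m-l$. Let $\gamma$ be the subpath carrying edges $(e_l,\dots,e_{m-1})$; it runs from $i_l$ to $j_{m-1}=i_m=i_l$, so it is nonempty and closed. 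Setting $\pi_1$ to carry $(e_1,\dots,e_{l-1})$ and $\pi_2$ to carry $(e_m,\dots,e_n)$, the concatenation identity $\pi=\pi_1\cdot\gamma\cdot\pi_2$ follows directly from the definitions.

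The only step that is not pure bookkeeping is verifying that this $\gamma$ is elementary, and I expect this to be the sole (very mild) obstacle. The argument is the minimality argument: any internal repetition $i_a=i_b$ with $l\leqslant a<b\leqslant m-1$ would yield $b-a<m-l$, contradicting the minimal choice of $m-l$. Hence the nodes $i_l,i_{l+1},\dots,i_{m-1}$ are pairwise distinct and $\gamma$ is elementary, completing the decomposition.
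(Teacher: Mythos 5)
Your proposal is correct and follows essentially the same route as the paper's proof: the trivial choice $\pi_1=\pi$ with $\gamma,\pi_2$ empty for the first claim, the case split between an elementary (hence closed) non-simple path and a genuinely non-elementary one, and the extraction of a closed subpath from a repeated node pair chosen to minimize $m-l$, with minimality giving elementarity of $\gamma$. Nothing to add.
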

\begin{proof}
The first claim is trivial, for we can choose~$\pi_1=\pi$, and~$\gamma$ and~$\pi_2$ to be empty.

So let~$\pi$ be non-simple.
Then~$\pi$ is necessarily nonempty.
If $\pi$ is elementary closed, set $\gamma=\pi$ and choose $\pi_1$ and $\pi_2$ to
     be empty.
Otherwise, let $\Edges(\pi)=(e_1,\dots,e_n)$ and $e_l=(i_l,j_l)$.
By assumption, there exist $k<l$ such that $i_k=i_l$ and $l-k$ is
     minimal.
Now set $\Edges(\pi_1)=(e_1,\dots,e_{k-1})$,
     $\Edges(\gamma)=(e_k,\dots,e_{l-1})$, $\Edges(e_l,\dots,e_n)$,
     and choose the start and end nodes accordingly.
Closed path $\gamma$ is elementary, because $l-k$ was chosen to be
     minimal.
\end{proof}

An {\em edge-weighted\/} (or {\em e-weighted\/}) graph~$G$ is a
triple~$(V,E,w_E)$ such that~$(V,E)$ is a graph and $w_E:E\to\IR$.
An {\em edge-node-weighted\/} (or {\em en-weighted\/}) graph~$G$ is a quadruple~$(V,E,w_E,w_V)$
     such that~$(V,E,w_E)$ is an e-weighted graph and $w_V:V\to\IRmax$.

If $G=(V,E,w_E,w_V)$ is an en-weighted graph and~$\pi$ is a path
in~$G$ with $\Edges(\pi) = (e_1,\dots,e_n)$,
we define the {\em en-weight\/} of~$\pi$ as
\[ w(\pi) = \sum_{l=1}^n w_E(e_l) + w_V\big( \!\End(\pi) \big) \enspace, \]
and, if $G$ is just an e-weighted graph, the {\em e-weight\/} of~$\pi$ as
\[ \wstar(\pi) = \sum_{l=1}^n w_E(e_l)\enspace. \]
Let $w^n(\ito,G) = \max\{ w(\pi) \mid \pi\in \Pa^n(\ito,G) \}$.

In the rest of the paper, every notion introduced for graphs (resp.\
e-weighted graphs) is trivially extended to e-weighted graphs (resp.\
en-weighted graphs) using the same terminology and notation.

\subsection{Realizers}

Let~${\mathbf N}$ be any nonempty set of nonnegative integers,
	and let~$i$ be any node of a strongly connected e-weighted 
	graph $G$.
A path $\tilde{\pi}$ is said to be an  ${\mathbf N}$-{\em realizer for node} $i$
	if $\tilde{\pi} \in  \Pa(\ito,G)$,  $\ell(\tilde{\pi}) \in {\mathbf N}$,
	and  $w(\tilde{\pi}) =  \sup_{n\in {\mathbf N}} w^n(\ito)$.
Obviously, an ${\mathbf N}$-realizer exists for any node if ${\mathbf N}$ is
	finite.
Of particular interest is the case of sets ${\mathbf N}$ of the form
	$$\realrem{\hat{n}}{r,p} = 
	\{ n\in \IN \mid n \geqslant \hat{n}\  \wedge \ n\equiv r\pmod p \}\enspace,$$
	where $r\in \IN$, and $p, \hat{n}\in \IN^*$.
The following lemma gives a useful sufficient condition in terms of
	$\realrem{\hat{n}}{r,p}$-realizer to guarantee the eventual periodicity of 
	the sequence $\big(w^n(\ito )\big)_{n\in \IN}$.
	
\begin{lem}\label{lem:final:step}
Let~$i$ be a node and let~$p$ and~$\hat{n}$ be positive integers.
Suppose that for all~$n\geqslant \hat{n}$, there exists a $\realrem{\hat{n}}{n,p}$-realizer 
	for~$i$ of length $n$.
Then $w^{n+p}(\ito ) = w^n(\ito )$ for all~$n\geqslant \hat{n}$.
\end{lem}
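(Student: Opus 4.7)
The plan is to show that, under the hypothesis, $w^n(\ito)$ in fact equals the supremum
\[ S_n := \sup_{m\in \realrem{\hat{n}}{n,p}} w^m(\ito) \]
for every $n\geqslant \hat{n}$; once this is established, the stated periodicity follows at once because $S_n$ depends on $n$ only through its residue class modulo $p$.

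First I would unpack the hypothesis. Let $\tilde{\pi}_n$ denote the realizer provided for a given $n\geqslant \hat{n}$. Its length is exactly $n$, so by the definition of $w^n(\ito)$ as the maximum weight over all length-$n$ paths starting at~$i$, we have $w(\tilde{\pi}_n)\leqslant w^n(\ito)$. The realizer property simultaneously yields $w(\tilde{\pi}_n) = S_n$, so $S_n\leqslant w^n(\ito)$. Conversely, because $n\in \realrem{\hat{n}}{n,p}$, the sup $S_n$ trivially dominates the single term $w^n(\ito)$. These two inequalities force the equality $w^n(\ito) = S_n$.

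Next I would observe that the set $\realrem{\hat{n}}{r,p}$ depends on $r$ only through its residue modulo $p$. Since $n+p \equiv n\pmod p$ and $n+p\geqslant \hat{n}$, the sets $\realrem{\hat{n}}{n,p}$ and $\realrem{\hat{n}}{n+p,p}$ coincide, and therefore $S_n = S_{n+p}$. Combined with the first step applied both at $n$ and at $n+p$ this yields $w^n(\ito) = S_n = S_{n+p} = w^{n+p}(\ito)$, which is the conclusion of the lemma.

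I do not expect any genuine obstacle; the argument is essentially a definition chase. The only subtlety worth flagging is that the hypothesis supplies a realizer of length \emph{exactly}~$n$ (rather than of some arbitrary length in the residue class): it is precisely this length constraint that collapses the supremum $S_n$, taken over an infinite residue class, onto the single value $w^n(\ito)$, and thereby lets the trivial equality $S_n = S_{n+p}$ translate into the nontrivial periodicity of $\big(w^n(\ito)\big)_{n\in\IN}$.
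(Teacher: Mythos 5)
Your proof is correct and follows essentially the same route as the paper's: the paper's quantity $x(n)$ (the supremum of $w(\pi)$ over paths starting at~$i$ with $\ell(\pi)\geqslant\hat{n}$ and $\ell(\pi)\equiv n\pmod p$) coincides with your $S_n$, and both arguments pin $w^n(\ito)$ to this supremum via the length-$n$ realizer and the inclusion $\Pa^n(\ito)\subseteq X(n)$, then use that the supremum depends only on the residue class of~$n$ modulo~$p$.
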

\begin{proof}
For each integer $n\geqslant \hat{n}$, let $\pi_n$ be one $\realrem{\hat{n}}{n,p}$-realizer 
		for~$i$ of length $n$.
Denote by~$X(n)$ the set of paths~$\pi$ in~$\Pa(\ito)$ that satisfy
	$\ell(\pi)\equiv n\pmod p$ and $\ell(\pi)\geqslant \hat{n}$, and
	    let~$x(n)$ be the supremum of values~$w(\pi)$ where~$\pi\in
	    X(n)$.
Since $\realrem{\hat{n}}{n,p}$ has a realizer, each $x(n)$ is  finite, and $ x(n) = w(\pi_n)$.

From $n+p \equiv n\pmod p$, it follows $$X(n+p) = X(n)$$ and so $x(n+p)=x(n)$.
For all~$n\geqslant \hat{n}$, we have~$\Pa^n(\ito) \subseteq X(n)$,
	    i.e., $w^n(\ito ) \leqslant x(n)$.
As $n +p > n$, we also conclude that  $w^{n+p}(\ito) \leqslant x(n+p)$. 
Because~$\pi_n\in \Pa^n(\to j)$, we have~$w^n(\to j) \geqslant w(\pi_n)$.
Similarly,~$w^{n+p}(\to j) \geqslant w(\pi_{n+p})$. 
This concludes the proof.
	\end{proof}

\subsection{The critical subgraph}

Let~$G = (V,E,w_E)$ be a nontrivial strongly connected e-weighted graph.
Define the {\em rate\/}~$\varrho(G)$ of~$G$ by
\[ \varrho(G) = \sup \left\{ \frac{\wstar(\gamma)}{\ell(\gamma)} \mid \gamma \in \CP(G) \right\} \enspace, \]
which is easily seen to be finite.
A (nonempty) closed path $\gamma \in \CP(G)$ is {\em critical\/} if
     $\wstar(\gamma)/\ell(\gamma) = \varrho(G)$.
A node of $G$ is {\em critical\/} if it is node of a critical path
     in~$G$, and an edge of $G$ is {\em critical\/} if it is an edge
     of a critical path in~$G$.
The {\em critical subgraph\/} of~$G$, denoted by~$G_\cc$, is
     the subgraph of $G$ induced by the set of critical edges
     of~$G$.
A {\em critical component\/} of~$G$ is a strongly connected component of the critical subgraph of~$G$.

We denote by $\Delta(G)$ (respectively $\delta(G)$) the maximum (respectively minimum) edge weights in~$G$.  
Let~$\Delta_\nc(G)$ denote the maximum weight of edges between two non-critical nodes.
If no such edge exists, set $\Delta_\nc(G) = \varrho(G)$.
Note that $\delta(G)\leqslant \varrho(G)\leqslant \Delta(G)$ always holds.
Denote the {\em non-critical rate of $G$} by
	\[ \varrho_{nc} (G) = 
	\sup \left\{ \frac{\wstar(\gamma)}{\ell(\gamma)} \mid \gamma \in \CP(G) 
	\ \wedge \ \gamma \mbox{ has no critical node } \right\} \enspace, \]
	with the classical convention that $\varrho_{nc}(G)= -\infty$.
	if no such path exists.

We now study how the critical graph and the various parameters introduced above are modified
	by homotheties.
Let $\lambda$ be any element in $\IR$, and let $\lambda \! \otimes \! G$ denote the e-weighted graph 
	$ \lambda \! \otimes \! G= (V,E,\lambda \! \otimes \! w_E)$ where for any edge $e \in E$
	\[ \lambda \! \otimes \! w_E(e) = w_E(e) + \lambda \enspace. \]

\begin{lem}\label{lem:hom}
The  e-weighted graph~$\lambda \! \otimes \! G$ has the same critical
	subgraph as $G$, and its rate is equal to 
	$ \varrho (\lambda \! \otimes \! G) = \varrho (G) + \lambda $.
\end{lem}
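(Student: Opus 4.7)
The plan is to exploit the fact that adding the constant $\lambda$ to every edge weight has a uniform effect on every closed path: a closed path $\gamma$ with $\ell(\gamma)=n$ satisfies $\wstar^{\lambda\otimes G}(\gamma) = \wstar(\gamma) + n\lambda$. Dividing by $\ell(\gamma)=n$ shows that the average weight of every nonempty closed path in $\lambda\otimes G$ is exactly its average weight in $G$ shifted by $\lambda$.

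First, I would take the supremum of $\wstar^{\lambda\otimes G}(\gamma)/\ell(\gamma)$ over all $\gamma\in\CP(\lambda\otimes G)=\CP(G)$. By the identity above, this supremum equals $\varrho(G)+\lambda$, which gives $\varrho(\lambda\otimes G)=\varrho(G)+\lambda$ directly.

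Next, I would observe that, because the shift is the same for every nonempty closed path, the identity
\[ \frac{\wstar^{\lambda\otimes G}(\gamma)}{\ell(\gamma)} = \varrho(\lambda\otimes G) \quad\Longleftrightarrow\quad \frac{\wstar(\gamma)}{\ell(\gamma)} = \varrho(G) \]
holds for every $\gamma\in\CP(G)$. Hence the set of critical closed paths of $\lambda\otimes G$ coincides with that of $G$. Since the sets of critical edges are defined from these sets of critical paths, they are also equal, and the induced subgraphs coincide.

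There is no real obstacle here; the only thing to be careful about is that the claim concerns \emph{nonempty} closed paths (so the length $n$ is strictly positive and the division is well-defined), and that the underlying graph $(V,E)$ is not altered by the homothety, so $\CP(\lambda\otimes G)$ and $\CP(G)$ refer to the same set of edge sequences—only the weights change. Once these trivialities are acknowledged, the two assertions follow immediately from the single algebraic identity above.
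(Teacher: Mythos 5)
Your proposal is correct and follows exactly the paper's argument: both rest on the single identity that the average weight of every nonempty closed path shifts uniformly by $\lambda$, from which the rate formula follows by taking suprema and the equality of critical subgraphs follows because criticality of each closed path is preserved. Your write-up merely spells out the ``easily follows'' step of the paper (criticality of paths $\Rightarrow$ same critical edges $\Rightarrow$ same induced subgraph), which is a welcome but inessential elaboration.
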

\begin{proof}
If~$\wstar(\pi)$ and $\lambda \! \otimes \! w_*(\pi)$ denote the respective e-weights of path~$\pi$ in~$G$ 
	and~$\lambda \! \otimes \! G$, then we have the equality
	$\lambda \! \otimes \! \wstar(\pi) / \ell(\pi) = \wstar(\pi) / \ell(\pi) + \lambda$, which 
	implies~$\varrho(\lambda \! \otimes \! G)= \varrho (G) + \lambda$.
The equality $ \big(\lambda \! \otimes \! G\big)_c = G_\cc$ now easily follows.
\end{proof}	

From Lemma~\ref{lem:hom}, we easily check that  each of the parameters 
	$\Delta(G) - \varrho(G)$, $\Delta_\nc(G) - \varrho(G)$, 
	and $\delta(G) - \varrho(G)$ is invariant under homothety:

\begin{lem}\label{lem:invariance}
Let $G$ be an e-weighted graph, and $\lambda$ an element in $\IR$.
If $\lambda \! \otimes \! G$ denotes the e-weighted graph obtained by adding $\lambda$ to each 
	edge weight of $G$, then 
	$\Delta(\lambda \! \otimes \! G)-\varrho(\lambda \! \otimes \! G) = \Delta(G) - \varrho(G)$,
	$\Delta_\nc(\lambda \! \otimes \! G)-\varrho(\lambda \! \otimes \! G) = \Delta_\nc(G) - \varrho(G)$, and
	$\delta(\lambda \! \otimes \! G)-\varrho(\lambda \! \otimes \! G) = \delta(G) - \varrho(G)$.
\end{lem}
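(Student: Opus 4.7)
The plan is to reduce everything to Lemma~\ref{lem:hom} together with the elementary observation that each of $\Delta$, $\Delta_{\nc}$, and $\delta$ transforms by $+\lambda$ under the homothety $G \mapsto \lambda\otimes G$. Once this is done, combining with $\varrho(\lambda\otimes G) = \varrho(G) + \lambda$ immediately gives the three invariance statements.

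First, I would handle $\Delta$ and $\delta$, which are the easy cases. Since $\lambda\otimes w_E(e) = w_E(e) + \lambda$ holds edgewise on the common edge set~$E$, taking maximum (resp.\ minimum) over $e\in E$ yields $\Delta(\lambda\otimes G) = \Delta(G) + \lambda$ and $\delta(\lambda\otimes G) = \delta(G) + \lambda$. Subtracting $\varrho(\lambda\otimes G) = \varrho(G) + \lambda$ from each gives the first and third equalities.

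The only slightly subtle case is $\Delta_\nc$, because its definition involves the notion of critical node, and also the special convention $\Delta_\nc(G) = \varrho(G)$ when no edge between two non-critical nodes exists. The key point here is that by Lemma~\ref{lem:hom} the critical subgraphs of $G$ and $\lambda\otimes G$ coincide, hence the sets of critical nodes (and thus of non-critical nodes) of $G$ and $\lambda\otimes G$ are identical. So the set of edges between two non-critical nodes is the same for both graphs. I would then split into two cases: if this edge set is nonempty, then taking the maximum of $w_E(e)+\lambda$ over it gives $\Delta_\nc(\lambda\otimes G) = \Delta_\nc(G) + \lambda$, and subtracting $\varrho(\lambda\otimes G) = \varrho(G) + \lambda$ yields the claim; if it is empty, then by convention $\Delta_\nc(\lambda\otimes G) = \varrho(\lambda\otimes G)$ and $\Delta_\nc(G) = \varrho(G)$, so both sides of the desired equality vanish.

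I do not anticipate any real obstacle: the whole argument is a bookkeeping exercise built on top of Lemma~\ref{lem:hom}. The only point that must not be overlooked is the invariance of the critical/non-critical node partition, which is exactly what Lemma~\ref{lem:hom} provides, and the separate treatment of the convention case in the definition of $\Delta_\nc$.
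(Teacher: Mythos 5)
Your proof is correct and follows the same route the paper intends: the paper gives no detailed argument for this lemma, merely noting that it is ``easily checked'' from Lemma~\ref{lem:hom}, and your verification---edgewise shift by $\lambda$ for $\Delta$, $\delta$, $\Delta_\nc$, preservation of the critical/non-critical node partition via Lemma~\ref{lem:hom}, and the separate handling of the convention case $\Delta_\nc(G)=\varrho(G)$---fills in exactly those details.
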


We next show that critical closed paths exist by showing that $\varrho(G)$ is equal to the supremum of the finite set
     of $\wstar(\gamma)/\ell(\gamma)$ where $\gamma$ is a nonempty {\em elementary\/} closed path.
Define 
\[ \varrho_e(G) = \sup \left\{ \frac{\wstar(\gamma)}{\ell(\gamma)} \mid \gamma \in \CP(G) \ \wedge \ \gamma\text{ is elementary} \right\} \enspace. \]

\begin{lem}\label{lem:mean}
Let $a,b,c,d$ be real numbers, $b$ and $d$ positive, such that
     $a/b\leqslant c/d$.
Then:   
\begin{equation*}
  \frac{a}{b}\leqslant \frac{a+c}{b+d}\leqslant \frac{c}{d}
\end{equation*}
\end{lem}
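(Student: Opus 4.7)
The plan is to reduce both inequalities to the single cross-multiplied form $ad\leqslant bc$, which is equivalent to the hypothesis $a/b\leqslant c/d$ because $b$ and $d$ are strictly positive. Once that cross-multiplied form is in hand, each of the two claimed inequalities follows from a one-line algebraic manipulation.

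More concretely, first I would observe that multiplying $a/b\leqslant c/d$ by the positive quantity $bd$ yields $ad\leqslant bc$. Since $b>0$ and $d>0$, we also have $b+d>0$, so the denominators appearing in the claim are all strictly positive, and comparing fractions with positive denominators reduces to comparing their cross-products.

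For the left inequality I would compute
\[
\frac{a+c}{b+d}-\frac{a}{b}=\frac{b(a+c)-a(b+d)}{b(b+d)}=\frac{bc-ad}{b(b+d)}\geqslant 0,
\]
and for the right inequality, symmetrically,
\[
\frac{c}{d}-\frac{a+c}{b+d}=\frac{d(a+c)-(b+d)c \cdot (-1)}{d(b+d)}=\frac{bc-ad}{d(b+d)}\geqslant 0,
\]
where in both cases the numerator $bc-ad$ is nonnegative by the hypothesis and the denominator is positive.

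There is essentially no obstacle here: the lemma is the classical mediant inequality, and the only thing to be slightly careful about is the sign of the denominators, which is guaranteed by the positivity assumptions on $b$ and $d$. No use of the ambient graph-theoretic or max-plus framework is needed.
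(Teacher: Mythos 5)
Your proof is correct and follows essentially the same route as the paper: both reduce the hypothesis to the cross-multiplied form $ad\leqslant bc$ (valid since $b,d>0$) and then verify each inequality by an elementary algebraic manipulation with the positive denominator $b+d$. The only blemish is the garbled intermediate numerator in your second display (the $\cdot(-1)$ placement), but the final expression $\frac{bc-ad}{d(b+d)}$ is right, so the argument stands.
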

\begin{proof}
We have $ad\leqslant bc$ and thus
$$\frac{a+c}{b+d} = \frac{1}{b}\cdot\frac{ab+bc}{b+d}\geqslant \frac{1}{b} \cdot \frac{ab+ad}{b+d} = \frac{a}{b}\enspace.$$
Analogously,
$$\frac{a+c}{b+d} = \frac{1}{d}\cdot\frac{ad+cd}{b+d}\leqslant \frac{1}{d} \cdot \frac{bc+cd}{b+d} = \frac{c}{d}\enspace,$$
which concludes the proof.
\end{proof}

\begin{lem}\label{lem:xi:is:minimum}
For every nontrivial strongly connected e-weighted graph $G$, 
	$$\varrho(G)=\varrho_e(G) \enspace.$$
In particular, there exists an elementary critical closed path in~$G$.
\end{lem}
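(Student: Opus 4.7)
The plan is to prove the two inequalities $\varrho_e(G)\leqslant \varrho(G)$ and $\varrho(G)\leqslant \varrho_e(G)$ and then derive the existence statement from the fact that $\varrho_e(G)$ is a supremum over a finite set.

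First, $\varrho_e(G)\leqslant \varrho(G)$ is immediate from the definitions, since the set of elementary nonempty closed paths is a subset of $\CP(G)$.

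For the reverse inequality I would show that every $\gamma \in \CP(G)$ admits a decomposition into elementary closed paths $\gamma_1,\dots,\gamma_k$ in the sense that $\ell(\gamma)=\sum_{l=1}^k \ell(\gamma_l)$ and $\wstar(\gamma)=\sum_{l=1}^k \wstar(\gamma_l)$. To build this decomposition I apply Lemma~\ref{lem:path:split} repeatedly: since $\gamma$ is nonempty and closed, it is non-simple, so the lemma yields $\gamma=\pi_1\cdot\gamma_1\cdot\pi_2$ with $\gamma_1$ a nonempty elementary closed path. The residual path $\pi_1\cdot\pi_2$ is again closed, because $\gamma_1$ closed forces $\End(\pi_1)=\Start(\pi_2)$ and $\gamma$ closed forces $\Start(\pi_1)=\End(\pi_2)$. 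If this residual is empty we stop; otherwise it is a nonempty closed path, hence non-simple, and we iterate. The procedure terminates because each step strictly reduces the length. Concatenation of e-weights and lengths is additive, so the weight and length identities for the decomposition follow.

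Given such a decomposition, a straightforward induction based on Lemma~\ref{lem:mean} gives
\[
\frac{\wstar(\gamma)}{\ell(\gamma)} \;=\; \frac{\sum_{l=1}^k \wstar(\gamma_l)}{\sum_{l=1}^k \ell(\gamma_l)} \;\leqslant\; \max_{1\leqslant l\leqslant k} \frac{\wstar(\gamma_l)}{\ell(\gamma_l)} \;\leqslant\; \varrho_e(G)\enspace,
\]
and taking the supremum over $\gamma\in\CP(G)$ yields $\varrho(G)\leqslant \varrho_e(G)$. Combined with the other inequality this proves $\varrho(G)=\varrho_e(G)$. Finally, since every elementary closed path has length at most $\lvert V\rvert$, there are only finitely many of them, so the supremum $\varrho_e(G)$ is attained, giving an elementary critical closed path.

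The only non-routine point is the iterative decomposition argument: one must verify that the residual path $\pi_1\cdot\pi_2$ at each step is still a closed path so that Lemma~\ref{lem:path:split} can be reapplied. Once this is settled, the remainder reduces to the mean inequality of Lemma~\ref{lem:mean} and the finiteness of elementary closed paths.
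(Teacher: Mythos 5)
Your proof is correct and follows essentially the same route as the paper's: both rest on Lemma~\ref{lem:path:split} to extract nonempty elementary closed subpaths and on the mediant inequality of Lemma~\ref{lem:mean}, and both derive the existence claim from the finiteness of the set of elementary closed paths. The only difference is organizational: the paper peels off one elementary closed path at a time inside a strong induction on $\ell(\gamma)$, whereas you unroll that induction into an explicit full decomposition and then apply a $k$-term version of the mediant inequality; the verification you flag (that the residual $\pi_1\cdot\pi_2$ is again closed) is exactly the observation the paper also makes.
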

\begin{proof}
Obviously, $\varrho_e(G)\leqslant \varrho(G)$.

Conversely, we show by induction on $\ell(\gamma)$ that
     $\wstar(\gamma)/\ell(\gamma)\leqslant \varrho_e(G)$ for all nonempty
     closed paths~$\gamma$ in~$G$.
The case $\ell(\gamma)=1$ is trivial, because every closed path of
     length~$1$ is elementary.
Now let $\ell(\gamma)>1$.
If $\gamma$ is elementary, we are done by the definition of
     $\varrho_e(G)$.
If $\gamma$ is non-elementary, it is non-simple.
Thus, by Lemma~\ref{lem:path:split}, there exist
     $\pi_1$, $\pi_2$, and an elementary nonempty closed path~$\gamma'$ such
     that $\gamma=\pi_1\cdot \gamma' \cdot \pi_2$.
It is $\End(\pi_1)=\Start(\pi_2)$, hence $\gamma''=\pi_1\cdot\pi_2$ is
     a closed path.
Furthermore, $\ell(\gamma')<\ell(\gamma)$ because $\gamma'\neq\gamma$,
     and $\ell(\gamma'')<\ell(\gamma)$ because $\gamma'$ is nonempty.

We obtain
$$\frac{\wstar(\gamma)}{\ell(\gamma)} = \frac{\wstar(\gamma')+\wstar(\gamma'')}{\ell(\gamma')+\ell(\gamma'')} \leqslant %
\max\!\left\{\frac{\wstar(\gamma')}{\ell(\gamma')}\ ,\ \frac{\wstar(\gamma'')}{\ell(\gamma'')} \right\} \leqslant \varrho_e(G)$$
by Lemma~\ref{lem:mean} and the induction hypothesis.
Thus we have shown $\varrho(G)\leqslant \varrho_e(G)$.

The last statement follows because the set of nonempty elementary closed
     paths is finite.
\end{proof}

The following is a well-known fact in max-plus algebra:

\begin{lem}\label{lem:paths:in:crit:comps:are:critical}
Let~$G$ be a nontrivial strongly connected e-weighted graph.
Then every nonempty closed path in $G_c$ is critical in~$G$.
\end{lem}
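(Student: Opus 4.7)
The plan is to first normalize using Lemma~\ref{lem:hom}. Setting $\lambda=-\varrho(G)$, the e-weighted graph $\lambda\!\otimes\! G$ has the same critical subgraph as $G$ and rate $\varrho(\lambda\!\otimes\! G)=0$. Since the statement only involves the critical subgraph and the criticality of closed paths (which is preserved by homothety), I may assume from the outset that $\varrho(G)=0$. In this reduced setting, every nonempty closed path in $G$ has nonpositive e-weight, and a nonempty closed path is critical if and only if its e-weight is exactly $0$. Thus the lemma reduces to showing that $\wstar(\gamma)\geqslant 0$ for every nonempty closed path $\gamma$ in $G_\cc$.

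Let $\gamma\in\CP(G_\cc)$ with $\Edges(\gamma)=(e_1,\dots,e_n)$ and $e_i=(u_i,u_{i+1})$, where $u_1=u_{n+1}$. Each $e_i$ is a critical edge of $G$, so by definition there exists a critical closed path $\tilde\gamma_i$ in $G$ that has $e_i$ as one of its edges. Fix one occurrence of $e_i$ in $\tilde\gamma_i$ and split $\tilde\gamma_i=\alpha_i\cdot(e_i)\cdot\beta_i$, where $(e_i)$ denotes the one-edge path consisting of $e_i$. Since $\tilde\gamma_i$ is closed at some node $v_i$, the path $\alpha_i$ goes from $v_i$ to $u_i$ and the path $\beta_i$ goes from $u_{i+1}$ to $v_i$, so the concatenation $p_i:=\beta_i\cdot\alpha_i$ is a well-defined path from $u_{i+1}$ to $u_i$ with
\[
\wstar(p_i)\;=\;\wstar(\tilde\gamma_i)-w_E(e_i)\;=\;-w_E(e_i),
\]
using $\wstar(\tilde\gamma_i)=0$ since $\tilde\gamma_i$ is critical and $\varrho(G)=0$.

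Now I concatenate these complements in reverse order. The endpoints line up: $p_i$ ends at $u_i$, which is the start of $p_{i-1}$. Hence $\sigma:=p_n\cdot p_{n-1}\cdots p_1$ is a well-defined path starting at $u_{n+1}$ and ending at $u_1$, and since $u_1=u_{n+1}$ it is a nonempty closed path in $G$. Its e-weight is
\[
\wstar(\sigma)\;=\;\sum_{i=1}^n \wstar(p_i)\;=\;-\sum_{i=1}^n w_E(e_i)\;=\;-\wstar(\gamma).
\]
Since $\varrho(G)=0$, we have $\wstar(\sigma)\leqslant 0$, hence $\wstar(\gamma)\geqslant 0$. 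Combined with $\wstar(\gamma)\leqslant 0$, this yields $\wstar(\gamma)=0$, so $\gamma$ is critical in $G$.

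The only delicate point is purely bookkeeping: one must correctly track the start/end nodes of the fragments $\alpha_i,\beta_i$ obtained from splitting each $\tilde\gamma_i$, and verify that reversing the order of the $p_i$ yields a closed path whose weight is the negative of $\wstar(\gamma)$. No difficult estimate is needed, because the proof converts "all edges critical" into the existence of a single auxiliary closed path that absorbs all the (normalized) weight deficit of $\gamma$.
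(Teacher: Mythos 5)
Your proof is correct and follows essentially the same route as the paper's: for each critical edge you take the complementary portion of a critical closed path through that edge, concatenate these complements into one auxiliary closed path, and use that this path's weight cannot exceed the rate times its length (the paper does this without first normalizing to $\varrho(G)=0$, but that difference is cosmetic). One negligible imprecision: $\sigma$ could be empty if every $\tilde\gamma_i$ is a single self-loop, but then $\wstar(\sigma)=0$ and the conclusion $\wstar(\gamma)\geqslant 0$ still follows.
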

\begin{proof}
Set $\varrho = \varrho(G)$.
Let $\pi$ be a nonempty closed path in $G_c$ with $\Edges(\pi) = (e_1,
     \dots e_n)$.
By definition of $G_c$, all edges $e_l = (i_l,j_l)$, $1\le l\le n$,
     are critical, i.e., there exists a path $\gamma_l \in \CP(G)$
     with
\begin{equation}\label{eq:gamma_k}
  \wstar(\gamma_l)/\ell(\gamma_l) = \varrho\enspace,
\end{equation}
and whose last edge is~$e_l$.
For each $l$, $1\le l\le n$, construct $\gamma'_l$ from $\gamma_l$ by
     removing its last edge.
Thus $\gamma'_l$ is a path with $\Start(\gamma'_l)=j_l$ and
     $\End(\gamma'_l)=i_l$.
Concatenation of the paths $\gamma'_l$ yields a closed path $\gamma' =
     \gamma'_n \cdot \gamma'_{n-1} \cdot \dots \cdot \gamma'_1$.
From \eqref{eq:gamma_k} follows
\begin{align}
  \wstar(\gamma'_l)+w_E(e_l) &= \varrho\,\ell(\gamma'_l)+\varrho\enspace.\notag\\
\intertext{Summing over all $l$ yields}
  \wstar(\gamma')+\wstar(\pi) &= \varrho\,\ell(\gamma')+\varrho\, \ell(\pi)\enspace.\notag\\
\intertext{Combination with $\varrho\,\ell(\gamma')\geq
\wstar(\gamma')$ gives}
  \wstar(\pi) &\geq \varrho\,\ell(\pi)\enspace,\notag
\end{align}
hence~$\pi$ is critical.
\end{proof}

\subsection{Linear max-plus systems}

A matrix with entries in $\IRmax$ is called a {\em max-plus\/} matrix.
We denote by $\M_{M,N}(X)$ the set of $M\times N$ matrices with entries in~$X$.
If $A\in\M_{M,N}(\IRmax)$ and $B\in\M_{N,Q}(\IRmax)$, define $A\otimes
     B \in \M_{M,Q}(\IRmax)$ by setting \[(A\otimes
     B)_{i,j} = \max\{ A_{i,k} + B_{k,j} \mid 1\leqslant k\leqslant N
     \}\enspace.\]
The identity element in the monoid $\big(\M_{N,N}(\IRmax),\otimes\big)$, denoted 
	$[ 0 ]_N$, is the matrix whose
     diagonal entries are equal to~$0$ and all other entries are equal
     to~$-\infty$.
More generally, for any $\lambda\in \IRmax$, the matrix	 whose
	     diagonal entries are equal to~$\lambda$ and all other entries are equal
	     to~$-\infty$ is  denoted by $[ \lambda ]_N$.
If $A\in\M_{N,N}(\IRmax)$, define 
	$A^{\otimes 0}= [ 0 ]_N$, and $A^{\otimes n} = A\otimes A^{\otimes n-1}$ for
     $n\geqslant1$.
For convenience, the matrix $ [ \lambda ]_N \otimes A$
	is simply written  $\lambda \otimes A$; more generally, for any positive integer $n$, 
	the matrix 
	$\big( [ \lambda ]_N \big)^{\otimes  n }\otimes A$, whose $(i,j)$-entry is
	$A_{i,j} + \lambda n$, is  simply written 
	$ \lambda^{\otimes  n }\otimes A$.

For a matrix $A \in \M_{N,N}(\IRmax)$ and a vector $v \in \IRmax^N$, we define the {\em
     linear max-plus system\/}~$x_{A,v}$ by setting     
\begin{align}
x_{A,v}(n) = \begin{cases}
               v & \text{for } n = 0\\
               A \otimes x_{A,v}(n-1) & \text{for } n \geqslant1\enspace.
       \end{cases}\label{eq:x}
\end{align}
Clearly $x_{A,v}(n) = A^{\otimes n}\otimes v$.

Denote by $G(A)$ the
     e-weighted graph with nodes $\{1,2,\dots,N\}$ containing an edge
     $(i,j)$ if and only if $A_{i,j}$ is finite, and set its edge weight
     $w_E(i,j)=A_{i,j}$.
Observe that with the notation in the previous Section, 
	$G(\lambda \otimes A) = \lambda \! \otimes \! G(A)$.

Denote by $G(A,v)$ the en-weighted graph defined in the same way as~$G(A)$ with node weights set to~$w_V(i)=v_i$.
Call $A$ {\em irreducible\/} if $G(A)$ is strongly connected and nontrivial.

The following correspondence between the sequence $(A^{\otimes
     n}(n))_{n\ge 0}$ (respectively the sequence $x_{A,v}$)
     and weights of paths in $G(A)$ (respectively $G(A,v)$) holds:  
\begin{lem}\label{lem:path:formula}
For all matrices $A \in \M_{N,N}(\IRmax)$ and all $v\in\IRmax^N$,
\begin{align}
  (A^{\otimes n})_{i,j} &= \max\left\{ \wstar(\pi) \mid \pi\in\Pa^n\big(i,j,G(A)\big) \right\}\text{ and}\notag\\
  (A^{\otimes n}\otimes v)_i &= \max\left\{ w(\pi) \mid \pi\in\Pa^n\big(\ito,G(A,v)\big) \right\}
        = w^n(\ito,G(A,v))\enspace.\notag
\end{align}
\end{lem}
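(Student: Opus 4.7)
The plan is to prove the first identity by induction on~$n$ and then derive the second by a one-line manipulation using the definition of $\otimes$.

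For the base case $n=0$, recall $A^{\otimes 0}=[0]_N$, so $(A^{\otimes 0})_{i,j}$ equals~$0$ when $i=j$ and~$-\infty$ otherwise. On the path side, $\Pa^0(i,j,G(A))$ contains exactly the empty path when $i=j$ (with $\wstar=0$) and is empty when $i\neq j$; using the convention $\max\emptyset=-\infty$, the two sides agree. For the inductive step, write
\[
(A^{\otimes n})_{i,j}=\max_{1\le k\le N}\bigl\{A_{i,k}+(A^{\otimes n-1})_{k,j}\bigr\}
\]
and apply the induction hypothesis to $(A^{\otimes n-1})_{k,j}$. The key bookkeeping is the bijection between $\Pa^n(i,j,G(A))$ and the disjoint union over~$k$ of pairs $\bigl(e,\pi'\bigr)$ with $e=(i,k)\in E$ and $\pi'\in\Pa^{n-1}(k,j,G(A))$: every length-$n$ path factors uniquely as an initial edge followed by a length-$(n-1)$ suffix. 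Under this correspondence, $\wstar(\pi)=w_E(i,k)+\wstar(\pi')=A_{i,k}+\wstar(\pi')$, where the point to be careful with is that an index $k$ with $A_{i,k}=-\infty$ contributes $-\infty$ to the max, which matches the fact that $(i,k)$ is then not an edge of~$G(A)$ so no such~$\pi$ exists; hence this $k$ gives no contribution on the path side either. Taking maxima over~$k$ and~$\pi'$ yields the claim.

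For the second identity, use $(A^{\otimes n}\otimes v)_i=\max_{1\le k\le N}\bigl\{(A^{\otimes n})_{i,k}+v_k\bigr\}$. By the first identity this equals
\[
\max_{1\le k\le N}\;\max_{\pi\in\Pa^n(i,k,G(A))}\bigl\{\wstar(\pi)+v_k\bigr\}.
\]
Since $G(A,v)$ has the same edges and edge-weights as~$G(A)$ and node weights $w_V(k)=v_k$, any $\pi\in\Pa^n(i,k,G(A))=\Pa^n(i,k,G(A,v))$ satisfies $w(\pi)=\wstar(\pi)+w_V(\End(\pi))=\wstar(\pi)+v_k$ by the definition of en-weight. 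Merging the two maxima into one over $\Pa^n(\ito,G(A,v))$ gives the required equality, which is by definition $w^n(\ito,G(A,v))$.

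I expect no real obstacle: the only subtle point is the alignment between $-\infty$ entries of~$A$ and absence of edges in~$G(A)$, which is handled cleanly by the convention $\max\emptyset=-\infty$ and the fact that a sum involving $-\infty$ is $-\infty$ in $\IRmax$. The vector case $v_k=-\infty$ is handled identically, as it simply excludes the corresponding~$k$ from both maxima.
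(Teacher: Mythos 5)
Your proof is correct. The paper states this lemma without proof, treating it as the standard correspondence between max-plus matrix powers and maximum-weight paths; your induction on~$n$ (splitting off the first edge, matching $-\infty$ entries with absent edges via the convention $\max\emptyset=-\infty$) is exactly the canonical argument one would supply, and the derivation of the second identity from the first is also sound.
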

\noindent We simply denote $(A^{\otimes n})_{i,j}$ by $A^{\otimes n}_{i,j}$ in the sequel.

We define the {\em cyclicity\/} of~$A$, denoted by~$c(A)$, as the cyclicity of the critical 
	subgraph of~$G(A)$,~i.e, \[c(A) = c\big(G_c(A)\big)\enspace.\]
The  main result on  max-plus matrices is an analog of the  Perron-Frobenius theorem in linear algebra.

\begin{thm}[{\cite[Theorems~2.9 and~3.9]{workplus}}]\label{thm:perron}
An irreducible matrix $A\in\M_{N,N}(\IRmax)$ has exactly one eigenvalue $\varrho(A)$
			equal to the rate of the e-weighted graph $G(A)$, i.e., 
			$$ \varrho(A) = \varrho\big(G(A)\big) \enspace.$$
		Moreover,  there exists an integer~$\hat{n}$ such that for every $n\geqslant \hat{n}$:
		\begin{equation*}\label{eq:a:hoch}
		A^{\otimes n+c(A)} = \big( \varrho(A)\big)^{\otimes  c(A) } \otimes A^{\otimes n} \enspace.
		\end{equation*}
\end{thm}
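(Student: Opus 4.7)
The plan is to normalize via homothety and then reduce to combinatorial facts about paths. Set $\tilde{A} = (-\varrho(G(A))) \otimes A$. By Lemma~\ref{lem:hom}, $\varrho(G(\tilde A)) = 0$ and $G_c(\tilde A) = G_c(A)$, so every cycle of $G(\tilde A)$ has non-positive weight and the critical closed paths are precisely those of weight $0$. Since the eigenvalues of $\tilde A$ are those of $A$ shifted by $-\varrho(G(A))$, and $\tilde A^{\otimes n}_{i,j} = A^{\otimes n}_{i,j} - n\,\varrho(G(A))$, the theorem reduces to the two claims that $\tilde A$ has $0$ as its unique eigenvalue and that $\tilde A^{\otimes n+c(A)} = \tilde A^{\otimes n}$ for $n$ large.

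For the existence of the eigenvalue, I would form the Kleene-type closure $\tilde A^{*} = \bigoplus_{n\geqslant 0} \tilde A^{\otimes n}$, whose entries are finite because $G(\tilde A)$ has no cycle of positive weight, so the supremum over paths from $i$ to $j$ is attained by a simple path. For any node $v$ lying on a critical cycle, I would show that the column $\tilde A^{*}_{\cdot,v}$ is an eigenvector of $\tilde A$ with eigenvalue $0$: the inequality $\tilde A \otimes \tilde A^{*}_{\cdot,v} \leqslant \tilde A^{*}_{\cdot,v}$ is direct from the definition, while the reverse uses the existence of a zero-weight closed path at $v$ (Lemmas~\ref{lem:xi:is:minimum} and~\ref{lem:paths:in:crit:comps:are:critical}) so that every $\tilde A^{*}_{i,v}$ is realized by a path of length at least $1$. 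For uniqueness, let $\mu$ be any eigenvalue with (finite, by irreducibility) eigenvector $u$. Since $(A\otimes u)_i = \mu + u_i$, for each $i$ there is $j(i)$ with $A_{i,j(i)} + u_{j(i)} = \mu + u_i$; iterating $j(\cdot)$ produces a revisit and isolates a closed path $\gamma$ along which the equalities telescope to $\wstar(\gamma) = \mu\,\ell(\gamma)$, yielding $\mu \leqslant \varrho(G(A))$. For the reverse inequality, summing $\mu + u_i \geqslant A_{ij} + u_j$ around any elementary critical closed path (which exists by Lemma~\ref{lem:xi:is:minimum}) gives $\mu \geqslant \varrho(G(A))$.

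The heart of the theorem is the eventual periodicity. By Lemma~\ref{lem:path:formula}, the claim amounts to $\wstar^{n}(i,j,G(\tilde A)) = \wstar^{n+c(A)}(i,j,G(\tilde A))$ for all $i,j$ and all sufficiently large $n$. For the direction $\leqslant$, I would take a max-weight path $\pi$ of length $n+c(A)$ in $G(\tilde A)$, argue that for $n$ large $\pi$ must visit a critical node $v$, locate inside $\pi$ a closed subpath at $v$ staying in the critical component $H$ of $v$ of length exactly $c(A)$, and remove it; since its weight is $0$, the resulting length-$n$ path has the same weight. The direction $\geqslant$ is dual: insert at a critical node of a max-weight length-$n$ path a zero-weight closed loop of length $c(A)$ inside the ambient critical component. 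Both directions rest on two auxiliary facts: (i) for $n$ large, every max-weight path of length $n$ in $G(\tilde A)$ must visit the critical subgraph, because a path avoiding it has per-edge weight bounded away from $0$; and (ii) for each critical component $H$ and each node $v$ of $H$, all sufficiently large multiples of $c(H)$ are lengths of closed paths at $v$ in $H$. Since $c(A) = \lcm_H c(H)$, fact (ii) in particular delivers closed paths at $v$ of length $c(A) \cdot m$ for every large $m$, and by concatenation, of length $c(A)$ after enough repetitions are absorbed into $\pi$.

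The main obstacle is ingredient (ii), which is a nontrivial density statement on $\gcd$-generated subsemigroups of $\IN$ and is exactly the Brauer-type \emph{exploration penalty} result developed in Section~\ref{sec:explorationpenality}; combined with the quantitative bound of Section~\ref{sec:visiting} controlling (i), it allows the insertion/removal step to be performed uniformly in $n$. In the context of the paper the theorem thus becomes a packaging of those later, more refined results, whereas the eigenvalue part is self-contained from the material already developed here.
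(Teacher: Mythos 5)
This statement is not proved in the paper at all: it is quoted verbatim from Heidergott, Olsder and van der Woude (Theorems~2.9 and~3.9 of~\cite{workplus}) and serves as the foundational input for everything that follows. So there is no in-paper proof to compare against; what you have written is essentially the standard textbook proof, and most of it is sound. The eigenvalue part is correct and classical: after normalizing to rate $0$, the column $\tilde A^{*}_{\cdot,v}$ of the Kleene star at a critical node $v$ is a finite eigenvector for eigenvalue $0$ (the entries are finite precisely because no cycle has positive weight), and your uniqueness argument --- extracting a cycle from the argmax map $i\mapsto j(i)$ for the upper bound, and summing the componentwise inequalities around a closed path for the lower bound --- is exactly right, including the remark that irreducibility forces eigenvectors to be finite.

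Two caveats on the periodicity part. First, the literal step ``locate inside $\pi$ a closed subpath at $v$ staying in $H$ of length exactly $c(A)$ and remove it'' (and its dual, inserting a loop of length exactly $c(A)$) does not work as stated: a maximum-weight path need not contain such a subpath, and $c(A)$ may well be smaller than the exploration penalty $\EP(H)$, so no closed path of length exactly $c(A)$ need exist at $v$. The correct mechanism, which you do gesture at, is reduce-then-pump: apply a reduction such as $\Red_{c(H),k}$ to shorten the path while preserving its length modulo $c(H)$ and not decreasing its weight, and then insert a single zero-weight closed path in $H$ whose length is the (large) multiple of $c(H)$ needed to reach the target length; since $c(H)$ divides $c(A)$, the residues match. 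Second, be careful with your closing remark that the theorem ``becomes a packaging of those later results'': as organized in the paper this would be circular, because Lemma~\ref{lem:realizers:exist} (existence of realizers over infinite length sets) is itself deduced from Theorem~\ref{thm:perron}. Your argument avoids the circularity only because it works with maxima over the finite sets $\Pa^{n}(i,j,G)$ for each fixed $n$, together with the purely combinatorial ingredients (Theorem~\ref{thm:n:zero} adapted to $i$-to-$j$ paths, Theorem~\ref{thm:EP}, and the reduction of Section~\ref{subsec:expl:red}), none of which depend on Theorem~\ref{thm:perron}; it is worth making that dependence structure explicit.
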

Using the classical operators ``$+$'' and ``$\cdot$'', the above equality  is equivalent to:
	$$\forall i,j\in \{1,\cdots, N\}:\quad
		A^{\otimes n+c(A)} _{i,j} = A^{\otimes n} _{i,j} + c(A)\varrho(A) \enspace. $$

\subsection{Eventually periodic sequences}\label{subsec:period}

Let $X$ be any nonempty set.
A sequence~$f:\IN\to\IRmax^X$ is {\em
     eventually periodic\/} if there exist $p\in \mathds{N}^*$, $w_p
     \in \IRmax$, and $n_p \in \mathds{N}$ such
     that         
$$\forall n \geq n_p:\quad f(n + p) = f(n) + w_p\enspace,$$%
	where $w_p$ stands for the constant function that maps any element in $X$
	to $w_p$.
Such an integer $p$ is called an {\em eventual period\/} (or for short
     a {\em period\/}) of~$f$.
Theorem~\ref{thm:perron} shows that~$c(A)$ is a period of both
     sequences $\big(A^{\otimes n}\big)_{n\geqslant0}$ and~$\big(x_{A,v}(n)\big)_{n\geqslant0}$.

We denote by ${\mathbf P}_f$ the set of periods of $f$.
Clearly ${\mathbf P}_f$ is a nonempty subset of $\mathds{N}$ closed under
     addition.
Let  $p_0 = \min {\mathbf P}_f$ be the minimal period of $f$; hence
     $p_0\mathds{N}^* \subseteq {\mathbf P}_f$.
As $p_0\in {\mathbf P}_f$, there exist $w_0 \in \IRmax$ and $n_0 \in \mathds{N}$ such that     
$$\forall n \geq n_0:\quad f(n + p) = f(n) + w_0\enspace.$$
Let $p$ be any period of $f$, and $p= ap_0 +b$ the Euclidean division of $p$ by $p_0$.
For any integer $n\geq \max \{n_p ,n_0 -b\}$, 
	$$f(n+p)=f(n) + w_p = f(n + b) +aw_0\enspace.$$
It follows that either $b=0$ or $b$ is a period of $f$.
Since $b\leq p_0-1$ and $p_0$ is the smallest period of~$f$, we have $b=0$,
	i.e., $p_0$ divides $p$.
Then we derive that ${\mathbf P}_f\subseteq p_0 \IN^*$.

For any period $p= q p_0$ of $f$, let $n_q$ be the smallest positive integer such that 
	$$\forall n \geq n_q:\quad  f(n+q p_0)=f(n) + q w_0 \enspace.$$
The integer $n_q$ is called the {\em transient of $f$ for period $p$}.

\begin{lem}
For any positive integer $q$, $n_q = n_1$.
\end{lem}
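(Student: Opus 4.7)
The plan is to prove both inequalities $n_q \leq n_1$ and $n_1 \leq n_q$. For the former, if $n \geq n_1$ then iterating the relation $f(m + p_0) = f(m) + w_0$ for the $q$ successive values $m = n, n+p_0, \dots, n+(q-1)p_0$ yields $f(n + qp_0) = f(n) + qw_0$, so $qp_0$ is realized as a period from~$n_1$, and the minimality of~$n_q$ gives $n_q \leq n_1$.

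For the reverse inequality, the plan is to show that the $p_0$-period relation already holds from~$n_q$ onwards, so that the minimality of~$n_1$ forces $n_1 \leq n_q$. Fix an arbitrary $n \geq n_q$; since $p_0 \geq 1$, we also have $n + p_0 \geq n_q$, and iterating the $qp_0$-period relation at both starting points~$n$ and $n+p_0$ yields
\[ f(n + kqp_0) = f(n) + kqw_0 \qquad\text{and}\qquad f(n + p_0 + kqp_0) = f(n + p_0) + kqw_0 \]
for every integer $k \geq 1$. I would then choose~$k$ large enough that $n + kqp_0 \geq n_1$; a single application of the $p_0$-period relation at the index $n + kqp_0$ gives $f(n + kqp_0 + p_0) = f(n + kqp_0) + w_0 = f(n) + kqw_0 + w_0$. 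Equating this with the second identity displayed above and cancelling the common term $kqw_0$ produces $f(n + p_0) = f(n) + w_0$, as required.

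The main obstacle I expect is the final cancellation: it is unproblematic when $w_0 \in \IR$, which is precisely the regime implicitly required by the preceding manipulations of this subsection (where expressions such as $w_p - aw_0$ were already formed). Conceptually the whole proof rests on a single round-trip trick: push the index far enough into the regime where the $p_0$-period relation is known to hold, exploit it once there, and retract back via the $qp_0$-period relation---an argument structurally reminiscent of Lemma~\ref{lem:final:step}.
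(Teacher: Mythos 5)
Your proof is correct, and the harder direction takes a genuinely different route from the paper. The inequality $n_q \leq n_1$ is obtained exactly as in the paper, by iterating the $p_0$-relation $q$ times from any $n \geq n_1$. For the reverse inequality, however, the paper proceeds by induction on $q$: assuming $n_q = n_1$, it writes $f\big(n+(q+1)p_0\big)$ in two ways (via the $(q+1)p_0$-relation from $n \geq n_{q+1}$, and via the $qp_0$-relation from $n + p_0 \geq n_q$), cancels $q w_0$, and concludes $n_1 \leq \max\{n_q - p_0,\, n_{q+1}\}$, whence $n_1 \leq n_{q+1}$ by the inductive hypothesis. Your round-trip argument replaces this induction by a single long jump: from an arbitrary $n \geq n_q$ you push forward by $kqp_0$ into the regime where the $p_0$-relation is already known to hold, apply it once there, and compare with the $qp_0$-iterates based at $n + p_0$; cancelling $kq w_0$ yields $f(n+p_0) = f(n) + w_0$ directly, so $n_1 \leq n_q$ follows for each $q$ independently, without the $\max\{n_q - p_0, n_{q+1}\}$ bookkeeping. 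Both proofs hinge on cancelling a finite multiple of $w_0$, and both therefore tacitly assume $w_0 \in \IR$ (with a harmless coordinatewise case distinction when some entries of $f(n)$ equal $-\infty$); your explicit remark on this point is apt, as the paper's own manipulations in this subsection make the same implicit assumption.
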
 
\begin{proof}
Since for any $n \geq n_1$,
	$$f(n + q p_0) = f(n) + q w_0,$$
	we have $n_q \leq n_1$.
	
We now prove that  $n_q = n_1$ by induction on $q \in \mathds{N}^*$.
\begin{enumerate}
	\item The base case $q=1$ is trivial.
	\item Assume that $n_q = n_1$.
	For any integer $n \geq n_{q+1}$, $$f(n + (q + 1 ) p_0) = f(n) + (q + 1 ) w_0.$$
	Moreover, if $n + p_0 \geq n_q$ then $$ f(n + (q + 1 ) p_0) = f(n + p_0) + p w_0.$$
	It follows that for any integer $n \geq \max \{ n_q - p_0, n_{q+1} \}$, 
	$$f(n + p_0) = f(n) + w_0.$$
	Hence $n_1 \leq \max \{ n_q - p_0, n_{q+1} \}$, and by inductive assumption 
	$n_1 \leq \max \{ n_1 - p_0, n_{q+1} \}$.
	Then we derive $n_1 \leq n_{q+1}$, and so $n_1 = n_{q+1}$ as required.\qedhere
\end{enumerate}	
\end{proof}


It follows that the transient for period~$p$ of a periodic
     sequence is independent of~$p$.
We hence simply call it the {\em transient of $f$}.

Theorem~\ref{thm:perron} states that~$c(A)=c\big(G_c(A)\big)$ is a period of the sequence~$A^{\otimes n}$.
Because~$p(A)=p\big(G(A)\big)$ is a multiple of~$c(A)$, also~$p(A)$ is a period.
We find it more convenient in our proofs to consider~$p(A)$ as the period instead of the period~$c(A)$ suggested by Theorem~\ref{thm:perron}.

Let $A\in\M_{N,N}(\IRmax)$ be an irreducible matrix and let $v\in\IRmax^N$.
We call the transient of the sequence $\big(A^{\otimes n}\big)_{n\geqslant0}$
     the {\em transient of matrix~$A$}, denoted by~$n_A$, and
     the transient of the sequence~$\big(x_{A,v}(n)\big)_{n\geqslant0}$ the {\em transient of system~$x_{A,v}$},
     denoted by $n_{A,v}$.
Obviously, $n_A$ is an upper bound on the transient of the $x_{A,v}$'s,~i.e.,
	$$\sup\big\{ n_{A,v} | v \in \IRmax^N \big\} \leqslant n_A \enspace.$$
Conversely, the equalities 
	\[ A_{i,j}^{\otimes n} = \big( A^{\otimes n} \otimes e^{j} \big)_i \]
	where $e^{j}_i = 0$ if~$i=j$ and~$e^{j}_i=-\infty$ otherwise, show that
	$$\max \big\{ n_{A,e^{j}} | j \in \{1,\cdots,N\} \big\} \geqslant n_A \enspace.$$
Hence, 
	\begin{equation}\label{eq:nA}
		\sup\big\{ n_{A,v} | v \in \IRmax^N \big\} = n_A \enspace.
		\end{equation}

The transients are invariant under homotheties:
 
\begin{lem}\label{lem:transient:invariance}
For all irreducible matrices~$A\in\M_{N,N}(\IRmax)$, all vectors~$v\in\IRmax^N$,
	and all $\lambda \in \IR$,
	we have the equalities of transients $ n_{\lambda \otimes A} = n_A $ 
	and $n_{\lambda \otimes A,v} = n_{A,v}$.
\end{lem}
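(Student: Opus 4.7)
The plan is to observe that passing from $A$ to $\lambda \otimes A$ shifts every iterate by a purely affine drift, and then to check that such a drift affects neither the set of eventual periods nor the instants at which the periodicity identities first hold.

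First I would establish the key identity
\[
  (\lambda \otimes A)^{\otimes n}_{i,j} = A^{\otimes n}_{i,j} + n\lambda,
\]
valid for every $n\ge 0$ and every pair $(i,j)$, with the convention that $-\infty + n\lambda = -\infty$. This either follows by a direct induction on~$n$ from the definition of $\otimes$, or from Lemma~\ref{lem:path:formula} applied to $G(\lambda \otimes A) = \lambda\!\otimes\! G(A)$: the underlying edge set is unchanged, and the e-weight of every path of length~$n$ is shifted by exactly $n\lambda$. Taking the max-plus product with $v$ then gives
\[
  x_{\lambda \otimes A,\, v}(n)_i \;=\; x_{A,v}(n)_i + n\lambda
\]
for every node $i$ and every $n\ge 0$.

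Next I would translate these identities into the period formalism of Section~\ref{subsec:period}. Setting $f(n) = A^{\otimes n}$ and $\tilde f(n) = (\lambda \otimes A)^{\otimes n}$, the identity above reads entrywise
\[
  \tilde f(n+p) - \tilde f(n) \;=\; \bigl(f(n+p) - f(n)\bigr) + p\lambda,
\]
where entries equal to $-\infty$ on one side are equal to $-\infty$ on the other. Consequently $f(n+p) = f(n) + w_p$ holds precisely when $\tilde f(n+p) = \tilde f(n) + (w_p + p\lambda)$ holds, and both identities are valid at exactly the same values of~$n$. This shows that $\mathbf{P}_f = \mathbf{P}_{\tilde f}$, that the minimal periods coincide, and that the smallest index beyond which the minimal-period identity holds for all larger indices is the same for both sequences; hence $n_{\lambda\otimes A} = n_A$. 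Repeating the argument verbatim with $f(n) = x_{A,v}(n)$ and $\tilde f(n) = x_{\lambda\otimes A,\,v}(n)$ yields $n_{\lambda\otimes A,\,v} = n_{A,v}$.

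There is no real obstacle here. The only points requiring a touch of care are the bookkeeping ones: verifying that $-\infty$ entries are preserved under the shift by $n\lambda$, and noting that although the additive constant $w_p$ associated with a period $p$ is modified by $p\lambda$ under the homothety, the set of admissible values of $p$ itself is unchanged. Both observations are immediate from the displayed identities.
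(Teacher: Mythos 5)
Your proposal is correct and follows essentially the same route as the paper: the paper's proof consists exactly of the identities $(\lambda\otimes A)^{\otimes n}=\lambda^{\otimes n}\otimes A^{\otimes n}$ and $(\lambda\otimes A)^{\otimes n}\otimes v=\lambda^{\otimes n}\otimes(A^{\otimes n}\otimes v)$, from which it declares the lemma immediate. You merely spell out the ``immediate'' part---that adding the affine drift $n\lambda$ preserves the set of periods and the first index of periodicity---which is a correct and welcome elaboration rather than a different argument.
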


\begin{proof}
	The lemma follows immediately from the equalities $$\big(\lambda \otimes A \big)^{\otimes n} = 
		\lambda^{\otimes n} \otimes A^{\otimes n}$$ and 
		$$\big(\lambda \otimes A \big)^{\otimes n} \otimes v = 
		\lambda^{\otimes n} \otimes \big(A^{\otimes n} \otimes v\big) \enspace.$$
\end{proof}

\subsection{Reduction to the case of a zero rate}\label{subsec:prelim:zero:rate}

Let~$G=(V,E,w_E)$ be a nontrivial strongly connected e-weighted graph.
Since $\varrho(G) \in \IR$, we may define the e-weighted 
	graph~$\overline{G} = (-\varrho(G))\!\otimes\! G$. 
By Lemma~\ref{lem:hom}, $G$ and $\overline{G}$ have the same critical subgraph, and
	$$\varrho(\overline{G}) = 0 \enspace.$$
Similarly, for any irreducible max-plus matrix $A$, 
	we denote $\overline{A} = (-\varrho(A))\!\otimes\! A$, and we have
	$$\varrho(\overline{A}) = 0 \enspace.$$
Moreover, Lemma~\ref{lem:invariance} gives:
	$$\Delta \big(\overline{G}\big) = \Delta(G) - \varrho(G),\  
	\Delta_\nc \big(\overline{G}\big) =  \Delta_\nc(G) - \varrho(G),\  
	\delta \big(\overline{G}\big) =  \delta(G) - \varrho(G) \enspace,$$
	which are respectively denoted $\overline{\Delta}(G)$, $\overline{\Delta}_\nc(G)$,
	and $\overline{\delta}(G)$.
From $\varrho(\overline{G}) = 0$, we easily deduce that
	$$ \overline{\delta}(G) \leqslant 0 \leqslant \overline{\Delta}(G) \enspace .$$
The point of the reduction to a zero rate is evidenced by the following lemma:

\begin{lem}\label{lem:realizers:exist}
Let~${\mathbf N}$ be any nonempty set of nonnegative integers,
	and let~$i$ be any node of a strongly connected e-weighted 
	graph $G$ such that $\varrho(G)=0$.
Then there exists an ${\mathbf N}$-realizer for node $i$.
\end{lem}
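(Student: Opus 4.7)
The plan is to reduce the existence of an $\mathbf{N}$-realizer to the eventual periodicity of the sequence $\bigl(w^n(\ito,G)\bigr)_{n \in \IN}$, which I would derive from Theorem~\ref{thm:perron} applied to the adjacency matrix of~$G$; the supremum of a bounded eventually periodic sequence over any nonempty index set is then automatically a maximum.

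First I would associate to $G$ the matrix $A$ defined by $A_{j,k} = w_E(j,k)$ when $(j,k)$ is an edge of $G$ and $A_{j,k} = -\infty$ otherwise, so that $G = G(A)$; since $G$ is strongly connected and nontrivial, $A$ is irreducible. By Lemma~\ref{lem:path:formula}, $w^n(\ito,G) = \max_{j} A^{\otimes n}_{i,j}$, and because $G$ is strongly connected and nontrivial, $\Pa^n(\ito,G)$ is nonempty for every $n \in \IN$ (some edge leaves $i$, and from its endpoint any further length can be reached). So each $w^n(\ito,G)$ is a finite real number.

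Next I would invoke Theorem~\ref{thm:perron} together with $\varrho(A) = \varrho(G) = 0$: there exists $\hat{n}$ such that $A^{\otimes n+c(A)} = A^{\otimes n}$ for all $n \geq \hat{n}$, and taking the maximum over the end node $j$ gives $w^{n+c(A)}(\ito,G) = w^n(\ito,G)$ for every $n \geq \hat{n}$. Therefore the whole sequence $\bigl(w^n(\ito,G)\bigr)_{n \in \IN}$ is eventually periodic, and consequently takes only finitely many distinct real values.

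To finish, $\{w^n(\ito,G) : n \in \mathbf{N}\}$ is then a finite nonempty set of reals, so its supremum is actually attained at some $n^* \in \mathbf{N}$; any path $\tilde{\pi} \in \Pa^{n^*}(\ito,G)$ realizing the finite maximum defining $w^{n^*}(\ito,G)$ is an $\mathbf{N}$-realizer for~$i$. I do not anticipate a substantive obstacle; the only technical points are the finiteness of each $w^n(\ito,G)$ and the clean reduction of $w^n$ to entries of $A^{\otimes n}$, both of which follow routinely from the hypotheses and from Lemma~\ref{lem:path:formula}.
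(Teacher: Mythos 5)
Your proof is correct and follows essentially the same route as the paper's: the paper's (two-sentence) argument likewise combines Theorem~\ref{thm:perron} with $\varrho(G)=0$ to conclude that the sequence $\bigl(w^n(\ito)\bigr)_n$ takes only finitely many values, so the supremum over $\mathbf{N}$ is a maximum attained by some path. You have simply spelled out the details (finiteness of each $w^n(\ito)$, the reduction to entries of $A^{\otimes n}$ via Lemma~\ref{lem:path:formula}, and eventual periodicity with increment zero) that the paper leaves implicit.
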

\begin{proof}
By Theorem~\ref{thm:perron} and the fact that~$\varrho(G)=0$, the
	supremum is taken over a finite set.
Hence it is a maximum.
\end{proof}

\section{Visiting the Critical Subgraph along Optimal Paths}\label{sec:visiting}

In this section, we give an upper bound~$\Bcnc$ on  lengths of paths with maximum weight
     containing no critical node.
For that, we first describe how to extract a simple path from an arbitrary
     path.

\subsection{Our first path reduction}

In this section, we construct from a path~$\pi$ its simple
     part~$\Simp(\pi)$ by repeatedly removing nonempty closed
     subpaths.
Each step of this construction corresponds to applying
Lemma~\ref{lem:path:split}. As the decomposition in this lemma is not unique,
we choose the to-be-removed closed subpath non-deterministically.
Formally, we fix a global choice
     function\footnote{We could also restrict the universe of possible nodes to a given set~$\mathcal{U}$ and explicitly state a choice function.} which we use every time we ``choose an $x$ in $X$''.

Let~$G$ be a graph and let~$\pi$ be a path in~$G$.
By Lemma~\ref{lem:path:split}, there exist
     paths~$\pi_1$, $\pi_2$ and a closed path~$\gamma$ such that $\pi=\pi_1\cdot \gamma\cdot
     \pi_2$.
Because $\End(\pi_1)=\Start(\pi_2)$, the concatenation $\pi_1\cdot\pi_2$
     is well-defined.
If~$\pi$ is non-simple, then we choose~$\gamma$ to be nonempty, i.e.,  $\ell(\pi_1\cdot\pi_2) < \ell(\pi)$.

We define~$\Step(\pi)$ to be the concatenation~$\pi_1\cdot\pi_2$.
If~$\pi$ is simple, then $\Step(\pi)=\pi$.
Furthermore, we define
     \[\Simp(\pi) = \lim_{t\to\infty}\Step^{t}(\pi)\enspace.\] 
The construction of $\Simp(\pi)$ takes a finite number of (at most~$\ell(\pi)$)
steps, hence $\Simp(\pi)$ is well-defined.
Since $\Step(\pi)=\pi$ if and only if~$\pi$ is simple, $\Simp(\pi)$
is simple.
Finally, $\pi$ and $\Step(\pi)$, and so $\pi$ and $\Simp(\pi)$, have the same start and end nodes,
respectively.
We call $\Simp(\pi)$ the {\em simple part\/} of~$\pi$.

\subsection{The critical bound}

\begin{lem}\label{lem:varrho:nc}
Let~$G$ be a nontrivial strongly connected en-weighted graph and
let~$\pi$ be a path in~$G$ whose nodes are non-critical.
Then, \[w(\pi) \leqslant w\big(\!\Simp(\pi)\big) + \varrho_{nc}(G) \cdot
     \bigr(\ell(\pi) - \ell\big(\!\Simp(\pi)\big) \bigr)\enspace.\]
\end{lem}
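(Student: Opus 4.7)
The plan is to induct on the number of Step-operations needed to reduce $\pi$ to $\Simp(\pi)$ (equivalently, on $\ell(\pi)-\ell(\Simp(\pi))$), proving the inequality by tracking the effect of a single Step application and then iterating.

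For the base case, if $\pi$ is already simple then $\Simp(\pi)=\pi$, so both sides coincide and the inequality is trivial (no issue with $\varrho_{nc}(G)=-\infty$ because the factor $\ell(\pi)-\ell(\Simp(\pi))$ is zero).

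For the inductive step, assume $\pi$ is non-simple. Lemma~\ref{lem:path:split} supplies a decomposition $\pi=\pi_1\cdot\gamma\cdot\pi_2$ with $\gamma$ a \emph{nonempty} elementary closed path, and by definition $\Step(\pi)=\pi_1\cdot\pi_2$. The crucial observation is that every node of $\gamma$ is a node of $\pi$, and hence, by the hypothesis on $\pi$, non-critical. Thus $\gamma\in\CP(G)$ has no critical node and contributes to the supremum defining $\varrho_{nc}(G)$, yielding $\wstar(\gamma)\leq\varrho_{nc}(G)\cdot\ell(\gamma)$ (and in particular $\varrho_{nc}(G)\in\IR$, so there is no $-\infty$ subtlety here). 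Since $\End(\Step(\pi))=\End(\pi)$ and $\Step(\pi)$ is obtained by removing exactly the edges of $\gamma$, one has
\[
w(\pi)=w\big(\!\Step(\pi)\big)+\wstar(\gamma),\qquad \ell(\pi)=\ell\big(\!\Step(\pi)\big)+\ell(\gamma).
\]
Moreover every node of $\Step(\pi)$ is a node of $\pi$, hence non-critical, so the induction hypothesis applies to $\Step(\pi)$; combined with $\Simp(\Step(\pi))=\Simp(\pi)$ this gives
\[
w\big(\!\Step(\pi)\big)\leq w\big(\!\Simp(\pi)\big)+\varrho_{nc}(G)\cdot\bigl(\ell(\Step(\pi))-\ell(\Simp(\pi))\bigr).
\]
Adding $\wstar(\gamma)\leq\varrho_{nc}(G)\cdot\ell(\gamma)$ and using the length/weight identities above collapses the right-hand side to $w(\Simp(\pi))+\varrho_{nc}(G)\cdot(\ell(\pi)-\ell(\Simp(\pi)))$, completing the induction.

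The only delicate point is verifying that the removed closed path $\gamma$ at every step truly consists of non-critical nodes so that $\varrho_{nc}(G)$ is the right upper bound; this is handled by the bookkeeping observation that Step never introduces new nodes—each intermediate path is a concatenation of subpaths of $\pi$—so non-criticality of all nodes is preserved along the whole reduction. Everything else amounts to a straightforward length/weight accounting using that $w$ differs from $\wstar$ only by a node-weight at the (unchanged) end node.
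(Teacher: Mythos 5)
Your proof is correct and follows essentially the same route as the paper's: the paper likewise reduces to a single application of $\Step$, uses $w(\pi)=w(\Step(\pi))+\wstar(\gamma)$ together with $\wstar(\gamma)\leqslant\varrho_{nc}(G)\cdot\ell(\gamma)$ for the removed non-critical closed path $\gamma$, and then telescopes over the finitely many steps down to $\Simp(\pi)$ — you have merely made that telescoping explicit as an induction. Your bookkeeping remarks (non-criticality is preserved under $\Step$, and $\varrho_{nc}(G)$ is finite once such a $\gamma$ exists) are accurate and fill in details the paper leaves implicit.
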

\begin{proof}
It suffices to show the inequality with $\Step(\pi)$ instead of
     $\Simp(\pi)$.

If $\Step(\pi) = \pi$, then the
     inequality trivially holds.
Otherwise, let  $\gamma$ be the nonempty closed path in the definition of
$\Step(\pi)$.
Then $w(\pi) = w\big(\!\Step(\pi)\big) + \wstar(\gamma)$.
By assumption,~$\gamma$ is a nonempty closed path whose nodes are non-critical, hence
     $\wstar(\gamma )\leqslant
     \varrho_{nc}(G)\cdot\ell(\gamma )$.
Noting $\ell(\gamma ) = \ell(\pi) - \ell\big(\! \Step(\pi)
     \big)$ concludes the proof.
\end{proof}

We introduce some additional  notation for a en-weighted graph~$G$: Let $\CDD_\nc(G)$ be the length
     of the longest simple path in~$G$ whose nodes are noncritical, and let
     $\CF_\cc(G)$ be the length of the longest elementary critical closed
     path in~$G$.
We define
\[\lVert w_V \rVert = \max_{i\in V}w_V(i) - \min_{i\in V}w_V(i)
     \enspace.\] 
Analogously to~$\lVert w_V\rVert$, we define for vectors~$v\in\IRmax^N$:
\[ \lVert v\rVert = \max_{1\leqslant i\leqslant N} v_i - \min_{1\leqslant i\leqslant N} v_i \]
To enhance readability and since no confusion can arise, we omit the
     dependency on the graph~$G$ in the next definition:
\begin{equation}\label{eq:n:zero}
\Bcnc(G) = \min\Bigg\{ 
\CDD_\nc   + \frac{\lVert w_V\rVert + \overline{\Delta}_{\nc}\,\CDD_{\nc}-\overline{\delta}\,\CDD}{-\overline{\varrho}_{\nc}}\ ,\ 
\frac{\lVert w_V\rVert + \overline{\Delta}_{\nc}\,\CDD_{\nc}-\overline{\delta}\,(N_{\nc}+\CF_\cc-1)}{-\overline{\varrho}_{\nc}}
\Bigg\}\enspace,
\end{equation}
where $N_\nc$ is the number of non-critical nodes of~$G$.


\begin{thm}\label{thm:n:zero}
Let~$G$ be a nontrivial strongly connected en-weighted graph and
     let~$i$ be a node of~$G$.
For all $n\geqslant \Bcnc(G)$, there exists a path of maximum
     en-weight in $\Pa^n(\ito,G)$ that contains a critical node.
\end{thm}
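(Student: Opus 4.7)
The plan is a proof by contradiction: assume $n \geq \Bcnc(G)$ yet some maximum-weight path $\pi \in \Pa^n(\ito, G)$ avoids every critical node. By Lemmas~\ref{lem:hom} and~\ref{lem:invariance}, the statement is invariant under the homothety $G \mapsto (-\varrho(G)) \otimes G$, so I first reduce to the case $\varrho(G) = 0$, under which each overlined parameter equals its unoverlined version. One easily checks that $\varrho_{nc} < 0$: a non-critical closed path of mean $0 = \varrho(G)$ would itself be critical, contradicting its nodes being non-critical.

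Next, I upper-bound $w(\pi)$. Setting $s = \ell(\Simp(\pi))$, the path $\Simp(\pi)$ is simple and stays in the non-critical subgraph, so $s \leq \CDD_{nc}$ and $\wstar(\Simp(\pi)) \leq \Delta_{nc}\, s$. Combined with Lemma~\ref{lem:varrho:nc} and the fact that the resulting expression is nondecreasing in $s$ on $[0, \CDD_{nc}]$ (since $\Delta_{nc} \geq \varrho_{nc}$, each edge of a non-critical cycle having weight at most $\Delta_{nc}$),
\begin{equation*}
w(\pi) \;\leq\; \Delta_{nc}\CDD_{nc} + \max w_V + \varrho_{nc}(n - \CDD_{nc}).
\end{equation*}
To turn this into an upper bound on $n$, I then construct competitor paths $\pi^* \in \Pa^n(\ito, G)$ that contain a critical node; the maximality of $\pi$ forces $w(\pi) \geq w(\pi^*)$. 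Every competitor has the form $\pi^* = \sigma \cdot \gamma_r \cdot \gamma^{(m)}$, where $\gamma$ is an elementary critical closed path of length $k$, $\sigma$ is a simple path from $i$ to a node $j_0$ on $\gamma$ meeting $\gamma$ only at $j_0$, $\gamma_r$ is the $r$-edge prefix of $\gamma$ starting at $j_0$, and $\gamma^{(m)}$ is $m$ traversals of $\gamma$ starting at the endpoint of $\gamma_r$; the residue $r \in \{0, \dots, k-1\}$ and the nonnegative integer $m$ are chosen so that $\ell(\pi^*) = n$. Since $\gamma$ is critical with $\varrho = 0$, we have $\wstar(\gamma^{(m)}) = 0$, so $w(\pi^*) \geq \delta \cdot \ell(\sigma \cdot \gamma_r) + \min w_V$. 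Two choices of $\sigma$ give the two bounds in the minimum defining $\Bcnc(G)$: taking $\sigma$ arbitrary in $G$ yields $\ell(\sigma \cdot \gamma_r) \leq \CDD$ and the first term; restricting $\sigma$ to pass only through non-critical nodes (except $j_0$) yields $\ell(\sigma) \leq N_{nc}$ and $\ell(\sigma \cdot \gamma_r) \leq N_{nc} + \CF_\cc - 1$, producing the second term.

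The main obstacle is making the competitor work for all $n \geq \Bcnc(G)$, which has two facets: handling the residue $r$ (addressed by the partial prefix $\gamma_r$) and ensuring $m \geq 0$, i.e., $n \geq \ell(\sigma) + r$. The latter uses the auxiliary inequality $\delta \leq \varrho_{nc}$ (since every cycle mean is at least the minimum edge weight $\delta$) to show that each term in the minimum defining $\Bcnc(G)$ already majorizes $\CDD$ (respectively $N_{nc} + \CF_\cc - 1$), so the residue fits before any cycle traversal. Once the competitor is in place, combining $w(\pi) \geq w(\pi^*)$ with the display above and dividing by $-\varrho_{nc} > 0$ yields the desired bound on $n$, contradicting $n \geq \Bcnc(G)$ and concluding the proof.
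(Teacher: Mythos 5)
Your argument recovers only the \emph{first} term of the minimum defining $\Bcnc(G)$, and this is a genuine gap. After the reduction to $\varrho(G)=0$, combining your upper bound on $w(\pi)$ with the lower bound $w(\pi^*)\geqslant \overline{\delta}\,\ell(\sigma\cdot\gamma_r)+\min w_V$ gives
\begin{equation*}
n \;\leqslant\; \CDD_{\nc} + \frac{\lVert w_V\rVert + \overline{\Delta}_{\nc}\,\CDD_{\nc} - \overline{\delta}\,\ell(\sigma\cdot\gamma_r)}{-\overline{\varrho}_{\nc}}\enspace.
\end{equation*}
With $\ell(\sigma\cdot\gamma_r)\leqslant\CDD$ this is exactly the first term. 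But with $\ell(\sigma\cdot\gamma_r)\leqslant N_{\nc}+\CF_\cc-1$ you obtain the second term \emph{plus} an extra additive $\CDD_{\nc}$; absorbing that $\CDD_{\nc}$ into the fraction via $-\overline{\delta}\geqslant-\overline{\varrho}_{\nc}$ still leaves $-\overline{\delta}\,(\CDD_{\nc}+N_{\nc}+\CF_\cc-1)$ in the numerator, which is strictly weaker than the stated second term. The obstruction is structural: your competitor $\sigma$ starts afresh at~$i$ and is unrelated to $\Simp(\pi)$, so the up-to-$\CDD_{\nc}$ non-critical edges charged in the upper bound on $w(\pi)$ and the up-to-$N_{\nc}$ non-critical edges of $\sigma$ are counted against two separate budgets. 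The paper's competitor instead reuses a prefix $\pi_c$ of $\Simp(\hat{\pi})$ and branches off along a shortest path $\pi_2$ to the \emph{nearest} critical node; the minimal-distance choice forces $\pi_2$ to be node-disjoint from $\Simp(\hat{\pi})$ except at its start, so that $\ell(\pi_c)+\ell(\pi_3)+\ell(\pi_2)\leqslant N_{\nc}$ — a single budget of $N_{\nc}$ covers both the reduced realizer and the access path, and the shared contribution $\wstar(\pi_c)$ cancels between the two sides. Without this sharing your proof establishes the conclusion only for $n$ at least the first term, which is insufficient precisely when the second term achieves the minimum.

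A secondary problem is your feasibility step. That the second term majorizes $N_{\nc}+\CF_\cc-1$ does not follow from $\overline{\delta}\leqslant\overline{\varrho}_{\nc}$ alone: the numerator contribution $\lVert w_V\rVert+\overline{\Delta}_{\nc}\CDD_{\nc}$ can be negative (e.g.\ $\lVert w_V\rVert=0$ and $\overline{\Delta}_{\nc}<0$), and when $\overline{\delta}=\overline{\varrho}_{\nc}$ the claimed inequality fails outright. (Your check for the first competitor, $T_1\geqslant\CDD$, does go through, and you deserve credit for noticing the $m\geqslant0$ issue at all — the paper silently assumes its competitor is constructible.) The remainder of your argument — the reduction to rate zero, the monotonicity in $s=\ell(\Simp(\pi))$, and handling the residue by a partial traversal $\gamma_r$ of an elementary critical cycle — is sound and parallels the paper's.
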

\begin{proof}
Since a path is of maximum en-weight in $\Pa^n(\ito,\overline{G})$ if
     and only if it is of maximum en-weight in
     $\Pa^n(\ito,{G})$, the critical nodes in $\overline{G}$
     and $G$ are the same, and $\Bcnc(\overline{G}) = \Bcnc(G)$, we
     may assume without loss of generality.\ that $\varrho(G) = 0$ in the following.

Let $\overline{\Delta}_\nc = \overline{\Delta}_\nc(G)$,
     $\overline{\delta} = \overline{\delta}(G)$, and
     $\overline{\varrho}_\nc = \overline{\varrho}_\nc(G)$.

Now suppose by contradiction that there exists an $n \ge \Bcnc(G)$
     such that all paths of maximum weight in
     $\Pa^n(\ito,G)$ are paths with non-critical nodes
     only.
Let~$\hat{\pi}$ be a path in~$\Pa^n(\ito,G)$ of maximum
     weight with non-critical nodes only.

Next choose a critical node~$k$ and a prefix~$\pi_c$
     of~$\Simp(\hat{\pi})$, such that the distance between~$k$ and
     $\End(\pi_c)$ is minimal.
Let~$\pi_2$ be a path of minimum length from~$\End(\pi_c)$ to~$k$.
Further let~$\gamma$ be a critical elementary closed path with
     $\Start(\gamma)=\End(\gamma)=k$.
Choose~$m \in \IN$ to be maximal such that $\ell(\pi_c)+\ell(\pi_2)+
     m\cdot\ell(\gamma) \leqslant n$ and choose~$\pi_1$ to be a prefix
     of $\gamma$ of length $n - \big(\ell(\pi_c)+\ell(\pi_2)+m\cdot
     \ell(\gamma)\big)$.
Clearly $\Start(\pi_1)=k$.
If we set $\pi = \pi_c\cdot\pi_2\cdot\gamma^m\cdot\pi_1$, we get
     $\ell(\pi)=n$ and for the weight of $\pi$ in~$G$,
\begin{equation}\label{eq:crit:lower:bound}
w(\pi) \geqslant \min_{j\in V}w_V(j) + w_*(\pi_c)+w_*(\pi_2)+w_*(\pi_1)\enspace.
\end{equation}
Figure~\ref{fig:thm:n:zero} illustrates path~$\pi$.

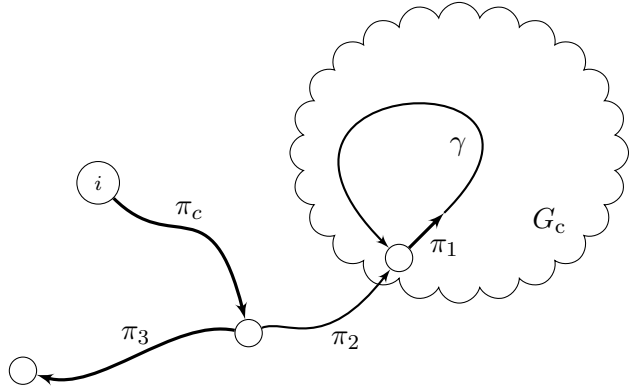
\begin{figure}[ht]
\centering
\begin{tikzpicture}[>=latex']
	\node[shape=circle,draw] (i) at (-2,2) {$\scriptstyle i$};
	\node[shape=circle,draw] (k) at (0,0) {};
	\node[shape=circle,draw] (j) at (2,1) {};
	\node[shape=circle,draw] (end) at (-3,-.5) {};
	\draw[very thick,->] (i) .. controls (-1,1) and (-.5,2)  .. node[midway,above]{${\pi}_c$}  (k);
	\draw[thick,->] (k) .. controls (0.5,0.2) and (1,-0.3)  .. node[near end,below=1mm]{${\pi}_2$}  (j);
	\draw[very thick,->] (j) --  (2.6,1.6) node[below=2mm] {$\pi_1$};
	\draw[thick,->] (2.6,1.6) .. controls +(2.1,2.1) and +(-2.5,2.5)  .. node[near start,below left]{$\gamma$}  (j);
	\draw[very thick,->] (k) .. controls +(-1,0.2) and +(1,-0.3) .. node[above]{$\pi_3$} (end);
	\node[cloud, cloud puffs=24, draw,minimum width=4.5cm, minimum height=4cm] at (2.8,2.4) {};
	\node at (4,1.5) {$G_\cc$};
\end{tikzpicture}
\caption{Path~$\pi$ in proof of Theorem~\ref{thm:n:zero}}
\label{fig:thm:n:zero}
\end{figure}

Let $\pi_3$ be a path such that $\Simp(\hat{\pi}) = \pi_c \cdot
     \pi_3$. By Lemma~\ref{lem:varrho:nc} we obtain for the weight of
     $\hat{\pi}$ in~$G$,
\begin{align}
w(\hat{\pi}) &\leqslant w\big(\!\Simp(\hat{\pi})\big) +
        \overline{\varrho}_{nc} \cdot \bigr(\ell(\hat{\pi}) - \ell\big(\!\Simp(\hat{\pi})\big) \bigr)\notag\\
 &\leqslant \max_{j\in V}w_V(j) + w_*(\pi_c)+w_*(\pi_3)+
        \overline{\varrho}_{nc}\cdot \bigr(\ell(\hat{\pi}) - \ell(\pi_c)-\ell(\pi_3) \bigr)\label{eq:upper:bound}
\end{align}
By assumption $w(\hat{\pi}) > w(\pi)$, and from
     \eqref{eq:crit:lower:bound}, \eqref{eq:upper:bound}, and $\overline{\varrho}_{nc} < 0$ we
     therefore obtain
\begin{align}
\ell(\hat{\pi}) &< \frac{\lVert w_V \rVert + w_*(\pi_3)-w_*(\pi_1)-w_*(\pi_2)}{-\overline{\varrho}_{nc}}+
                 \ell(\pi_3)+\ell(\pi_c)\notag\\
 &\le \frac{\lVert w_V \rVert + \overline{\Delta}_{nc}\,\ell(\pi_3)-\overline{\delta}\,(\ell(\pi_1)+\ell(\pi_2))}{-\overline{\varrho}_{nc}}+
                 \ell(\pi_3)+\ell(\pi_c)\label{eq:the:bound}
\end{align}
From \eqref{eq:the:bound} we may deduce,
\begin{equation}
\ell(\hat{\pi}) < \frac{\lVert w_V\rVert + \overline{\Delta}_{nc}\,\CDD_{nc}(G)-\overline{\delta}\,\CDD(G)}
                       {-\overline{\varrho}_{nc}}+\CDD_{nc}(G)\enspace.\label{eq:bound:1}
\end{equation}

Alternatively we may deduce from \eqref{eq:the:bound} with $-\overline{\varrho}_{nc}
     \le -\overline{\delta}$, $\ell(\pi_2)+\ell(\pi_3)+\ell(\pi_c) \le N_{nc}$,
     and $\ell(\pi_1) \le \CF_c(G)-1$ that        
\begin{align}
\ell(\hat{\pi}) &< \frac{\lVert w_V\rVert + \overline{\Delta}_{nc}\,\CDD_{nc}(G)-
   \overline{\delta}\,(N_{nc}+\CF_c(G)-1)}{-\overline{\varrho}_{nc}}\enspace.\label{eq:bound:2}
\end{align}
Combination of \eqref{eq:bound:1} and \eqref{eq:bound:2} yields a
     contradiction to $n \ge \Bcnc(G)$.
The lemma follows.
%
\end{proof}

Even and Rajsbaum~\cite[Lemma~10]{even:rajs} and Hartmann and
     Arguelles~\cite[Claim in proof of Theorem~10]{hartmann:arguelles}
     arrived at analog bounds for critical nodes on maximum weight
     paths.
A comparison of these bounds is given in Section~\ref{subsec:relation:work}

From Theorem~\ref{thm:n:zero} together with 
     $\CDD_\nc(G) \le N-1$ and $N_{\nc}+\CF_\cc(G) \le N$, and
     Lemma~\ref{lem:invariance} we immediately obtain:   
\begin{cor}\label{cor:n:zero}
Let~$G$ be a nontrivial strongly connected en-weighted graph with~$N$ nodes and let~$i$ be a node of~$G$.
For all
\begin{equation}
n\ge \frac{\lVert w_V\rVert + (\Delta_{\nc}(G)-\delta(G))\,(N-1)}{\varrho(G)-\varrho_\nc(G)} \ge \Bcnc(G)\enspace,\notag
\end{equation}
there exists a path of maximum en-weight in $\Pa^n(\ito,G)$ that
     contains a critical node.
\end{cor}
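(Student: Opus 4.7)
My plan is to deduce this corollary directly from Theorem~\ref{thm:n:zero} by establishing the middle inequality in the stated chain, namely that the explicit $N$-dependent expression upper bounds $\Bcnc(G)$. Since $\Bcnc(G)$ is by definition the minimum of two expressions, I would focus on bounding the second one, $\text{Expr}_2 = \frac{\lVert w_V\rVert + \overline{\Delta}_\nc\CDD_\nc - \overline{\delta}(N_\nc + \CF_\cc - 1)}{-\overline{\varrho}_\nc}$. Lemma~\ref{lem:invariance} rewrites the barred quantities as $\overline{\Delta}_\nc = \Delta_\nc - \varrho$, $\overline{\delta} = \delta - \varrho$, and $-\overline{\varrho}_\nc = \varrho - \varrho_\nc$, so the denominator of $\text{Expr}_2$ already coincides with that of the target expression.

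The second step is to substitute the two combinatorial bounds $\CDD_\nc \le N-1$ and $N_\nc + \CF_\cc - 1 \le N-1$ in the numerator. From $\delta \le \varrho$ we have $-\overline{\delta} \ge 0$, hence $-\overline{\delta}(N_\nc + \CF_\cc - 1) \le -\overline{\delta}(N-1)$. Similarly, in the principal regime $\Delta_\nc \ge \varrho$ we have $\overline{\Delta}_\nc \ge 0$ and so $\overline{\Delta}_\nc \CDD_\nc \le \overline{\Delta}_\nc(N-1)$. Summing and recombining the two substitutions gives $(\overline{\Delta}_\nc - \overline{\delta})(N-1) = (\Delta_\nc - \delta)(N-1)$ in the numerator, which is exactly the corollary's explicit bound.

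The delicate point is the regime $\Delta_\nc < \varrho$, where $\overline{\Delta}_\nc < 0$ and the last monotonicity step fails in the indicated direction. In that case I would sidestep the reduction through $\Bcnc(G)$ and instead revisit inequality~\eqref{eq:the:bound} inside the proof of Theorem~\ref{thm:n:zero}: simply drop the nonpositive contribution $\overline{\Delta}_\nc \ell(\pi_3)$, absorb the additive term $\ell(\pi_3) + \ell(\pi_c)$ into the fraction via $-\overline{\varrho}_\nc \le -\overline{\delta}$, and then apply the uniform length estimate $\ell(\pi_1) + \ell(\pi_2) + \ell(\pi_c) + \ell(\pi_3) \le (\CF_\cc - 1) + N_\nc \le N-1$, which is assembled from bounds that already appear in the theorem's proof. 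In both regimes the resulting upper bound on $\ell(\hat{\pi})$ is at most the expression in the corollary, so running the contradiction argument of Theorem~\ref{thm:n:zero} delivers the desired conclusion.
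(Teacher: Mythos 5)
Your first two steps are exactly the paper's own (one-line) derivation: after Lemma~\ref{lem:invariance} the denominators agree, and with $\CDD_\nc\leqslant N-1$, $N_\nc+\CF_\cc-1\leqslant N-1$ and $-\overline{\delta}\geqslant 0$ the second expression in $\Bcnc(G)$ is dominated by the corollary's explicit bound --- \emph{provided} $\overline{\Delta}_{\nc}\geqslant 0$. You have also correctly located the only delicate point: nothing forces the heaviest edge between two non-critical nodes to weigh at least $\varrho(G)$, so $\overline{\Delta}_{\nc}<0$ is a real possibility (and is even relevant to the paper's Full Reversal application, where $\varrho=0$ and $\Delta_\nc\leqslant 0$). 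The paper's proof silently skips this case; you did not, which is to your credit.

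The gap is that your patch for the regime $\overline{\Delta}_{\nc}<0$ does not close. After dropping $\overline{\Delta}_{\nc}\,\ell(\pi_3)\leqslant 0$, absorbing $\ell(\pi_3)+\ell(\pi_c)$ via $-\overline{\varrho}_{\nc}\leqslant-\overline{\delta}$, and bounding the total length by $N_\nc+\CF_\cc-1\leqslant N-1$, inequality~\eqref{eq:the:bound} yields $\ell(\hat{\pi})<\bigl(\lVert w_V\rVert-\overline{\delta}\,(N-1)\bigr)/(-\overline{\varrho}_{\nc})$, whereas the corollary's threshold is $\bigl(\lVert w_V\rVert+(\overline{\Delta}_{\nc}-\overline{\delta})(N-1)\bigr)/(-\overline{\varrho}_{\nc})$; the latter is \emph{smaller} than your bound by $-\overline{\Delta}_{\nc}(N-1)/(-\overline{\varrho}_{\nc})>0$ precisely when $\overline{\Delta}_{\nc}<0$, so $n\geqslant$ (corollary's threshold) does not imply $n\geqslant$ (your bound) and no contradiction is reached. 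Your closing claim that ``in both regimes the resulting upper bound on $\ell(\hat{\pi})$ is at most the expression in the corollary'' is therefore false in the second regime. Worse, no local repair can succeed, because the middle inequality of the corollary can genuinely fail there: take $\varrho=0$, a critical $2$-cycle on nodes $1,2$ with zero weights, non-critical nodes $u,v$ joined by edges $u\to v$ of weight $-1$ and $v\to u$ of weight $-3$, and connecting edges $1\to u$, $v\to 1$ of weight $-10$. Then $N=4$, $\CDD=3$, $\CDD_\nc=1$, $N_\nc=\CF_\cc=2$, $\Delta_\nc=-1$, $\delta=-10$, $\varrho_\nc=-2$, so the corollary's expression equals $27/2$ while both terms in the minimum defining $\Bcnc(G)$ are at least $29/2$. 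The honest conclusions are either the corollary under the extra hypothesis $\Delta_\nc(G)\geqslant\varrho(G)$, or the weaker explicit bound $\bigl(\lVert w_V\rVert+(\varrho(G)-\delta(G))(N-1)\bigr)/(\varrho(G)-\varrho_\nc(G))$, which is what your regime-two computation actually establishes.
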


\section{Arbitrarily Long Closed Paths}\label{sec:explorationpenality}

We introduce for a strongly connected graph~$G$ the
    {\em exploration penalty\/} of~$G$,~$\EP(G)$, as the smallest integer $k$
	such that for any node $i$ and any integer~$n\geqslant k$ that is a 
	multiple of~$c(G)$, there is a closed path of length $n$ starting at~$i$.
We prove that $\EP(G)$ is finite, and we give an upper bound on $\EP(G)$ 
	which is  quadratic in the number of nodes of~$G$.
As we see in the subsequent sections, the exploration penalty plays a key role to bound 
	the transient as it constitutes a threshold to ``pump'' path weights  inside
	the critical graph.

\subsection{A number-theoretic lemma}\label{sec:numbertheorem}

We now state a useful number-theoretic lemma, which is a simple application
	of Brauer's Theorem~\cite{Bra42}.

Let ${\mathbf N}$ be any nonempty set of integers.
Any nonempty subset ${\mathbf A} \subseteq {\mathbf N}$ is said to be a {\em gcd-generator of\/ $\mathbf N$} if
	$\gcd({\mathbf A}) = \gcd({\mathbf N})$.
Note that, as $\mathds{Z}$ is Noetherian, any nonempty set of integers admits a finite gcd-generator.

\begin{lem}\label{lem:semigroup}
A set\/ $\mathbf N$ of positive integers that is closed under addition contains all but a finite
		number of multiples of its greatest common divisor.
Moreover, if\/ $\{a_1,\dots, a_k \}$ is a finite gcd-generator of\/~$\mathbf N$ with $a_1 \leq \dots \leq a_k$,
	then any multiple $n$ of $d=\gcd({\mathbf N})$ such that $n\geq d(\frac{a_1}{d}-1)(\frac{a_k}{d}-1)$
	is in\/ $\mathbf N$.
\end{lem}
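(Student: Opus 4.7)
The plan is to reduce the claim to a direct application of Brauer's theorem, which asserts that if $b_1 \leqslant \dots \leqslant b_k$ are positive integers with $\gcd(b_1,\dots,b_k)=1$, then every integer $m \geqslant (b_1-1)(b_k-1)$ admits a representation $m=\sum_{i=1}^k c_i b_i$ with $c_i\in\IN$.

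First I would dispense with the qualitative first claim: since $\IZ$ is Noetherian, the ideal generated by $\mathbf N$ is principal, generated by $d=\gcd(\mathbf N)$; in particular $d$ is an integer combination of finitely many elements of $\mathbf N$, which proves that a finite gcd-generator always exists. Hence the first claim follows from the second, and I only need to prove the quantitative bound.

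For the quantitative claim, my plan is to pass to the coprime case by rescaling. Set $b_i = a_i/d \in \IN^*$; by definition of a gcd-generator, $\gcd(b_1,\dots,b_k)=1$. Now let $n$ be a multiple of $d$ satisfying $n \geqslant d(b_1-1)(b_k-1)$, and write $n = dm$. Then $m \geqslant (b_1-1)(b_k-1)$, so Brauer's theorem produces coefficients $c_i \in \IN$ with $m=\sum_i c_i b_i$, which gives $n = dm = \sum_i c_i a_i$. Since each $a_i \in \mathbf N$ and $\mathbf N$ is closed under addition, a straightforward induction on $\sum_i c_i$ shows that the sum lies in $\mathbf N$, provided at least one $c_i$ is positive.

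There is no real obstacle here: the lemma is essentially a direct packaging of Brauer's theorem once one factors out $d$ to reduce to the coprime case. The only point requiring a trace of care is the edge case $n = 0$, but this is a non-issue because $\mathbf N$ consists of positive integers, so only positive multiples of $d$ are under consideration and the combination $\sum_i c_i a_i$ is automatically nontrivial.
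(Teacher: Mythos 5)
Your proof is correct and follows essentially the same route as the paper's: divide every element by $d$ to reduce to the coprime case, invoke Brauer's theorem to write the quotient as a nonnegative integer combination of the $a_i/d$, and conclude by closure of $\mathbf N$ under addition. Your explicit remarks on the existence of a finite gcd-generator and on the trivial combination when $n=0$ are slightly more careful than the paper's write-up but change nothing of substance.
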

\begin{proof}
Consider the set $\mathbf M$ of all the elements in $\mathbf N$ divided by $d =\gcd ({\mathbf N})$.
By Brauer's Theorem~\cite{Bra42}, we know that every integer $m \geq (\frac{a_1}{d}-1)(\frac{a_k}{d}-1)$
	is of the form $$ m = \sum_{i=1}^k x_i  \frac{a_i}{d}$$
	where each $x_i$ is a nonnegative integer.
Since $\mathbf N$ is closed under addition, it follows that  every  multiple of $d$ that is greater or equal 
	to $d(\frac{a_1}{d}-1)(\frac{a_k}{d}-1)$ is in $\mathbf N$.
In particular, 	all but a finite number of multiples of $d$ are in $\mathbf N$.
\end{proof}

\subsection{Constructing long paths}\label{sec:constructing}
In the case $G=(V,E)$ is a primitive graph, Denardo~\cite{denardo} established the following
	upper bound on~$\EP(G)$:

\begin{lem}[Denardo, {\cite[Corollary~1]{denardo}}]\label{lem:denardo}
Let $G$ be a strongly connected primitive graph with $N$ nodes and of girth $g$.
For any integer $n\geqslant N+(N-2)g$ and any node $i$ of $G$, there
     exists a closed path starting at~$i$ of length $n$.
\end{lem}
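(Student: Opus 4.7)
The plan is to apply Lemma~\ref{lem:semigroup} to the set $\mathbf{N}_i$ of lengths of nonempty closed paths at $i$ in $G$. First, $\mathbf{N}_i$ is closed under addition, since the concatenation of two closed paths at $i$ is again a closed path at $i$. Second, $\gcd(\mathbf{N}_i)=c(G)=1$: one direction is immediate from the definition of $c(G)$, and for the reverse, for any closed path $\gamma$ at any node $j$ of $G$, strong connectivity yields paths $\alpha:i\to j$ and $\beta:j\to i$, so that both $\ell(\alpha)+\ell(\beta)$ and $\ell(\alpha)+\ell(\gamma)+\ell(\beta)$ lie in $\mathbf{N}_i$; their difference shows that $\gcd(\mathbf{N}_i)$ divides $\ell(\gamma)$, and hence divides $c(G)$. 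Primitivity of $G$ then gives $\gcd(\mathbf{N}_i)=1$.

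With these two properties in hand, I would exhibit an explicit finite gcd-generator of $\mathbf{N}_i$ with controlled element sizes. Fix an elementary closed path $\gamma_0$ of length $g$ with cyclically ordered nodes $v_0,\ldots,v_{g-1}$. For each index $j$, use strong connectivity to pick simple paths $\alpha_j:i\to v_j$ and $\beta_j:v_j\to i$, both of length at most $N-1$. For every ordered pair $(j,k)$, the concatenation $\alpha_j\cdot\sigma_{jk}\cdot\beta_k$, where $\sigma_{jk}$ is the arc of $\gamma_0$ of length $(k-j)\bmod g$ from $v_j$ to $v_k$, is a closed path at $i$; thus $a_{j,k}=\ell(\alpha_j)+((k-j)\bmod g)+\ell(\beta_k)$ lies in $\mathbf{N}_i$, and inserting one additional traversal of $\gamma_0$ at $v_k$ produces $a_{j,k}+g\in\mathbf{N}_i$. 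Varying the exit index $k$ makes the values $a_{j,k}\bmod g$ cover every residue modulo $g$, which drives the gcd of the generator down to $1$. Lemma~\ref{lem:semigroup}, applied with $d=1$, then guarantees that every integer $n\ge(a_1-1)(a_{\max}-1)$ lies in $\mathbf{N}_i$, where $a_1$ and $a_{\max}$ denote the smallest and largest elements of the chosen generator.

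The remaining task, which is the main obstacle, is to show $(a_1-1)(a_{\max}-1)\le N+(N-2)g$. Because $G$ is strongly connected, every node $i$ sits on some elementary closed path, yielding a candidate element of $\mathbf{N}_i$ of length at most $N$ for the role of $a_1$; augmenting this closed path by one extra traversal of $\gamma_0$ supplies an $a_{\max}\le N+g$. The delicate point is to simultaneously ensure (i) that the chosen generator has gcd $1$, which may require replacing the shortest closed path at $i$ by one of the $a_{j,k}$ with $j\ne k$ in order to shift the residue modulo $g$ (this is where primitivity of $G$ enters essentially), and (ii) that these adjustments do not push $a_{\max}$ beyond the budget allowed by the target bound. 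A case analysis on $\gcd(a_1,g)$, combined with the observation that at most $N-g$ nodes lie off $\gamma_0$ so that the shortest path from $i$ to a node of $\gamma_0$ has length at most $N-g$, should discharge both obligations and tighten the straightforward estimate $(N-1)(N+g-1)$ to the stated Denardo bound $N+(N-2)g$.
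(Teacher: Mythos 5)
First, a point of comparison: the paper does not prove this lemma at all. It is imported as a black box from Denardo~\cite{denardo}, and the paper's own Brauer-based machinery appears only in Lemma~\ref{lem:bernadette:denardo}, which the authors explicitly acknowledge to be \emph{weaker} than Denardo's bound in the primitive case $c=1$. Your opening observations are fine and match Lemma~\ref{lem:Ni,j}: the set ${\mathbf N}_{i,i}$ of closed-walk lengths at $i$ is closed under addition and has $\gcd({\mathbf N}_{i,i})=c(G)=1$. This framework does prove that \emph{some} finite threshold exists.

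The step you flag as ``the main obstacle,'' however, is not a technicality that a case analysis will discharge; it is where the approach fails. Applying Lemma~\ref{lem:semigroup} to a gcd-generator of ${\mathbf N}_{i,i}$ can never give a threshold below $(a_1-1)(a_{\max}-1)$, and this quantity cannot be pushed down to $N+(N-2)g$. Take the Wielandt graph: the cycle $1\to 2\to\cdots\to N\to 1$ together with the chord $(N-1)\to 1$, which is primitive with $g=N-1$, so Denardo's bound is $N+(N-2)(N-1)=(N-1)^2+1$ and is tight. Every closed walk at node $N$ must traverse the full $N$-cycle, so ${\mathbf N}_{N,N}=\{aN+b(N-1)\mid a\geqslant 1,\ b\geqslant 0\}$; its smallest element is $N$ and its second-smallest is $2N-1$, so every gcd-generator satisfies $a_1\geqslant N$ and $a_{\max}\geqslant 2N-1$, and Brauer yields at best $(N-1)(2N-2)=2(N-1)^2$ --- roughly twice the target (your explicit generator, with elements up to $3N-2$, does worse still). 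Reaching Denardo's constant requires exploiting the structure of ${\mathbf N}_{i,i}$ beyond its gcd: in the example one must observe that $n\in{\mathbf N}_{N,N}$ iff $n-N$ lies in the numerical semigroup generated by $N-1$ and $N$, and apply the Frobenius-type bound to \emph{that} semigroup after first paying for one mandatory traversal of a cycle. This is precisely the shape of the paper's Lemma~\ref{lem:bernadette:denardo} (go to the girth cycle, apply Brauer there, come back), and even that careful version only achieves $2Ng/c-g/c-2g+c$, which for $c=1$ exceeds $N+(N-2)g$ by $(g-1)(N-1)$. A smaller, separate issue: your claim that varying $k$ makes $a_{j,k}\bmod g$ cover all residues ignores that $\ell(\beta_k)$ also varies with $k$, so the asserted gcd reduction of your generator is itself unjustified as stated.
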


In~\cite{CBFN11b}, we prove that the same upper bound actually holds for any 
	(primitive or non-primitive) strongly connected graph.
We now show that we can obtain a better upper bound on $\EP(G)$ in the case of 
	non-primitive graphs from the number-theoretic lemma in Section~\ref{sec:numbertheorem}.
For that, we first recall some well-known properties on the lengths of paths in a 
	strongly connected graph.
Let ${\mathbf N}_{i,j}$ be the subset of integers defined by:
	$${\mathbf N}_{i,j} = \{ n\in \mathds{N} \mid \exists \pi \in \Pa(i,j,G), \  n = \ell(\pi)\}.$$
Clearly each ${\mathbf N}_{i,i}$ is closed under addition; let $d_i = \gcd ({\mathbf N}_{i,i})$.
Obviously, 
\begin{equation}\label{eqn:c}
	c(G) = \gcd (\{ d_i \mid i \in V \}).
\end{equation}

\begin{lem}\label{lem:Ni,j}
For any node $i$ in $G$, $d_i = c(G)$.
Moreover, for any pair of nodes $i,j$, all the elements in ${\mathbf N}_{i,j}$ have the same
	residue modulo $c(G)$.
\end{lem}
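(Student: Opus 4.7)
The plan is to exploit strong connectivity of $G$ to transport closed paths from one base node to another, and to transport the two proposed paths $\pi,\pi'\in\Pa(i,j,G)$ into closed paths at a common base, so that both parts of the lemma reduce to looking at divisibility of lengths of closed paths.

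For the first claim, I want to show $d_i = d_j$ for all nodes $i,j$; combined with \eqref{eqn:c}, this immediately gives $d_i = c(G)$. By strong connectivity, fix paths $\sigma\in\Pa(i,j,G)$ and $\tau\in\Pa(j,i,G)$. Then $\sigma\cdot\tau$ is a closed path at $i$, so $d_i\mid\ell(\sigma)+\ell(\tau)$. For any closed path $\gamma$ at $j$, the concatenation $\sigma\cdot\gamma\cdot\tau$ is a closed path at $i$, so $d_i\mid\ell(\sigma)+\ell(\gamma)+\ell(\tau)$. Subtracting gives $d_i\mid\ell(\gamma)$, i.e., $d_i$ divides every element of ${\mathbf N}_{j,j}$, hence $d_i\mid d_j$. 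By symmetry $d_j\mid d_i$, which yields $d_i=d_j$ as desired.

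For the second claim, let $\pi,\pi'\in\Pa(i,j,G)$ and fix any path $\tau\in\Pa(j,i,G)$, which exists by strong connectivity. Both $\pi\cdot\tau$ and $\pi'\cdot\tau$ are closed paths at $i$, so their lengths lie in ${\mathbf N}_{i,i}$ and are therefore divisible by $d_i=c(G)$ by the first part. Subtracting the congruences
\[\ell(\pi)+\ell(\tau)\equiv 0\pmod{c(G)}\qquad\text{and}\qquad \ell(\pi')+\ell(\tau)\equiv 0\pmod{c(G)}\]
gives $\ell(\pi)\equiv\ell(\pi')\pmod{c(G)}$, establishing that every element of ${\mathbf N}_{i,j}$ has the same residue modulo $c(G)$.

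I do not expect a real obstacle here: both statements are standard consequences of strong connectivity, and the ``transport by conjugation'' trick (prepending $\sigma$ and appending $\tau$, or vice versa) handles both items. The only thing to be slightly careful about is the degenerate case $i=j$ in the second claim, which is immediate since then ${\mathbf N}_{i,j}={\mathbf N}_{i,i}\subseteq c(G)\IN$, so all residues are $0$.
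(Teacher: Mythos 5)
Your proof is correct, and the second part coincides with the paper's argument (conjugate both paths by a fixed return path $\tau\in\Pa(j,i,G)$ and subtract the two divisibility relations). The first part, however, takes a genuinely different and in fact more elementary route. The paper fixes $a\in{\mathbf N}_{i,j}$, $b\in{\mathbf N}_{j,i}$ and then invokes Lemma~\ref{lem:semigroup} (i.e., Brauer's theorem) to produce an \emph{actually realizable} closed-path length $kd_j$ at $j$ with $k$ coprime to $d_i$; inserting that single closed path gives $d_i\mid kd_j$, and the coprimality of $k$ and $d_i$ is then needed to conclude $d_i\mid d_j$. You instead observe that for \emph{every} closed path $\gamma$ at $j$ the conjugate $\sigma\cdot\gamma\cdot\tau$ is closed at $i$, so that $d_i\mid\ell(\gamma)$ for all $\gamma\in\CP$ based at $j$, whence $d_i$ divides the gcd $d_j$ directly. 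This avoids the number-theoretic input and the coprimality bookkeeping altogether, at no cost in generality; the paper's detour through Lemma~\ref{lem:semigroup} buys nothing here (that lemma is genuinely needed later, in Lemma~\ref{lem:bernadette:denardo}, but not for this statement). One pedantic point on your side: you should note that ${\mathbf N}_{j,j}$ contains a positive element (which holds since $G$ is strongly connected and nontrivial), so that $d_j$ is well defined; this is implicit in the paper as well.
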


\begin{proof}
Let $i,j$ be any pair of nodes, and let $a \in {\mathbf N}_{i,j}$ and $b \in {\mathbf N}_{j,i}$.
The concatenation of a path from~$i$ to~$j$ with a path from~$j$ to~$i$ 
	is a closed path starting at~$i$.
Hence $a + b \in {\mathbf N}_{i,i}$. 
From Lemma~\ref{lem:semigroup}, we know that ${\mathbf N}_{j,j}$ contains all the multiples of $d_j$ 
	greater than some integer.
Consider any such multiple $k d_j$ with $k$ and $d_i$ relatively prime integers.
By inserting one corresponding closed path at node $j$ into the closed path at $i$
	with length $a+b$, we obtain a new closed path starting at $i$, i.e.,
	$a + k d_j + b \in {\mathbf N}_{i,i}$.
It follows that $d_i$ divides both $a+b$ and $a + k d_j + b$, and so~$d_i$ divides~$d_j$.
Similarly, we prove that~$d_j$ divides~$d_i$, and so $d_i = d_j$.
By (\ref{eqn:c}), the common value of the $d_i$'s  is actually equal to  $c(G)$.

Let $a$ and $a'$ be two  integers in ${\mathbf N}_{i,j}$.
The above argument gives both $a + b$ and $a' +b$ in ${\mathbf N}_{i,i}$.
Hence $c(G)$ divides $a + b $ and $a'+ b$, and so $a-a'$.
\end{proof}

\begin{lem}\label{lem:Ni,i:generator}
For any node $i$ of $G$, the set\/  ${\mathbf N}_{i,i}$ admits a gcd-generator 
	which contains all the lengths of  elementary closed paths starting at $i$, 
	and whose all elements $n$ satisfy the  inequality 
	$$ g \leq n \leq 2N -1$$
	where $g$ is the girth of~$G$ and~$N$ is the number of nodes in~$G$.
\end{lem}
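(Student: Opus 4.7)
The plan is to exhibit an explicit subset $\mathbf{A}\subseteq\mathbf{N}_{i,i}$ containing the lengths of all elementary closed paths starting at $i$, whose elements all lie in $[g,2N-1]$, and whose gcd equals $c(G)$. I would start by putting into $\mathbf{A}$ the set $E_i$ of lengths of elementary closed paths at $i$; these trivially lie in $[g,N]\subseteq[g,2N-1]$. The remaining task is to augment $\mathbf{A}$, for each elementary closed path $\gamma$ of $G$ with $i\notin V(\gamma)$, by two closed paths at $i$ whose lengths differ by exactly $\ell(\gamma)$.

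For such a $\gamma$ of length $\ell$, I would pick $u\in V(\gamma)$ and $u'\in V(\gamma)$ realizing $\min_{v\in V(\gamma)}\mathrm{dist}(i,v)$ and $\min_{v\in V(\gamma)}\mathrm{dist}(v,i)$ respectively, where $\mathrm{dist}$ denotes shortest-path length in $G$. A corresponding shortest simple path from $i$ to $u$ meets $V(\gamma)$ only at its endpoint, so it uses at most $N-\ell+1$ distinct nodes and has length at most $N-\ell$; symmetrically for the $u'$-to-$i$ path. Letting $\pi_1,\pi_3$ be such shortest simple paths, $\pi_2$ the sub-arc of $\gamma$ from $u$ to $u'$ (of length at most $\ell-1$), and $\gamma_u$ the cyclic rotation of $\gamma$ starting at $u$, the two closed paths $\pi_\gamma=\pi_1\cdot\pi_2\cdot\pi_3$ and $\pi'_\gamma=\pi_1\cdot\gamma_u\cdot\pi_2\cdot\pi_3$ satisfy $\ell(\pi'_\gamma)=\ell(\pi_\gamma)+\ell$ and
\[\ell(\pi'_\gamma)\;\leq\;(N-\ell)+\ell+(\ell-1)+(N-\ell)\;=\;2N-1.\]
For elementary closed paths $\gamma$ with $i\in V(\gamma)$, no augmentation is needed since a cyclic rotation of $\gamma$ already places $\ell(\gamma)$ into $E_i$.

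The verification of the three required properties of $\mathbf{A}=E_i\cup\{\ell(\pi_\gamma),\ell(\pi'_\gamma)\}_\gamma$ is then essentially bookkeeping. Containment and the upper bound $2N-1$ are by construction; the lower bound $\geq g$ follows because every nonempty closed path in $G$ decomposes via Lemma~\ref{lem:path:split} into pieces containing an elementary closed path, and hence has length at least $g$. For the gcd, $\gcd(\mathbf{A})$ divides $\ell(\pi'_\gamma)-\ell(\pi_\gamma)=\ell(\gamma)$ for every $\gamma$ avoiding $i$, and divides $\ell(\gamma)\in E_i$ for every $\gamma$ passing through $i$; it therefore divides every elementary closed path length of $G$, and so divides $c(G)$. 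The reverse divisibility is given by Lemma~\ref{lem:Ni,j}, which identifies $c(G)$ with $\gcd(\mathbf{N}_{i,i})$.

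The main obstacle I anticipate is getting the tight length bound $2N-1$. The naive estimate $(N-1)+(N-1)+(\ell-1)+\ell = 2N+2\ell-3$ is too loose by $2\ell-2$. The required saving comes from the key observation that a shortest simple path from $i$ into $V(\gamma)$ can be taken to touch $V(\gamma)$ only at its endpoint, tightening each of the two external segments from $N-1$ to $N-\ell$ and recovering exactly enough slack to accommodate the full extra traversal $\gamma_u$ of length $\ell$ within the budget.
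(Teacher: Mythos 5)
Your proof is correct and follows essentially the same route as the paper's: for each elementary closed path $\gamma$ one forms two closed paths at $i$ whose lengths differ by exactly $\ell(\gamma)$, with the bound $2N-1$ secured by the key observation that a shortest path from $i$ into the node set of $\gamma$ has length at most $N-\ell(\gamma)$, and the gcd identification then proceeds via Lemma~\ref{lem:Ni,j} exactly as in the paper. The only cosmetic difference is your return leg (exiting $\gamma$ at a second node $u'$ along a path of length at most $N-\ell$ avoiding $V(\gamma)$), where the paper simply returns from the entry node $j$ by a simple path of length at most $N-1$; both give the same arithmetic $2N-1$.
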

\begin{proof}
Let $i$ be any node of $G$, and let $\gamma_0$ be any elementary closed path.
Let $\pi_1$ be one of the shortest paths from $i$ to $\gamma_0$,  and let $j= \End(\pi_1)$.
Without loss of generality, $\Start(\gamma_0)=j$.
By definition, $\ell(\pi_1) \leq N-\ell(\gamma_0)$.
Then consider a simple path $\pi_2$ from $j$ to $i$, and the two closed paths
	$$\pi = \pi_1 \cdot \pi_2 \mbox{ and } \pi' = \pi_1 \cdot \gamma_0 \cdot \pi_2\enspace.$$ 
Note that $$\ell(\pi) \leq \ell(\pi') \leq 2N-1$$
and $\ell(\pi)\geqslant g$, because~$\pi$ is closed.
In the particular case $i$ is a node of $\gamma_0$,  $\pi'$ reduces to $\gamma_0$,
	and so $\ell(\pi')$ is the length of the elementary closed path $\gamma_0$. 

Let $\mathbf{N}_i$ be the set of the lengths of the closed paths~$\pi$ and~$\pi'$ when 
	considering all the elementary closed paths~$\gamma_0$ in~$G$.
Then, $\mathbf{N}_i$ contains all the length of  elementary closed paths starting at $i$.
Let $g_i =\gcd(\mathbf{N}_i)$.
Since $\mathbf{N}_i \subseteq \mathbf{N}_{i,i}$,  $d_i$ divides $g_i$.
Conversely, let $\gamma_0$ be any elementary closed path, and let $\pi$ and $\pi'$
	be the two closed paths starting at node $i$ defined above;~$g_i$ divides both
	$\ell(\pi)$ and $\ell(\pi')$, and so divides $\ell(\pi') - \ell(\pi) = \ell(\gamma_0)$.
Hence,~$g_i$ divides the length of any elementary closed path, i.e., $g_i$ divides $c(G)$.
By Lemma~\ref{lem:Ni,j}, it follows that~$g_i$ divides~$d_i$.
Consequently, $g_i = d_i$, that is to say  $\mathbf{N}_i$ is a gcd-generator of $\mathbf{N}_{i,i}$.
\end{proof}

\begin{lem}\label{lem:bernadette:denardo}
Let $G$ be a strongly connected graph with $N$ nodes, of girth $g$ and cyclicity $c$.
For any node $i$ of $G$ and any integer 
	$n$ such that $n$ is a multiple of $c$  and 
	$n \geqslant 2Ng/c- g/c - 2g + c$,
	there exists a closed path of length~$n$ starting at~$i$. 
\end{lem}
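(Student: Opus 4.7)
The plan is to combine Lemma~\ref{lem:Ni,i:generator} with the Brauer-style bound of Lemma~\ref{lem:semigroup}, but with the following twist. Applying Lemma~\ref{lem:semigroup} directly at node $i$ runs into trouble: the smallest element of the gcd-generator provided by Lemma~\ref{lem:Ni,i:generator} can be as large as $N$ when $i$ does not sit on any short elementary closed path, and pairing this with the upper bound $2N-1$ on the largest element yields a Brauer threshold of order $N^2/c$, which is quadratic in $N$. The conceptual key is to shift the problem from $i$ to a judiciously chosen node $j$ that lies on a girth cycle, since there the smallest element of the corresponding gcd-generator is forced to equal $g$.

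Concretely, fix an elementary closed path $\gamma_g$ of length $g$. Following the construction in the proof of Lemma~\ref{lem:Ni,i:generator}, I take $\pi_1$ to be a shortest path from $i$ to some node $j$ of $\gamma_g$, with $\ell(\pi_1) \leq N - g$, and $\pi_2$ to be a simple path from $j$ back to $i$, with $\ell(\pi_2) \leq N - 1$. Their concatenation is a closed path at $i$ of length $r := \ell(\pi_1) + \ell(\pi_2) \leq 2N - 1 - g$. For any closed path $\sigma$ at $j$, the path $\pi_1 \cdot \sigma \cdot \pi_2$ is a closed path at $i$, so $\mathbf{N}_{i,i} \supseteq r + \mathbf{N}_{j,j}$; note that $r$ itself is a multiple of $c$, since $\pi_1\cdot\pi_2\in\mathbf{N}_{i,i}$ and $c$ divides every element of $\mathbf{N}_{i,i}$.

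Next, I apply Lemma~\ref{lem:Ni,i:generator} to the node $j$, which yields a (finite) gcd-generator of $\mathbf{N}_{j,j}$ whose elements all lie in $[g, 2N-1]$. Because $j$ lies on $\gamma_g$, the cycle $\gamma_g$ is itself an elementary closed path starting at $j$, so $g$ belongs to this gcd-generator and is therefore its smallest element. Lemma~\ref{lem:semigroup} then guarantees that every multiple of $c$ of size at least $c\bigl(\tfrac{g}{c}-1\bigr)\bigl(\tfrac{a_{\max}}{c}-1\bigr) = (g-c)(a_{\max}-c)/c$, with $a_{\max} \leq 2N-1$, lies in $\mathbf{N}_{j,j}$; shifting by $r$ shows that every multiple of $c$ at least $r + (g-c)(2N-1-c)/c$ lies in $\mathbf{N}_{i,i}$.

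The last step is the routine arithmetic identity $(2N-1-g) + (g-c)(2N-1-c)/c = 2Ng/c - g/c - 2g + c$, which, together with the inequality $r \leq 2N - 1 - g$, gives exactly the claimed threshold. I do not expect any real obstacle in this final calculation; the heart of the argument is the shift from an arbitrary node $i$ to a node $j$ lying on a girth cycle, which is precisely what turns the otherwise quadratic Brauer bound into one that is linear in $N$.
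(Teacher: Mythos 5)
Your proposal is correct and follows essentially the same route as the paper's own proof: both shift the Brauer argument from $i$ to a node $j$ on a girth cycle via the detour $\pi_1\cdot\gamma\cdot\pi_2$ with $\ell(\pi_1)\leqslant N-g$ and $\ell(\pi_2)\leqslant N-1$, invoke Lemma~\ref{lem:Ni,i:generator} at $j$ so that $g$ is the smallest generator, and apply Lemma~\ref{lem:semigroup} with the identical arithmetic identity. The only cosmetic difference is that you phrase the transfer as the inclusion $\mathbf{N}_{i,i}\supseteq r+\mathbf{N}_{j,j}$ with $r$ a multiple of $c$, whereas the paper states the equivalent divisibility observation directly.
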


\begin{proof}
Let $i$ be any node, and let $\gamma_0$ be any elementary closed path
	such that $\ell(\gamma_0) = g$.
Let~$\pi_1$ be one of the shortest paths from $i$ to $\gamma_0$,  and let $j= \End(\pi_1)$.
Without loss of generality, $\Start(\gamma_0)=j$.
By definition, $\ell(\pi_1) \leq N- g$.
Then consider an elementary path $\pi_2$ from $j$ to $i$; we have $\ell(\pi_2) \leq N-1$.
The path $\pi_1 \cdot \pi_2$ is closed at node $i$, and so $c$ divides 
	$\ell(\pi_1) + \ell(\pi_2)$.
Hence, if~$c$ divides some integer $n$, then $c$ also divides $n - \ell(\pi_1) - \ell(\pi_2)$.
It is $g\in \mathbf{N}_{j,j}$.
By Lemma~\ref{lem:Ni,i:generator}, there exists a gcd-generator~$\mathbf{N}_j$ of~$\mathbf{N}_{j,j}$ such that $g\in\mathbf{N}_j$ and $g\leqslant n\leqslant 2N-1$ for all $n\in\mathbf{N}_j$.

By Lemma~\ref{lem:semigroup}, for any~$n$ such that
	$n' = n - \ell(\pi_1) - \ell(\pi_2)$ is a multiple of~$c$ and
	$$ n' \geq c\,\left(\frac{g}{c} -1\right) \left(\frac{2N-1}{c} -1\right)\enspace,$$
	there exists a closed path $\gamma$ starting at node $j$ of length $\ell(\gamma) = n'$.
Note that
\[ c\left(\frac{g}{c}-1\right)\left(\frac{2N-1}{c}-1\right)+(N-g)+(N-1) = 2\frac{g}{c}N - \frac{g}{c} - 2g + c \enspace.\]
In this way, for any integer $n \geqslant 2Ng/c - g/c -2g+ c$ 
 	that is a multiple of $c$, we construct 
	$\pi = \pi_1 \cdot \gamma \cdot \pi_2$ that is a closed path at node $i$ of length $n$.
\end{proof}

We easily check that the upper bound on $\EP(G)$ given by Lemma~\ref{lem:bernadette:denardo}
 	is better than 
	the generalized Denardo bound~\cite{CBFN11b}, except when
        $c=1$, since then $2\le c\le g$ holds in the theorem below.

\begin{thm}\label{thm:EP}	
Let $G$ be a strongly connected graph with $N$ nodes, of girth $g$ and cyclicity $c$.
The exploration penalty of $G$, denoted $\EP$, is well-defined and satisfies the inequality 
	$$\EP \leq 2\frac{g}{c}N - \frac{g}{c} - 2g + c$$
	which can be improved to
	$$\EP \leq N+(N-2)g$$
	in case $G$ is a primitive graph.
\end{thm}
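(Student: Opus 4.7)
The plan is to observe that both inequalities in the theorem are essentially already contained in lemmas established earlier, and the theorem merely assembles them into the language of the exploration penalty.

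First I would handle well-definedness together with the general bound. By Lemma~\ref{lem:bernadette:denardo}, for every node $i$ and every multiple $n$ of $c$ with $n\geqslant 2gN/c-g/c-2g+c$, there exists a closed path starting at $i$ of length $n$. Since this holds uniformly in $i$, the integer $2gN/c-g/c-2g+c$ belongs to the set over which $\EP(G)$ is defined as the minimum, so $\EP(G)$ is finite and bounded above by this quantity. This yields the first inequality.

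For the improved bound in the primitive case, I would invoke Denardo's Lemma~\ref{lem:denardo} directly: when $c(G)=1$ the divisibility condition in the definition of the exploration penalty is vacuous, and Denardo's lemma guarantees that for every node $i$ and every $n\geqslant N+(N-2)g$ there exists a closed path of length $n$ starting at $i$. Hence $\EP(G)\leqslant N+(N-2)g$ in the primitive case. (The remark immediately preceding the theorem observes that, outside the primitive case, the first bound is actually the sharper one, which is why the two bounds are combined as stated rather than one subsuming the other.)

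I do not anticipate any real obstacle here: the substantive combinatorial work, namely constructing the closed paths via Brauer's theorem applied to a gcd-generator of $\mathbf{N}_{i,i}$ consisting of short closed paths, has already been carried out in Lemma~\ref{lem:bernadette:denardo}, and the primitive specialization has been carried out by Denardo. The theorem is essentially a repackaging statement, so the proof I would write would consist of little more than citing these two results and reading off the conclusion from the definition of $\EP(G)$.
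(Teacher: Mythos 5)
Your proposal is correct and matches the paper's intent exactly: the theorem carries no separate proof in the paper precisely because it is the assembly of Lemma~\ref{lem:bernadette:denardo} (which gives finiteness and the general bound uniformly in the start node) with Denardo's Lemma~\ref{lem:denardo} for the primitive case, where the divisibility constraint in the definition of $\EP$ is vacuous. Nothing is missing.
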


\section{Explorative Bound}\label{sec:explorative}

In this section, we show a bound on the transient~$n_{A,v}$ of linear max-plus systems for irreducible matrices~$A$ and vectors~$v$ with only finite entries, i.e., $v\in\IR^N$.
We construct arbitrarily long paths by {\em exploring\/} a critical
     component~$H$ in the sense of Section~\ref{sec:explorationpenality}; we hence call the resulting bound the {\em explorative bound}.
As we did in Section~\ref{sec:explorationpenality}, we distinguish the cases of whether the
     critical subgraph is primitive or not:
If~$c(H)=1$, we can find critical closed paths in~$H$ of arbitrary length~$t\geqslant \EP(H)$;
otherwise, it is only possible to find critical closed paths of length~$t$ if~$c(H)$ divides~$t$.
We start with the case of a primitive critical subgraph in Section~\ref{subsec:expl:prim}, because it allows for a simpler proof, as we do not have to consider the constraint that critical closed path lengths in critical component are necessarily multiples of~$c(H)$.

\subsection{Outline of the proof}\label{subsec:roadmap}

We now show how to establish an upper bound~$B(A,v)$ on the transient of the
sequence~$(A^{\otimes n}\otimes v)_i$
having the properties that (i) $B(A,v)$ is greater or equal to the critical bound~$B_\cc\big(G(A,v)\big)$ and (ii) $B(A,v)$ is invariant under substituting~$A$ by~$\overline{A}$.
By Lemmata~\ref{lem:hom} and~\ref{lem:transient:invariance}, and property~(ii), it suffices to
establish the upper bound in the case~$\varrho(A)=0$. 
Let $B=B(A,v)$.

First, we take any $\realrem{B}{n,p}$-realizer~$\pi$ for node~$i$ in en-weighted graph~$G=G(A,v)$; realizer~$\pi$  exists by Lemma~\ref{lem:realizers:exist}.
By property~(i) and Theorem~\ref{thm:n:zero}, we know that we can choose~$\pi$ to contain a critical node~$k$.
Depending on~$k$, we choose a modulus~$d$ dividing~$p(G)$.
In Section~\ref{subsec:expl:red}, we introduce a path reduction---a generalization of the simple part of a path---that preserves the residue class of the reduced path, i.e.,~$\ell(\pi)\equiv \ell(\hat{\pi}) \pmod{d}$ where~$\hat{\pi}$ is the reduced path of~$\pi$.
We apply our path reduction~$\Red_{d,k}$ to~$\pi$, obtaining path~$\hat{\pi}$.
Lemma~\ref{lem:red:upper:bound} in Section~\ref{subsec:expl:red} shows that the length of~$\hat{\pi}$ is less or equal to some bound~$B_{\Red}(d)$.
Further,~$k$ is a node of~$\hat{\pi}$ and $w(\hat{\pi})\geqslant w(\pi)$.

We then show that, for arbitrary multiples~$t$ of~$d$ such that $t\geqslant B-B_{\Red}(d)$, there exist critical closed paths starting at~$k$ of length~$t$.
It follows that for all~$n\geqslant B$, there exists an $\realrem{B}{n,p}$-realizer~$\pi_n$ for~$i$ of length~$n$.
Application of Lemma~\ref{lem:final:step} then concludes the proof.


\subsection{The case of primitive critical subgraphs}\label{subsec:expl:prim}


For a nontrivial strongly connected en-weighted graph~$G$ with primitive critical subgraph, define
\[ \Bep(G) = \max \biggr\{ \Bcnc(G) \ ,\ 2\cdot\CDD(G) +  \max_{H\in \SCC(G_c)} \EP(H) \biggr\} \enspace.\]


\begin{lem}\label{lem:realizer}
Let~$G$ be a nontrivial strongly connected en-weighted graph with primitive critical subgraph and~$\varrho(G)=0$; let~$B=\Bep(G)$ and~$p=p(G)$.
For all nodes~$i$ and all $n\geqslant B$ exists an $\realrem{B}{n,p}$-realizer for~$i$ of length~$n$.
\end{lem}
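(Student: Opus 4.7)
The plan is to carry out the roadmap sketched in Section~\ref{subsec:roadmap}, simplified by the fact that $c(H)=1$ for every critical component~$H$, so no residue constraint beyond $\ell\equiv n\pmod{p}$ needs to be tracked inside~$G_c$.

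First, I would invoke Lemma~\ref{lem:realizers:exist} to obtain some $\realrem{B}{n,p}$-realizer~$\pi$ for~$i$, whose length~$m$ satisfies $m\geqslant B$ and $m\equiv n\pmod{p}$. Since $m\geqslant B\geqslant \Bcnc(G)$, Theorem~\ref{thm:n:zero} says $\Pa^{m}(\ito,G)$ contains a maximum-weight path that passes through a critical node. Because~$\pi$ is a realizer, $w(\pi)=\sup_{r\in\realrem{B}{n,p}} w^r(\ito,G) \geqslant w^{m}(\ito,G)\geqslant w(\pi)$, so every maximum-weight path of length~$m$ is again a realizer; hence we may replace~$\pi$ and assume it contains some critical node~$k$. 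Let~$H$ be the critical component containing~$k$, and decompose $\pi=\sigma\cdot\tau$ with $\End(\sigma)=\Start(\tau)=k$.

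Next I would replace both halves by their simple parts to obtain $\hat{\pi}=\Simp(\sigma)\cdot\Simp(\tau)$. This is well-defined because $\End(\Simp(\sigma))=k=\Start(\Simp(\tau))$, and it has length at most $2\cdot\CDD(G)$ while still having~$k$ as an interior node. Crucially, since $\varrho(G)=0$, every nonempty closed path in~$G$ has nonpositive e-weight, so each step of the $\Simp$ construction removes a closed subpath of weight $\leqslant 0$ and consequently $w(\hat{\pi})\geqslant w(\pi)$.

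Now I would splice in a critical closed path at~$k$ to adjust the length to exactly~$n$. Set $t=n-\ell(\hat{\pi})$. From $n\geqslant B\geqslant 2\cdot\CDD(G)+\EP(H)$ and $\ell(\hat{\pi})\leqslant 2\cdot\CDD(G)$, we get $t\geqslant \EP(H)$; since $c(H)=1$, the definition of the exploration penalty supplies a closed path~$\gamma$ in~$H$ starting at~$k$ of length exactly~$t$. By Lemma~\ref{lem:paths:in:crit:comps:are:critical}, $\gamma$ is critical in~$G$, so $\wstar(\gamma)=\varrho(G)\cdot\ell(\gamma)=0$. Define $\hat{\pi}'=\Simp(\sigma)\cdot\gamma\cdot\Simp(\tau)$; then $\Start(\hat{\pi}')=i$, $\ell(\hat{\pi}')=n$, and $w(\hat{\pi}')=w(\hat{\pi})\geqslant w(\pi)$.

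Finally, $n\in\realrem{B}{n,p}$ gives $w(\hat{\pi}')\leqslant w^{n}(\ito,G)\leqslant \sup_{r\in\realrem{B}{n,p}} w^r(\ito,G)=w(\pi)$, so $w(\hat{\pi}')$ equals this supremum, making~$\hat{\pi}'$ an $\realrem{B}{n,p}$-realizer for~$i$ of length~$n$, as required. I expect the only mildly delicate step to be verifying that the reduction to simple parts does not lose the critical node or decrease the weight; both are handled cleanly by the choice $\End(\sigma)=k$ and by $\varrho(G)=0$. In the general (non-primitive) case, this choice of decomposition would not suffice because the inserted closed path must respect $c(H)$, which is what motivates the more elaborate $\Red_{d,k}$ reduction later on.
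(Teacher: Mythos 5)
Your proposal is correct and follows essentially the same route as the paper's proof: obtain a realizer through a critical node~$k$ via Theorem~\ref{thm:n:zero}, split it at~$k$, take simple parts of both halves (each of length at most $\CDD(G)$), and splice in a critical closed path of length $t=n-\ell(\hat{\pi})\geqslant\EP(H)$ supplied by the exploration penalty and $c(H)=1$. Your direct justification that $\Simp$ cannot decrease the weight (every closed path has nonpositive e-weight since $\varrho(G)=0$) is if anything slightly cleaner than the paper's appeal to Lemma~\ref{lem:varrho:nc}, but it is the same argument in substance.
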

\begin{proof}
By Lemma~\ref{lem:realizers:exist}, there exists an $\realrem{B}{n,p}$-realizer~$\pi$ for~$i$.
By Theorem~\ref{thm:n:zero}, we can choose~$\pi$ to contain a critical node~$k$, because $B\geqslant \Bcnc(G)$.
Let~$\pi=\pi_1\cdot\pi_2$ with~$\End(\pi_1)=k$.

Set $\hat{\pi}_1=\Simp(\pi_1)$ and $\hat{\pi}_2=\Simp(\pi_2)$.
Because~$\varrho(G)=0$, it is $\varrho_\nc(G)\leqslant 0$, and hence $\wstar(\hat{\pi}_1)\geqslant \wstar(\pi_1)$ and
     $w(\hat{\pi}_2)\geqslant w(\pi_2)$ by Lemma~\ref{lem:varrho:nc}.
The lengths of~$\hat{\pi}_1$ and~$\hat{\pi}_2$ satisfy $\ell(\hat{\pi}_1) \leqslant \CDD(G)$ and
     $\ell(\hat{\pi}_2) \leqslant \CDD(G)$, because the paths are simple.

Let~$H\in\SCC(G_\cc)$ be the strongly connected component of~$G_c$ in which critical node~$k$ is contained.
Note that~$H$ is primitive, i.e., $c(H)=1$.
Thus, for all $t\geqslant \EP(H)$ there exists a critical closed
     path~$\gamma_t$ of length~$t$ starting at node~$k$.

By Lemma~\ref{lem:paths:in:crit:comps:are:critical}, $\wstar(\gamma_t)=\varrho(G)=0$, and hence
\begin{equation}\label{eq:realizer:blab}
w(\hat{\pi}_1 \cdot \gamma_t \cdot \hat{\pi}_2) = \wstar(\hat{\pi}_1) + \wstar(\gamma_t) + w(\hat{\pi}_2) \geqslant \wstar(\pi_1) + w(\pi_2) = w(\pi)\enspace.
\end{equation}
Further, $\ell(\hat{\pi}_1 \cdot \gamma_t \cdot \hat{\pi}_2) =
     \ell(\hat{\pi}_1) + \ell(\hat{\pi}_2) + t$.

We set $t=n-\big(\ell(\hat{\pi}_1) + \ell(\hat{\pi}_2)\big)
     \geqslant B-2\cdot\CDD(G) \geqslant \EP(H)$ and $\pi_n =
     \hat{\pi}_1 \cdot \gamma_t \cdot \hat{\pi}_2$.
Figure~\ref{fig:lem:realizer} depicts path~$\pi_n$.
It follows that $\ell(\pi_n)=n$ and $w(\pi_n)\geqslant w(\pi)$.
Hence, because~$n\geqslant B$ and~$\Start(\pi_n)=\Start(\pi)=i$, $\pi_n$ is an $\realrem{B}{n,p}$-realizer for~$i$ of length~$n$.
\begin{figure}[ht]
\centering
\begin{tikzpicture}[>=latex']
	\node[shape=circle,draw] (i) at (-4,-1) {$\scriptstyle i$};
	\node[shape=circle,draw] (k) at (0,0) {$\scriptstyle k$};
	\node[shape=circle,draw] (j) at (4,-1.2) {};
	\draw[thick,->] (i) .. controls (-3,1) and (-2,-2)  .. node[midway,above]{$\hat{\pi}_1$}  (k);
	\draw[thick,->] (k) .. controls (2,0.2) and (3,-2)  .. node[midway,above]{$\hat{\pi}_2$}  (j);
	\draw[thick] (k) .. controls (2,1.2) and (-1,2)  .. node[near start,left]{$\gamma_t$}  (1,2);
	\draw[thick,->] (1,2) .. controls (3,2.2) and (-1,1)  ..  (k);
	\node[cloud, cloud puffs=16, draw,minimum width=3.5cm, minimum height=4cm] at (0.6,1) {};
	\node at (1.8,1.5) {$H$};
\end{tikzpicture}
\caption{Realizer~$\pi_n$ in proof of Lemma~\ref{lem:realizer}}
\label{fig:lem:realizer}
\end{figure}
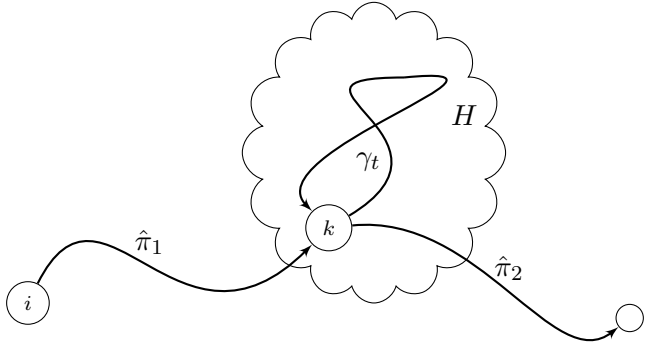
\end{proof}

\begin{thm}\label{thm:explorative:primitive}
For all irreducible $A\in\M_{N,N}(\IRmax)$ such that~$G_c(A)$ is primitive, and all~$v\in\IR^N$, we have 
$ n_{A,v} \leqslant \Bep\big( G(A,v) \big)$.
\end{thm}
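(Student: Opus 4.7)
The plan is to follow the roadmap laid out in Section~\ref{subsec:roadmap}. First I would verify that it suffices to treat the zero-rate case. Setting $\overline{A} = (-\varrho(A))\otimes A$, Lemma~\ref{lem:transient:invariance} yields $n_{\overline{A},v} = n_{A,v}$, and Lemma~\ref{lem:hom} shows that $G(\overline{A})$ has the same critical subgraph as $G(A)$, so in particular $G_c(\overline{A})$ remains primitive. Moreover, $\Bep(G(A,v)) = \Bep(G(\overline{A},v))$: the ingredients of $\Bep$ are either purely graph-theoretic quantities ($\CDD$, $\CDD_\nc$, $\CF_\cc$, $N_\nc$, and $\EP(H)$ for $H\in\SCC(G_c)$) or homothety-invariant weighted quantities ($\overline{\Delta}$, $\overline{\Delta}_\nc$, $\overline{\delta}$, $\overline{\varrho}_\nc$, together with $\lVert w_V\rVert$, which depends only on node weights), so the invariance follows from Lemma~\ref{lem:invariance}. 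Hence I may assume $\varrho(A)=0$.

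Under this assumption, the en-weighted graph $G = G(A,v)$ satisfies $\varrho(G)=0$, its critical subgraph is primitive, and Lemma~\ref{lem:realizer} applies directly: for every node~$i$ and every $n \geqslant B = \Bep(G)$, there exists an $\realrem{B}{n,p}$-realizer for~$i$ of length~$n$, where $p = p(G(A))$. Lemma~\ref{lem:final:step} then yields $w^{n+p}(\ito,G) = w^n(\ito,G)$ for all $n \geqslant B$ and all nodes~$i$.

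The conclusion is a translation back to the matrix setting. By Lemma~\ref{lem:path:formula}, $(A^{\otimes n}\otimes v)_i = w^n(\ito, G(A,v))$, so $x_{A,v}(n+p) = x_{A,v}(n)$ for every $n \geqslant B$; that is, $p$ is an eventual period of $x_{A,v}$ with period-weight $0 = p\,\varrho(A)$, and $B$ bounds the corresponding transient. Since the transient of an eventually periodic sequence is independent of the chosen admissible period (as established in Section~\ref{subsec:period}), we conclude $n_{A,v} \leqslant B = \Bep(G(A,v))$.

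The bulk of the technical work---constructing realizers by splicing short simple paths into a long critical closed path guaranteed by the exploration penalty---has already been packaged into Lemma~\ref{lem:realizer}, which depends on the primitivity of $G_c$. The only new verification needed here is the homothety invariance of $\Bep$; the remainder is a routine assembly of Lemmata~\ref{lem:realizer}, \ref{lem:final:step}, and~\ref{lem:path:formula} together with the remarks on admissible periods from Section~\ref{subsec:period}.
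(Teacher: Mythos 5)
Your proposal is correct and follows essentially the same route as the paper's proof: reduce to the zero-rate case via Lemmata~\ref{lem:transient:invariance} and~\ref{lem:invariance}, invoke Lemma~\ref{lem:realizer} to produce realizers of every length $n\geqslant\Bep(G)$, and conclude with Lemmata~\ref{lem:final:step} and~\ref{lem:path:formula}. The extra details you supply---checking that primitivity of the critical subgraph survives the homothety and itemizing why $\Bep$ is homothety-invariant---are correct and merely make explicit what the paper leaves implicit.
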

\begin{proof}
By Lemma~\ref{lem:transient:invariance} and Lemma~\ref{lem:invariance}, both~$n_{A,v}$ and~$\Bep\big(G(A,v)\big)$ are invariant under substituting~$A$ by~$\overline{A}$. 
We may hence assume without loss of generality that~$\varrho(A)=0$, i.e., $\varrho\big(G(A,v)\big)=0$.

Let~$B=\Bep\big(G(A,v)\big)$.
Graph~$G(A,v)$ is nontrivial and strongly connected, because~$A$ is irreducible.
By Lemma~\ref{lem:realizer}, for every node~$i$ and every~$n\in\realrem{B}{n,p}$, there exists an $\realrem{B}{n,p}$-realizer for~$i$ of length~$n$.
Lemma~\ref{lem:final:step} hence implies that the transient of sequence~$\big(w^n(\ito)\big)_n$ in graph~$G(A,v)$ is at most~$B$.
But, by Lemma~\ref{lem:path:formula}, the transient of~$\big(w^n(\ito)\big)_n$ in~$G(A,v)$ is equal to~$n_{A,v}$.
This concludes the proof.
\end{proof}

\begin{cor}\label{cor:explorative:primitive}
For all irreducible $A\in\M_{N,N}(\IRmax)$ such that~$G_c(A)$ is primitive, and all~$v\in\IR^N$, we have 
\[ n_{A,v} \leqslant \max \left\{ \frac{\lVert v \rVert + \big(\Delta_\nc(G) - \delta(G)\big) \cdot(N-1) }{ \varrho(G) - \varrho_\nc(G)}  \ ,\ 2\cdot(N-1) + \max_{H\in\SCC(G_c)} \EP(H)  \right\} \enspace,\]
where~$G=G(A)$.
\end{cor}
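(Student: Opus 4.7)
The plan is to derive the corollary directly from Theorem~\ref{thm:explorative:primitive} by bounding each of the two terms that appear inside the $\max$ in the definition of~$\Bep$. By Theorem~\ref{thm:explorative:primitive}, under the hypotheses of the corollary we already have $n_{A,v} \leqslant \Bep\big(G(A,v)\big)$, and
\[ \Bep\big(G(A,v)\big) = \max\biggr\{\, \Bcnc\big(G(A,v)\big)\ ,\ 2\cdot\CDD\big(G(A,v)\big) + \max_{H\in\SCC(G_c)}\EP(H)\, \biggr\}\enspace, \]
so it remains to bound each of these two arguments.

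For the first term I would invoke Corollary~\ref{cor:n:zero} applied to $G(A,v)$, which yields
\[ \Bcnc\big(G(A,v)\big) \leqslant \frac{\lVert w_V\rVert + \big(\Delta_\nc(G(A,v))-\delta(G(A,v))\big)(N-1)}{\varrho(G(A,v))-\varrho_\nc(G(A,v))}\enspace. \]
Here I would use that the edge weights of $G(A,v)$ coincide with those of $G=G(A)$ by construction, so $\Delta_\nc$, $\delta$, $\varrho$, and $\varrho_\nc$ are the same whether computed on $G(A,v)$ or on $G(A)$; and I would use that the node-weight function of $G(A,v)$ is $w_V(i)=v_i$, so $\lVert w_V\rVert = \lVert v\rVert$. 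This matches the first argument of the $\max$ in the corollary's statement.

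For the second term I would simply note that any simple path in $G(A,v)$ has length at most $N-1$ (since $G(A,v)$ has $N$ nodes), hence $\CDD\big(G(A,v)\big)\leqslant N-1$, giving
\[ 2\cdot\CDD\big(G(A,v)\big) + \max_{H\in\SCC(G_c)}\EP(H) \leqslant 2(N-1) + \max_{H\in\SCC(G_c)}\EP(H)\enspace. \]
The critical subgraph $G_c$ of $G(A,v)$ and of $G(A)$ coincide, since criticality is determined purely by the edge weights, so the inner maximum is the same as the one in the statement.

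There is no real obstacle here: the corollary is a direct coarsening of Theorem~\ref{thm:explorative:primitive}, and the only thing to be careful about is keeping track of which graph ($G(A)$ or $G(A,v)$) each parameter is evaluated on and verifying that the edge-weight-dependent quantities are invariant under adjoining the node weights~$v$.
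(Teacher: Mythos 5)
Your proposal is correct and is essentially the paper's own (implicit) derivation: the corollary is obtained from Theorem~\ref{thm:explorative:primitive} by bounding $\Bcnc$ via Corollary~\ref{cor:n:zero} and bounding $\CDD$ by $N-1$, with the observation that the edge-weight-dependent parameters and the critical subgraph of $G(A,v)$ coincide with those of $G(A)$ and that $\lVert w_V\rVert=\lVert v\rVert$. Nothing is missing.
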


\subsection{Our second path reduction}\label{subsec:expl:red}

In the case that the critical subgraph is not primitive, i.e., if there exist critical components~$H$ with~$c(H)>1$, our first path reduction---the simple part of a path---is not sufficient for us, because we can no longer choose a critical path~$\gamma_t$ in~$H$ of arbitrary length~$t\geqslant \EP(H)$, but only of lengths that are multiples of~$c(H)$.
In this section, we thus introduce a generalization of our first path
     reduction~$\Simp(\pi)$, which is able to preserve the residue class modulo~$c(H)$ of the path lengths.
The idea is to repeatedly remove {\em collections\/} of closed subpaths of~$\pi$ whose combined length is a multiple of~$c(H)$.

The generalized path reduction, $\Red_{d,k}(\pi)$, has two additional
     parameters: a modulus~$d$ and a node~$k$ of~$\pi$.
The reduction is defined in such a way that (a) the reduced
     path's length is in the same residue class modulo~$d$ as
     path~$\pi$'s length, and (b) node~$k$ is a node of the reduced path.
Our first path reduction is a special case of this second path
     reduction when setting~$d=1$ and~$k$ to be the start node
     of~$\pi$.

Let~$G$ be a graph and let~$\pi$ be a path in~$G$.
Let~$\mathcal{S}$ be a finite multiset of subpaths of~$\pi$.
We say that~$\mathcal{S}$ is {\em disjoint\/} if there exist
     paths~$\sigma_0,\sigma_1,\dots,\sigma_n$ such that 
\begin{equation}\label{eq:def:disjoint}
\pi = \sigma_0\cdot\pi_1\cdot\sigma_1\cdots\pi_n\cdot\sigma_n
\end{equation}
where $\mathcal{S} = \{ \pi_1,\pi_2,\dots,\pi_n \}$.

For a disjoint multiset~$\mathcal{S}$ of {\em closed\/} subpaths of~$\pi$, define
     \[ \Rem(\pi,\mathcal{S}) = \sigma_0\cdot\sigma_1\cdots\sigma_n \] 
where the~$\sigma_l$ are chosen to fulfill Equation~\eqref{eq:def:disjoint}.
The paths~$\pi$ and~$\Rem(\pi,\mathcal{S})$ have the same start and end nodes, respectively.
Furthermore $\ell\big(\!\Rem(\pi,\mathcal{S})\big) = \ell(\pi) - L(\mathcal{S})$
      where $L(\mathcal{S}) = \sum_{\gamma\in\mathcal{S}}\ell(\gamma)$.
In particular,~$\Rem(\pi,\mathcal{S})=\pi$ if and only if~$L(\mathcal{S})=0$.
Hence if~$\Rem(\pi,\mathcal{S})\neq\pi$, then necessarily $\ell\big(\! \Rem(\pi,\mathcal{S}) \big) < \ell(\pi)$.

For a path~$\pi$ and a node~$k$ of~$\pi$, denote by~${\mathbf S}_k(\pi)$ the set of
     disjoint multisets~$\mathcal{S}$ of elementary closed subpaths of~$\pi$ such that~$k$ is a node of~$\Rem(\pi,\mathcal{S})$.     
For a path~$\pi$, a node~$k$ of~$\pi$, and a positive integer~$d$, define
\[{\mathbf S}_{d,k}(\pi) = \left\{ \mathcal{S} \in {\mathbf S}_k(\pi) \mid L(\mathcal{S}) \equiv 0 \pmod d \right\}\enspace.\] 
This set is never empty, because~$k$ is a node of~$\pi$ and we can hence choose~$\mathcal{S}$ to be empty.

Choose~$\mathcal{S}\in {\mathbf S}_{d,k}(\pi)$ such that~$L(\mathcal{S})$ is maximal.
Then set $\Step_{d,k}(\pi)=\Rem(\pi,\mathcal{S})$ and \[\Red_{d,k}(\pi) = \lim_{t\to\infty}
     \Step_{d,k}^t(\pi)\enspace.\] 
The construction of~$\Red_{d,k}(\pi)$ takes a finite number of (at most~$\ell(\pi)$) steps, hence~$\Red_{d,k}(\pi)$ is well-defined.
It is~$\Red_{d,k}(\pi) = \pi$ if and only if~$L(\mathcal{S})=0$ for all~$\mathcal{S}\in\mathbf{S}_{d,k}(\pi)$.
The paths $\pi$ and $\Red_{d,k}(\pi)$ have the same start and end nodes, respectively.
Also,~$k$ is a node of~$\Red_{d,k}(\pi)$ and
\[ \ell\big(\! \Red_{d,k}(\pi) \big) \equiv \ell(\pi) \pmod d \enspace. \]
Finally, whenever~$G$ is an en-weighted graph with~$\varrho(G)=0$, $w\big(\! \Rem(\pi,\mathcal{S}) \big) \geqslant w(\pi)$, because we repeatedly remove closed subpaths.

The following lemma is a well-known elementary application of the pigeonhole principle.
Erd\H{o}s attributed it to V{\'a}zsonyi and Sved. \cite[p.~133]{proofs:from:the:book}

\begin{lem}\label{lem:erdos}
Let $d$ be a positive integer and let $x_1,\dots,x_d\in\mathds{Z}$.
Then there exists a nonempty $I\subseteq\{1,\dots,d\}$ such that  
\[ \sum_{i\in I} x_i \equiv 0 \pmod d\enspace. \]
\end{lem}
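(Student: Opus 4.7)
The plan is to apply the pigeonhole principle to the sequence of prefix sums. Specifically, I would define $S_0 = 0$ and $S_j = x_1 + x_2 + \cdots + x_j$ for $1 \leq j \leq d$, giving $d+1$ integers in total. Reducing these modulo $d$ yields $d+1$ residues drawn from the $d$-element set $\{0, 1, \ldots, d-1\}$, so by the pigeonhole principle there exist indices $0 \leq j < k \leq d$ with $S_j \equiv S_k \pmod{d}$.

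Having secured such a pair, I would set $I = \{j+1, j+2, \ldots, k\}$. This set is nonempty because $j < k$, and it is a subset of $\{1, \ldots, d\}$ because $k \leq d$ and $j + 1 \geq 1$. By construction,
\[
\sum_{i \in I} x_i \;=\; S_k - S_j \;\equiv\; 0 \pmod{d},
\]
which is exactly the desired conclusion.

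There is no real obstacle here; the only thing to be careful about is the bookkeeping, namely that including $S_0 = 0$ is what turns the $d$ given integers into $d+1$ prefix sums, which is essential for the pigeonhole step to produce two coincident residues. The argument is self-contained and does not rely on any of the earlier graph-theoretic or max-plus machinery in the paper.
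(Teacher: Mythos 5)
Your proof is correct and is essentially the paper's own argument: both apply the pigeonhole principle to the prefix sums $S_k=\sum_{i=1}^k x_i$. The only cosmetic difference is that you adjoin $S_0=0$ to get $d+1$ residues at once, whereas the paper omits $S_0$ and instead argues by cases (either two of $S_1,\dots,S_d$ agree modulo $d$, or the map $k\mapsto S_k\bmod d$ is injective, hence surjective, so some $S_{k_0}\equiv 0$) --- the latter case being exactly your $j=0$ case.
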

\begin{proof}
For $1\leqslant k\leqslant d$, set $S_k = \sum_{i=1}^k x_i$.
If there exist $k<\ell$ with $S_k\equiv S_\ell \pod d$, then set
     $I=\{k+1,k+2,\dots,\ell\}$.
Otherwise the mapping $k \mapsto (S_k \bmod d)$ is injective, hence
     surjective onto $\{0,1,\dots,d-1\}$, which implies that there
     exists a $k_0$ with $S_{k_0}\equiv 0 \pod d$.
In this case, set $I=\{1,2,\dots,k_0\}$.
\end{proof}

With the help of Lemma~\ref{lem:erdos}, we can prove the following upper bound on the length of the reduced path~$\Red_{d,k}(\pi)$.

\begin{lem}\label{lem:red:upper:bound}
Let~$G$ be a graph with~$N$ nodes.
For all positive integers~$d$, all nodes~$k$, and all paths~$\pi$ that contain node~$k$,
\[ \ell\big(\!\Red_{d,k}(\pi) \big)  \leqslant (d-1)\cdot \CF(G) + (d+1) \cdot \CDD(G) \enspace. \]
\end{lem}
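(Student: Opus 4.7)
The plan is to argue by contradiction. Set $\hat\pi = \Red_{d,k}(\pi)$ and suppose $\ell(\hat\pi) > (d-1)\CF(G) + (d+1)\CDD(G)$. Since $\hat\pi$ is a fixed point of $\Step_{d,k}$, every $\mathcal{S}\in \mathbf{S}_{d,k}(\hat\pi)$ must satisfy $L(\mathcal{S}) = 0$; I will exhibit a \emph{nonempty} such $\mathcal{S}$, yielding the desired contradiction.

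Because $k$ is a node of $\hat\pi$, split $\hat\pi = \hat\pi_1 \cdot \hat\pi_2$ with $\End(\hat\pi_1) = k = \Start(\hat\pi_2)$. On each half, I iteratively apply Lemma~\ref{lem:path:split}, always extracting the elementary closed subpath~$\gamma$ that corresponds to the \emph{leftmost} repeated node of the current path; choosing the leftmost repeat ensures that the prefix preceding the extracted $\gamma$ is itself simple. Iterating until nothing more can be extracted yields, for each $l\in\{1,2\}$, a decomposition
\[ \hat\pi_l = \sigma_0^{(l)} \cdot \gamma_1^{(l)} \cdot \sigma_1^{(l)} \cdots \gamma_{n_l}^{(l)} \cdot \sigma_{n_l}^{(l)} \enspace, \]
where each $\sigma_i^{(l)}$ is simple (so $\ell(\sigma_i^{(l)}) \leq \CDD(G)$) and each $\gamma_j^{(l)}$ is a nonempty elementary closed subpath (so $\ell(\gamma_j^{(l)}) \leq \CF(G)$). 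Crucially, each $\gamma_j^{(l)}$ is a \emph{contiguous} block of $\hat\pi_l$ and therefore of $\hat\pi$.

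Writing $n = n_1+n_2$, summing the two length bounds gives
\[ \ell(\hat\pi) \leq (n+2)\CDD(G) + n\CF(G)\enspace. \]
If $n\leq d-1$, the right-hand side is at most $(d+1)\CDD(G)+(d-1)\CF(G)$, contradicting our assumption; so $n\geq d$. Pick any $d$ of the $\gamma_j^{(l)}$'s and apply Lemma~\ref{lem:erdos} to their lengths to obtain a nonempty subset $\mathcal{S}$ with $L(\mathcal{S})\equiv 0\pmod d$. The members of $\mathcal{S}$ are pairwise non-overlapping contiguous subpaths of $\hat\pi$, so $\mathcal{S}$ is a disjoint multiset of elementary closed subpaths, and every element lies strictly inside one of $\hat\pi_1,\hat\pi_2$, so $k$ remains a node of $\Rem(\hat\pi,\mathcal{S})$. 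Hence $\mathcal{S}\in\mathbf{S}_{d,k}(\hat\pi)$ with $L(\mathcal{S})>0$, contradicting $\Step_{d,k}(\hat\pi)=\hat\pi$.

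The principal difficulty is producing a decomposition whose closed pieces are genuinely contiguous subpaths of $\hat\pi$: a naive iterated application of $\Simp$ extracts closed subpaths from intermediate reduced paths, which typically do not correspond to contiguous blocks of the original and therefore fail the disjointness requirement in the definition of $\mathbf{S}_k$. Operating separately on $\hat\pi_1$ and $\hat\pi_2$ and always splitting at the leftmost repeat sidesteps this issue, and the simplicity of the resulting $\sigma$-pieces is exactly what produces the coefficient $d+1$ rather than a larger constant in the final bound.
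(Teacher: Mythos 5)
Your overall strategy is sound and runs parallel to the paper's: decompose the fixed point $\hat\pi=\Red_{d,k}(\pi)$ into at most $d-1$ contiguous elementary closed blocks separated by simple segments, bound the two contributions by $(d-1)\CF(G)$ and $(d+1)\CDD(G)$, and use Lemma~\ref{lem:erdos} to show that $d$ or more closed blocks would contradict $\Step_{d,k}(\hat\pi)=\hat\pi$. The gap sits exactly at the step you yourself flag as the principal difficulty: the claim that iterating Lemma~\ref{lem:path:split} on ``the current path'' with the leftmost-repeat rule produces $\gamma_j^{(l)}$'s that are contiguous blocks of $\hat\pi_l$. If, as in $\Simp$, each iteration operates on the \emph{reduced} path obtained by deleting the previously extracted loops, this is false, and neither the prior split at $k$ nor the leftmost-repeat rule repairs it. Take the node sequence $a,b,c,b,a,d$ with edges $e_1,\dots,e_5$: the first extraction removes $b\to c\to b$ (edges $e_2,e_3$), the reduced path is $a,b,a,d$, and the second extraction removes $a\to b\to a$, which consists of the non-adjacent edges $e_1$ and $e_4$ of the original walk. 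That second loop is not a subpath of $\hat\pi_l$ in the paper's sense, so the resulting multiset is not disjoint and cannot be placed in $\mathbf{S}_{d,k}(\hat\pi)$.

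The argument is repairable in two ways. (i) Change the iteration so that after extracting the elementary closed path ending at the first repeated position you recurse on the remaining \emph{suffix} of the original walk rather than on the reduced path; equivalently, do a single left-to-right scan that resets its set of seen nodes after each extraction. This does yield a decomposition of $\hat\pi_l$ into alternating simple segments and contiguous elementary loops, after which your counting, your application of Lemma~\ref{lem:erdos}, and your check that $k$ survives the removal all go through. (ii) Avoid the explicit construction altogether, as the paper does: take $\hat{\mathcal S}\in\mathbf{S}_k(\hat\pi)$ maximizing $L(\hat{\mathcal S})$ over all of $\mathbf{S}_k(\hat\pi)$ (no congruence constraint); maximality, via Lemma~\ref{lem:path:split}, forces every gap $\sigma_m$ with $m\neq r$ to be simple and the gap containing $k$ to split into two simple pieces, and Lemma~\ref{lem:erdos} caps $\lvert\hat{\mathcal S}\rvert$ at $d-1$. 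The paper's existence-by-maximality argument sidesteps the contiguity pitfall entirely, which is why it never needs the delicate scanning order your construction depends on.
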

\begin{proof}
Write $\hat{\pi} = \Red_{d,k}(\pi)$.
Let $\hat{\mathcal{S}}\in {\mathbf S}_{k}(\hat{\pi})$ such that $L(\hat{\mathcal{S}})$ is maximal.
Further, let~$\mathcal{S}$ be the sub-multiset of the {\em nonempty\/} closed paths of~$\hat{\mathcal{S}}$.
It is~$L(\mathcal{S}) = L(\hat{\mathcal{S}})$.
Because $\hat{\mathcal{S}}\in{\mathbf S}_k(\hat{\pi})$ and $\mathcal{S}\subseteq\hat{\mathcal{S}}$, it follows that $\mathcal{S}\in{\mathbf S}_k(\hat{\pi})$.

We first show that $\lvert \mathcal{S}\rvert \leqslant d-1$:
 Otherwise, by
     Lemma~\ref{lem:erdos}, there exists a
     nonempty~$\mathcal{S}'\subseteq\mathcal{S}$
     with $L(\mathcal{S}')=\sum_{\gamma\in\mathcal{S}'} \ell(\gamma)\equiv 0\pmod d$; hence~$\mathcal{S}'\in\mathbf{S}_{d,k}(\hat{\pi})$ with $L(\mathcal{S}')>0$, a contradiction to~$\Red_{d,k}(\hat{\pi})=\hat{\pi}$.
We have thus proved
\begin{equation}\label{eq:t:upper:bound}
\lvert \mathcal{S}\rvert \leqslant d-1\enspace.
\end{equation}
Every~$\gamma\in \mathcal{S}$ is elementary, therefore~$\ell(\gamma)\leqslant \CF(G)$ and
     thus 
\begin{equation}\label{eq:ls:upper}
L(\mathcal{S}) \leqslant (d-1)\cdot \CF(G)\enspace.
\end{equation}

Let $n=\lvert\mathcal{S}\rvert$, $\mathcal{S} = \{\gamma_1,\dots,\gamma_n\}$, and 
\[ \hat{\pi} = \sigma_0 \cdot \gamma_1 \cdot \sigma_1 \cdots \gamma_n \cdot \sigma_n \enspace. \]
Because~$k$ is a node of~$\Rem(\hat{\pi},\mathcal{S})$, there exists an~$r$ such that~$k$ is a node of~$\sigma_r$.
Figure~\ref{fig:lem:red:upper:bound} shows this decomposition of path~$\hat{\pi}$.
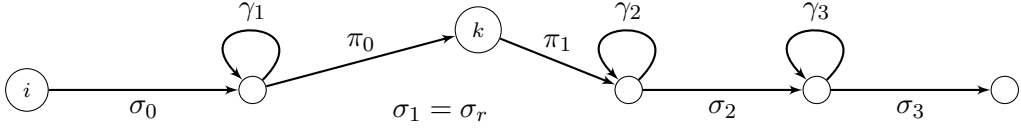
\begin{figure}[ht]
\centering
\begin{tikzpicture}[>=latex']
	\node[shape=circle,draw] (i) at (-6,0) {$\scriptstyle i$};
	\node[shape=circle,draw] (k) at (0,.8) {$\scriptstyle k$};
	\node[shape=circle,draw] (j) at (7,0) {};
	\node[shape=circle,draw] (n1) at (4.5,0) {};
	\node[shape=circle,draw] (n2) at (2,0) {};
	\node[shape=circle,draw] (n3) at (-3,0) {};
	\node  at (-0.5,-0.3) {$\sigma_1=\sigma_r$};
	\draw[thick,->] (i) -- node[below] {$\sigma_0$} (n3);
	\draw[thick,->] (n3) -- node[above] {$\pi_0$} (k);
	\draw[thick,->] (k) -- node[above] {$\pi_1$} (n2);
	\draw[thick,->] (n2) -- node[below] {$\sigma_2$} (n1);
	\draw[thick,->] (n1) -- node[below] {$\sigma_3$} (j);
	\draw[thick,->] (n3) .. controls +(1,1) and +(-1,1) .. node[above] {$\gamma_1$} (n3);
	\draw[thick,->] (n2) .. controls +(1,1) and +(-1,1) .. node[above] {$\gamma_2$} (n2);
	\draw[thick,->] (n1) .. controls +(1,1) and +(-1,1) .. node[above] {$\gamma_3$} (n1);
\end{tikzpicture}
\caption{Path~$\hat{\pi}$ in proof of Lemma~\ref{lem:red:upper:bound}}
\label{fig:lem:red:upper:bound}
\end{figure}

If~$m\neq r$, then we show that~$\sigma_m$ is simple:
Otherwise by Lemma~\ref{lem:path:split}, there exists a nonempty elementary closed subpath~$\gamma'$ of $\sigma_m$.
But then $\mathcal{S}'=\mathcal{S}\cup\{\gamma'\}\in {\mathbf S}_{k}(\hat{\pi})$, because~$k$ is a node of~$\sigma_r$ and~$m\neq r$; a contradiction to maximality of~$L(\mathcal{S})$, because $L(\mathcal{S}') > L(\mathcal{S})$.
Hence 
\begin{equation}\label{eq:node:disj:upper:bound}
\ell(\sigma_m)\leqslant \CDD(G) \ \text{ for }\ m\neq r\enspace.
\end{equation}

We now show that $\ell(\sigma_r)\leqslant 2\cdot \CDD(G)$:
Let~$\sigma_r=\pi_0\cdot\pi_1$ with $\End(\pi_0)=k$.
Then~$\pi_0$ and~$\pi_1$ are simple, because otherwise, if~$\gamma_t'$ is a nonempty elementary closed subpath of~$\pi_t$,~$t\in\{0,1\}$, then $\mathcal{S}'=\mathcal{S}\cup\{\gamma_t'\}\in{\mathbf S}_k(\hat{\pi})$, because~$k$ is a node of~$\pi_{1-t}$; a contradiction to the maximality of~$L(\mathcal{S})$, because  $L(\mathcal{S}') > L(\mathcal{S})$.
We have thus proved that  
\begin{equation}\label{eq:nnd:upper:bound}
\ell(\sigma_r)\leqslant 2\cdot\CDD(G)\enspace. 
\end{equation}

We note that $n=\lvert\mathcal{S}\rvert\leqslant d-1$ by~\eqref{eq:t:upper:bound}. Thus combination of 
     \eqref{eq:node:disj:upper:bound} and~\eqref{eq:nnd:upper:bound}
     yields
\begin{equation}
\sum_{m=0}^n \ell(\sigma_m) \leqslant (d+1)\cdot \CDD(G) \enspace,
\end{equation}
which, in turn, yields by combination with~\eqref{eq:ls:upper}:
\begin{equation*}
\ell(\hat{\pi}) = L(\mathcal{S}) + \sum_{m=0}^n \ell(\sigma_m) \leqslant (d-1)\cdot\CF(G) + (d+1)\cdot\CDD(G)
\end{equation*}
This concludes the proof.
\end{proof}

\subsection{Extension to the general case}\label{subsec:expl:nonprim}

In this section, we generalize the explorative bound to the case of a not necessarily primitive critical subgraph.
For that, we use the generalized path reduction~$\Red_{d,k}(\pi)$.

For a nontrivial strongly connected en-weighted graph~$G$ define
\[ \Benp(G) = \max \left\{ \Bcnc(G)\ ,\ \big(d(G_c)-1\big)\cdot \CF(G) + \big(d(G_c)+1\big)\cdot \CDD(G) + \max_{H\in\SCC(G_c)} \EP(H) \right\} \enspace. \]
This definition generalizes the previous definition of~$\Bep(G)$, which was stated for the case of a primitive critical subgraph.

\begin{lem}\label{lem:realizer:gen}
Let~$G$ be a nontrivial strongly connected en-weighted graph with~$\varrho(G)=0$;
let~$B=\Benp(G)$ and~$p=p(G)$.
For all nodes~$i$ and all $n\geqslant B$ exists an $\realrem{B}{n,p}$-realizer for~$i$ of length~$n$.
\end{lem}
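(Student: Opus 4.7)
The plan is to mirror the primitive-case proof of Lemma~\ref{lem:realizer} but substitute the simple-part reduction by the modulus-preserving reduction $\Red_{d,k}$ from Section~\ref{subsec:expl:red}. Using Lemma~\ref{lem:realizers:exist} I would first pick any $\realrem{B}{n,p}$-realizer $\pi$ for $i$; since $B\geqslant \Bcnc(G)$ by definition of $\Benp(G)$, Theorem~\ref{thm:n:zero} lets me choose $\pi$ so that it visits a critical node $k$. Let $H\in\SCC(G_c)$ be the critical component containing $k$ and set $d=c(H)$, so that $d\leqslant d(G_c)$.

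Next I would apply $\Red_{d,k}$ to $\pi$ to obtain $\hat{\pi}$. By the basic properties listed at the end of Section~\ref{subsec:expl:red}, $\hat{\pi}$ has the same start and end nodes as $\pi$, contains $k$, satisfies $\ell(\hat{\pi})\equiv \ell(\pi)\pmod{d}$, and, because $\varrho(G)=0$ (so every closed subpath has nonpositive $\wstar$-weight), $w(\hat{\pi})\geqslant w(\pi)$. Lemma~\ref{lem:red:upper:bound} bounds its length by $(d-1)\CF(G)+(d+1)\CDD(G)$, hence by $(d(G_c)-1)\CF(G)+(d(G_c)+1)\CDD(G)$. I would then split $\hat{\pi}=\hat{\pi}_1\cdot\hat{\pi}_2$ at an occurrence of $k$, set $t=n-\ell(\hat{\pi})$, and aim to insert a critical closed path $\gamma_t$ of length $t$ starting at $k$ inside $H$, forming $\pi_n=\hat{\pi}_1\cdot\gamma_t\cdot\hat{\pi}_2$.

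The crux is verifying that such a $\gamma_t$ exists, which by the definition of $\EP(H)$ amounts to checking the two conditions $t\geqslant \EP(H)$ and $c(H)\mid t$. The first follows directly from the definition of $\Benp(G)$ and the length bound above:
\[ t = n-\ell(\hat{\pi}) \geqslant B - \bigl[(d(G_c)-1)\CF(G)+(d(G_c)+1)\CDD(G)\bigr] \geqslant \max_{H'\in\SCC(G_c)} \EP(H') \geqslant \EP(H). \]
For the second I would observe that $c(H)$ divides the length of any elementary closed path of $H$, and that this length in turn divides $p(G)$, so $c(H)\mid p(G)$; combining $\ell(\pi)\equiv n\pmod{p}$ (from $\pi$ being an $\realrem{B}{n,p}$-realizer) with $\ell(\hat{\pi})\equiv \ell(\pi)\pmod{c(H)}$ yields $t\equiv 0\pmod{c(H)}$. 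This divisibility step is the only delicate point in the argument, and it is precisely the reason why the paper works with $p(G)$ rather than with the a priori natural period $c(A)$ coming from Theorem~\ref{thm:perron}.

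Finally, Lemma~\ref{lem:paths:in:crit:comps:are:critical} gives $\wstar(\gamma_t)=0$, so $w(\pi_n)=w(\hat{\pi})\geqslant w(\pi)$ and $\ell(\pi_n)=n$, showing that $\pi_n$ is itself an $\realrem{B}{n,p}$-realizer for $i$ of length $n$.
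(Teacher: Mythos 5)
Your proposal is correct and follows essentially the same route as the paper's proof: reduce the realizer with $\Red_{c(H),k}$, use $c(H)\mid p(G)$ together with Lemma~\ref{lem:red:upper:bound} and the definition of $\Benp$ to get a valid length $t$ for a critical closed path at $k$ in $H$, and reinsert it. All the key steps, including the divisibility argument via $p(G)$, match the paper's argument.
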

\begin{proof}
By Lemma~\ref{lem:realizers:exist}, there exists an $\realrem{B}{n,p}$-realizer~$\pi$ for~$i$, i.e., $\pi\in P(\ito)$ is of maximum weight such that
     $\ell(\pi)\geqslant B$ and $\ell(\pi) \equiv n \pmod {p} $.
By Theorem~\ref{thm:n:zero}, we can choose~$\pi$ such that there is a critical node~$k$
     on~$\pi$, because~$B\geqslant \Bcnc(G)$.
Let~$H\in\SCC(G_c)$ be the critical component of~$k$.

Set $\hat{\pi} = \Red_{c(H),k}(\pi)$.
It is $\ell(\hat{\pi}) \equiv \ell(\pi) \pod {c(H)}$ and also $\ell(\pi)\equiv n \pod{c(H)}$, because~$c(H)$ divides~$p(G)$.
Hence $\ell(\hat{\pi})\equiv n \pod{c(H)}$, i.e., there exists an
     integer~$m$ such that $n = \ell(\hat{\pi}) + m\cdot
     c(H)$.
It follows from Lemma~\ref{lem:red:upper:bound} and $c(H)\leqslant d(G_c)$ that $m\cdot c(H)
     = n - \ell(\hat{\pi}) \geqslant \EP(H)$; in particular,~$m$ is nonnegative.
Hence, by Lemma~\ref{lem:bernadette:denardo}, there exists a closed
     path~$\gamma_t$ of length~$t=m\cdot c(H)$ in~$H$ starting at
     node~$k$.
By Lemma~\ref{lem:paths:in:crit:comps:are:critical}, $\wstar(\gamma_t)=\varrho(G)=0$.

Let~$\hat{\pi}=\hat{\pi}_1\cdot\hat{\pi}_2$ with~$\End(\hat{\pi}_1)=k$ and set~$\pi_n =
     \hat{\pi}_1\cdot \gamma_t\cdot \hat{\pi}_2$.
Path~$\pi_n$ has the same form as in Figure~\ref{fig:lem:realizer}.
Then $\ell(\pi_n) = \ell(\hat{\pi}) + \ell(\gamma_t) = n$
and \[w(\pi_n) = \wstar(\hat{\pi}_1) + \wstar(\gamma_t) + w(\hat{\pi}_2) = w(\hat{\pi}_1\cdot\hat{\pi}_2) = w(\hat{\pi}) \geqslant w(\pi)\enspace.\]
Hence, because~$\Start(\pi_n)=\Start(\pi)=i$, $\pi_n$ is an $\realrem{B}{n,p}$-realizer for~$i$ of length~$n$.
\end{proof}

\begin{thm}\label{thm:explorative:nonprimitive}
For all irreducible $A\in\M_{N,N}(\IRmax)$ and all~$v\in\IR^N$, we have 
$ n_{A,v} \leqslant \Benp\big( G(A,v) \big)$.
\end{thm}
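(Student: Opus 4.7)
The plan is to mirror the proof strategy used for Theorem~\ref{thm:explorative:primitive}, substituting the primitive-case lemma by its generalized counterpart, namely Lemma~\ref{lem:realizer:gen}. First, I would reduce to the zero-rate case: by Lemma~\ref{lem:transient:invariance}, $n_{A,v} = n_{\overline{A},v}$, and by Lemma~\ref{lem:invariance} together with the observation that $G(\overline{A},v) = \overline{G(A,v)}$ (retaining the same node weights), we have $\Benp(G(A,v)) = \Benp(G(\overline{A},v))$ since the parameters $\CDD$, $\CF$, $d(G_c)$, and $\EP$ of the critical components are invariant under homothety (critical subgraphs coincide), and $\Bcnc$ is invariant as well. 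Hence we may assume without loss of generality that $\varrho(A) = 0$, i.e., $\varrho(G(A,v)) = 0$.

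Next, set $G = G(A,v)$, $B = \Benp(G)$, and $p = p(G)$. Since $A$ is irreducible, $G$ is nontrivial and strongly connected, so the hypotheses of Lemma~\ref{lem:realizer:gen} are satisfied. Applying that lemma, for every node $i$ of $G$ and every $n \geqslant B$ there exists an $\realrem{B}{n,p}$-realizer for $i$ of length exactly $n$. This is precisely the hypothesis of Lemma~\ref{lem:final:step}, which we invoke to conclude that $w^{n+p}(\ito, G) = w^n(\ito, G)$ for all $n \geqslant B$.

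Finally, to translate this into a statement about $n_{A,v}$, I would invoke Lemma~\ref{lem:path:formula}, which identifies $(A^{\otimes n} \otimes v)_i$ with $w^n(\ito, G(A,v))$. Since $\varrho(A) = 0$, the periodicity $w^{n+p}(\ito,G) = w^n(\ito,G)$ for $n \geqslant B$ translates directly to $x_{A,v}(n+p) = x_{A,v}(n)$ for $n \geqslant B$, so $p$ is a period of the sequence $\big(x_{A,v}(n)\big)_{n \geqslant 0}$ with transient at most $B$. Since the transient for a period is independent of the chosen period (as established in Section~\ref{subsec:period}), we obtain $n_{A,v} \leqslant B = \Benp(G(A,v))$.

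The only non-routine step is the reduction to $\varrho(A) = 0$, where one must verify carefully that $\Benp$ is invariant under the homothety $A \mapsto \overline{A}$. This is essentially bookkeeping, since the critical subgraph is preserved by Lemma~\ref{lem:hom}, the cyclicity parameters and circumference/cab-driver diameter of $G$ depend only on the underlying graph structure (not on weights), and $\Bcnc$ is invariant by its definition in terms of the ``overlined'' quantities $\overline{\Delta}$, $\overline{\delta}$, and $\overline{\varrho}_\nc$. Once this reduction is in place, the proof is essentially a one-line composition of Lemma~\ref{lem:realizer:gen}, Lemma~\ref{lem:final:step}, and Lemma~\ref{lem:path:formula}, exactly parallel to the primitive case.
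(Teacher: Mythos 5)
Your proposal is correct and follows exactly the paper's own argument: reduce to $\varrho(A)=0$ via Lemmata~\ref{lem:transient:invariance} and~\ref{lem:invariance}, then compose Lemma~\ref{lem:realizer:gen}, Lemma~\ref{lem:final:step}, and Lemma~\ref{lem:path:formula}. Your additional care in checking that $\Benp$ is invariant under homothety (critical subgraph preserved, $\CDD$, $\CF$, $d(G_c)$, and $\EP$ purely structural, $\Bcnc$ defined via overlined quantities) is a welcome elaboration of a step the paper states without detail.
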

\begin{proof}
By Lemma~\ref{lem:transient:invariance} and Lemma~\ref{lem:invariance}, both~$n_{A,v}$ and~$\Bep\big(G(A,v)\big)$ are invariant under substituting~$A$ by~$\overline{A}$. 
We may hence assume without loss of generality that~$\varrho(A)=0$, i.e., $\varrho\big(G(A,v)\big)=0$.

Let~$B=\Benp\big(G(A,v)\big)$ and~$p=p(G)$.
Graph~$G(A,v)$ is nontrivial and strongly connected, because~$A$ is irreducible.
By Lemma~\ref{lem:realizer:gen}, for every node~$i$ and every~$n\geqslant B$, there exists an $\realrem{B}{n,p}$-realizer for~$i$ of length~$n$.
Lemma~\ref{lem:final:step} hence implies that the transient of sequence~$\big(w^n(\ito)\big)_n$ in graph~$G(A,v)$ is at most~$B$.
But, by Lemma~\ref{lem:path:formula}, the transient of~$\big(w^n(\ito)\big)_n$ in~$G(A,v)$ is equal to~$n_{A,v}$.
This concludes the proof.
\end{proof}

\begin{cor}\label{cor:explorative:nonprimitive}
For all irreducible $A\in\M_{N,N}(\IRmax)$ and all~$v\in\IR^N$, we have 
\[ n_{A,v} \leqslant \max \left\{ \frac{\lVert v \rVert + \big(\Delta_\nc(G) - \delta(G)\big) \cdot(N-1) }{ \varrho(G) - \varrho_\nc(G)}  \ ,\ (d-1) + 2d\cdot(N-1) + \max_{H\in\SCC(G_c)} \EP(H)  \right\} \enspace,\]
where~$G=G(A)$ and $d=d(G_c)$.
\end{cor}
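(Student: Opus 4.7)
The plan is to derive the corollary from Theorem~\ref{thm:explorative:nonprimitive} by bounding each of the two quantities inside the maximum defining $\Benp$ by purely combinatorial estimates in $N$.

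First, I would invoke Theorem~\ref{thm:explorative:nonprimitive} to obtain $n_{A,v} \leqslant \Benp\big(G(A,v)\big)$, and then simply expand the definition
\[
\Benp\big(G(A,v)\big) = \max\!\left\{ \Bcnc\big(G(A,v)\big),\ (d-1)\,\CF(G) + (d+1)\,\CDD(G) + \max_{H\in\SCC(G_c)} \EP(H) \right\},
\]
noting that $G = G(A)$ and $G(A,v)$ share the same edge set and edge weights (only node weights $w_V(i) = v_i$ are added), so their critical subgraphs, circumferences, and cab-driver diameters coincide. In particular, $\lVert w_V \rVert = \lVert v \rVert$.

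For the first argument of the maximum, I would apply Corollary~\ref{cor:n:zero} directly to $G(A,v)$, which replaces $\Bcnc\big(G(A,v)\big)$ with
\[
\frac{\lVert v\rVert + \bigl(\Delta_\nc(G) - \delta(G)\bigr)(N-1)}{\varrho(G) - \varrho_\nc(G)}\enspace.
\]
For the second argument, I would use the elementary bounds $\CF(G) \leqslant N$ (elementary closed paths visit at most $N$ nodes) and $\CDD(G) \leqslant N-1$ (simple paths have at most $N-1$ edges), so that
\[
(d-1)\,\CF(G) + (d+1)\,\CDD(G) \leqslant (d-1)N + (d+1)(N-1) = 2dN - d - 1 = (d-1) + 2d(N-1)\enspace.
\]
Combining the two estimates inside the maximum gives the stated bound.

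There is essentially no obstacle here: both the reduction of the critical bound $\Bcnc$ and the explorative bound have already been established, so the corollary is a purely arithmetic consequence of substituting $\CF(G) \leqslant N$ and $\CDD(G) \leqslant N-1$ into the definition of $\Benp$. The only subtlety worth flagging is the identification $\lVert w_V \rVert = \lVert v \rVert$ and the check that the arithmetic simplification $(d-1)N + (d+1)(N-1) = (d-1) + 2d(N-1)$ indeed matches the right-hand side of the claim.
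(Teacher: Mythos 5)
Your proposal is correct and is exactly the derivation the paper intends (the corollary is stated without proof as an immediate consequence of Theorem~\ref{thm:explorative:nonprimitive}): bound $\Bcnc$ via Corollary~\ref{cor:n:zero}, substitute $\CF(G)\leqslant N$ and $\CDD(G)\leqslant N-1$, and check the arithmetic identity $(d-1)N+(d+1)(N-1)=(d-1)+2d(N-1)$. Your observations that the edge-weight-dependent parameters of $G(A,v)$ and $G(A)$ coincide and that $\lVert w_V\rVert=\lVert v\rVert$ are the right details to flag.
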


\section{Repetitive Bounds}\label{sec:nonexplorative}

In Section~\ref{sec:explorative}, we explored critical
     components~$H$ in the sense
     of Section~\ref{sec:explorationpenality} to construct critical closed paths
     whose lengths are multiples of~$c(H)$.
We now follow another approach, which yields better
     results in some cases: We do not explore anymore
     critical components, but instead consider a single (elementary) critical
     closed path~$\gamma$  starting at  critical node~$k$,
     and repeatedly add~$\gamma$ to the constructed paths; we hence call the resulting bounds {\em repetitive}.
We present two repetitive bounds: one using our path reduction~$\Red_{d,k}$ in Section~\ref{subsec:our:repetitive}, and an improvement of the bound of Hartmann and Arguelles~\cite{hartmann:arguelles} by a factor of two in Section~\ref{subsec:improve}.

\subsection{Our repetitive bound}\label{subsec:our:repetitive}
This section presents the bound we get when substituting the exploration of the visited critical component by only using a single elementary critical closed path in the method used to derive the explorative bound.
The proof follows the same outline as described in Section~\ref{subsec:roadmap}.
In the resulting upper bound on the transient, we
     substitute~$c(H)$ by~$\ell(\gamma)$, and thus substitute~$d(G_c)$ by~$\CF(G_c)$, and drop the exploration
     penalty terms~$\EP(H)$.
It is of course possible that~$\CF(G_c)$ is strictly greater than~$d(G_c)$.

For a nontrivial strongly connected en-weighted graph~$G$ define
\[ \Bneone(G) = \max \biggr\{ \Bcnc(G) \ ,\ \big( \CF(G_c) - 1 \big)\cdot \CF(G) + \big( \CF(G_c) + 1 \big)\cdot \CDD(G)    \biggr\} \enspace. \]

\begin{lem}\label{lem:realizer:non1}
Let~$G$ be a nontrivial strongly connected en-weighted graph with~$\varrho(G)=0$; let~$B=\Bneone(G)$ and~$p=p(G)$.
For all nodes~$i$ and all $n\geqslant B$ exists an $\realrem{B}{n,p}$-realizer for~$i$ of length~$n$.
\end{lem}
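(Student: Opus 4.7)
The plan is to imitate the outline used for Lemma~\ref{lem:realizer:gen}, replacing the exploration of a critical component by repetition of a single elementary critical closed path. First, by Lemma~\ref{lem:realizers:exist}, pick any $\realrem{B}{n,p}$-realizer $\pi$ for $i$; since $B \geqslant \Bcnc(G)$, Theorem~\ref{thm:n:zero} lets us choose $\pi$ so that it contains a critical node $k$. Let $H$ be the critical component containing $k$; strong connectedness of $H$ together with Lemma~\ref{lem:path:split} supplies an elementary closed path $\gamma$ of $H$ starting and ending at $k$, and Lemma~\ref{lem:paths:in:crit:comps:are:critical} ensures $\gamma$ is critical in $G$, whence $\wstar(\gamma) = 0$ using $\varrho(G) = 0$. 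Being elementary in $G_c$, $\gamma$ satisfies $\ell(\gamma) \leqslant \CF(G_c)$, and as an elementary closed path of $G$ its length divides $p = p(G)$.

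Next apply the second path reduction with $d = \ell(\gamma)$ to obtain $\hat{\pi} = \Red_{\ell(\gamma),k}(\pi)$. From Section~\ref{subsec:expl:red} we inherit three properties: $k$ is a node of $\hat{\pi}$, $\ell(\hat{\pi}) \equiv \ell(\pi) \pmod{\ell(\gamma)}$, and $w(\hat{\pi}) \geqslant w(\pi)$ (the last using $\varrho(G)=0$). Lemma~\ref{lem:red:upper:bound} combined with $\ell(\gamma)\leqslant \CF(G_c)$ yields
\[ \ell(\hat{\pi}) \;\leqslant\; \bigl(\CF(G_c)-1\bigr)\cdot \CF(G) + \bigl(\CF(G_c)+1\bigr)\cdot \CDD(G) \;\leqslant\; B \;\leqslant\; n \enspace. \]

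Now split $\hat{\pi} = \hat{\pi}_1 \cdot \hat{\pi}_2$ with $\End(\hat{\pi}_1) = k$. Since $\ell(\gamma) \mid p$ and $\ell(\pi) \equiv n \pmod{p}$, the congruence on $\ell(\hat{\pi})$ gives $\ell(\hat{\pi}) \equiv n \pmod{\ell(\gamma)}$, so a nonnegative integer $m$ exists with $n = \ell(\hat{\pi}) + m \cdot \ell(\gamma)$. Then $\pi_n = \hat{\pi}_1 \cdot \gamma^m \cdot \hat{\pi}_2$ (same shape as in Figure~\ref{fig:lem:realizer}) has length $n$, starts at $i$, and satisfies
\[ w(\pi_n) \;=\; \wstar(\hat{\pi}_1) + m\cdot\wstar(\gamma) + w(\hat{\pi}_2) \;=\; w(\hat{\pi}) \;\geqslant\; w(\pi)\enspace, \]
so it qualifies as an $\realrem{B}{n,p}$-realizer for $i$ of length $n$.

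The only point that is not a verbatim copy of the argument for Lemma~\ref{lem:realizer:gen} is substituting the Denardo-style filler $\gamma_t$ of length a multiple of $c(H)$ by the power $\gamma^m$ of a single elementary critical closed path. The mild obstacle is to verify that $\ell(\gamma)$ really divides $p$ (so that the residue equation modulo $\ell(\gamma)$ transfers from $\pi$ to $\hat{\pi}$ to $n$), and to bound $\ell(\hat{\pi})$ uniformly by $\CF(G_c)$ rather than by the component-specific cyclicity; both follow immediately from the definitions of $p(G)$ and $\CF(G_c)$ and from Lemma~\ref{lem:red:upper:bound}. Everything else is exactly the template of the explorative bound.
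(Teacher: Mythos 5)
Your proposal is correct and follows essentially the same route as the paper's own proof: take a realizer through a critical node $k$ (Lemma~\ref{lem:realizers:exist} and Theorem~\ref{thm:n:zero}), reduce with $\Red_{\ell(\gamma),k}$ for an elementary critical closed path $\gamma$ at $k$, bound the reduced length via Lemma~\ref{lem:red:upper:bound} and $\ell(\gamma)\leqslant\CF(G_c)$, and pad with $\gamma^m$ using that $\ell(\gamma)$ divides $p(G)$. The only difference is cosmetic: you spell out the existence of $\gamma$ and the divisibility $\ell(\gamma)\mid p(G)$ slightly more explicitly than the paper does.
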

\begin{proof}
By Lemma~\ref{lem:realizers:exist}, there exists an $\realrem{B}{n,p}$-realizer~$\pi$ for~$i$, i.e., $\pi\in P(\ito)$ is of maximum weight such that
     $\ell(\pi)\geqslant B$ and $\ell(\pi) \equiv n \pmod {p} $.
By Theorem~\ref{thm:n:zero}, we can choose~$\pi$ such that there is a critical node~$k$
     on~$\pi$, because~$B\geqslant \Bcnc(G)$.
Let~$\gamma\in\SCC(G_c)$ be an elementary critical closed path starting at~$k$.

Set $\hat{\pi} = \Red_{\ell(\gamma),k}(\pi)$.
It is $\ell(\hat{\pi}) \equiv \ell(\pi) \pod {\ell(\gamma)}$ and also $\ell(\pi)\equiv n \pod{\ell(\gamma)}$, because~$\ell(\gamma)$ divides~$p(G)$.
Hence $\ell(\hat{\pi})\equiv n \pod{\ell(\gamma)}$, i.e., there exists an
     integer~$m$ such that $n = \ell(\hat{\pi}) + m\cdot
     \ell(\gamma)$.
Lemma~\ref{lem:red:upper:bound} and~$\ell(\gamma)\leqslant \CF(G_c)$ implies $\ell(\hat{\pi})\leqslant B$, hence~$m$ is nonnegative.

Let~$\hat{\pi}=\hat{\pi}_1\cdot\hat{\pi}_2$ with~$\End(\hat{\pi}_1)=k$ and set~$\pi_n =
     \hat{\pi}_1\cdot \gamma^m\cdot \hat{\pi}_2$.
Figure~\ref{fig:lem:realizer:non1} depicts path~$\pi_n$.
Then $\ell(\pi_n) = \ell(\hat{\pi}) + m\cdot\ell(\gamma) = n$.
Further, $w(\pi_n) = w(\hat{\pi}) \geqslant w(\pi)$ and $\Start(\pi_n)=\Start(\pi)=i$, i.e., $\pi_n$ is an $\realrem{B}{n,p}$-realizer for~$i$ if length~$n$.
\end{proof}
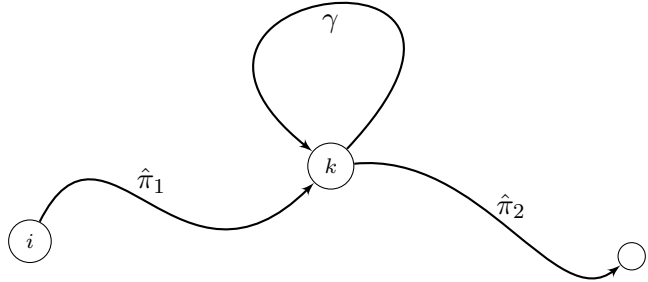
\begin{figure}[ht]
\centering
\begin{tikzpicture}[>=latex']
	\node[shape=circle,draw] (i) at (-4,-1) {$\scriptstyle i$};
	\node[shape=circle,draw] (k) at (0,0) {$\scriptstyle k$};
	\node[shape=circle,draw] (j) at (4,-1.2) {};
	\draw[thick,->] (i) .. controls (-3,1) and (-2,-2)  .. node[midway,above]{$\hat{\pi}_1$}  (k);
	\draw[thick,->] (k) .. controls (2,0.2) and (3,-2)  .. node[midway,above]{$\hat{\pi}_2$}  (j);
	\draw[thick,->] (k) .. controls +(3,3.2) and +(-3,2.4)  .. node[below] {$\gamma$}  (k);
\end{tikzpicture}
\caption{Realizer~$\pi_n$ in proof of Lemma~\ref{lem:realizer:non1}}
\label{fig:lem:realizer:non1}
\end{figure}

\begin{thm}\label{thm:nonexplorative}
For all irreducible $A\in\M_{N,N}(\IRmax)$ and all~$v\in\IR^N$, we have
$ n_{A,v} \leqslant \Bneone\big( G(A,v) \big)$.
\end{thm}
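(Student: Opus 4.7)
The plan is to mimic the proof of Theorem~\ref{thm:explorative:nonprimitive} almost verbatim, substituting Lemma~\ref{lem:realizer:non1} for Lemma~\ref{lem:realizer:gen}. The real content, namely the construction of realizers using a single elementary critical closed path $\gamma$ in place of exploring a whole critical component, has already been packaged into Lemma~\ref{lem:realizer:non1}, so the theorem is essentially a combinator of earlier results.

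First I would reduce to the case of zero rate. By Lemma~\ref{lem:transient:invariance}, $n_{A,v} = n_{\overline{A},v}$, and by Lemma~\ref{lem:invariance} together with the fact that the graph parameters entering the definition of $\Bneone$ depend only on edge-weight differences and on the combinatorial structure (which is preserved by Lemma~\ref{lem:hom}), the value $\Bneone\big(G(A,v)\big) = \Bneone\big(G(\overline{A},v)\big)$ is also invariant under the homothety $A\mapsto\overline{A}$. Since $A$ is irreducible, $G(A,v)$ is nontrivial and strongly connected, so we may work under the assumption $\varrho(G(A,v)) = 0$.

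Next I would set $B = \Bneone\big(G(A,v)\big)$ and $p = p(G)$ with $G = G(A,v)$, and invoke Lemma~\ref{lem:realizer:non1}: for every node $i$ and every $n\geqslant B$, there exists an $\realrem{B}{n,p}$-realizer for $i$ of length exactly $n$. This is precisely the hypothesis of Lemma~\ref{lem:final:step}, which then yields
\[ w^{n+p}(\ito, G) = w^n(\ito, G) \qquad \text{for all } n\geqslant B\enspace. \]

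Finally I would translate this back to matrix language via Lemma~\ref{lem:path:formula}, which identifies $w^n(\ito, G(A,v))$ with $(A^{\otimes n}\otimes v)_i$. Hence $(x_{A,v}(n+p))_i = (x_{A,v}(n))_i$ for all $n\geqslant B$ and all $i$, so the transient of $x_{A,v}$ is at most $B = \Bneone\big(G(A,v)\big)$. There is no real obstacle here: the only step that would have been hard, namely verifying that realizers of the prescribed residue class exist past the threshold $B$, has already been done in Lemma~\ref{lem:realizer:non1}.
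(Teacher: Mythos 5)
Your proposal is correct and matches the paper's own proof essentially verbatim: both reduce to the zero-rate case via Lemmata~\ref{lem:transient:invariance} and~\ref{lem:invariance}, then combine Lemma~\ref{lem:realizer:non1} with Lemma~\ref{lem:final:step} and translate back through Lemma~\ref{lem:path:formula}. Nothing is missing.
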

\begin{proof}
By Lemma~\ref{lem:transient:invariance} and Lemma~\ref{lem:invariance}, both~$n_{A,v}$ and~$\Bneone\big(G(A,v)\big)$ are invariant under substituting~$A$ by~$\overline{A}$. 
We may hence assume without loss of generality that~$\varrho(A)=0$, i.e., $\varrho\big(G(A,v)\big)=0$.

Let~$B=\Bneone\big(G(A,v)\big)$ and~$p=p(G)$.
Graph~$G(A,v)$ is nontrivial and strongly connected, because~$A$ is irreducible.
By Lemma~\ref{lem:realizer:non1}, for every node~$i$ and every~$n\geqslant B$, there exists an $\realrem{B}{n,p}$-realizer for~$i$ of length~$n$.
Lemma~\ref{lem:final:step} hence implies that the transient of sequence~$\big(w^n(\ito)\big)_n$ in graph~$G(A,v)$ is at most~$B$.
But, by Lemma~\ref{lem:path:formula}, the transient of~$\big(w^n(\ito)\big)_n$ in~$G(A,v)$ is equal to~$n_{A,v}$.
This concludes the proof.
\end{proof}

\begin{cor}\label{cor:nonexplorative}
For all irreducible $A\in\M_{N,N}(\IRmax)$ and all~$v\in\IR^N$, we have 
\[ n_{A,v} \leqslant \max \left\{ \frac{\lVert v \rVert + \big(\Delta_\nc(G) - \delta(G)\big) \cdot(N-1) }{ \varrho(G) - \varrho_\nc(G)}  \ ,\  (\CF-1) + 2\,\CF\cdot(N-1) \right\} \enspace\]
where~$G=G(A)$ and~$\CF=\CF(G_c)$.
\end{cor}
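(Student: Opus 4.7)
The plan is to derive the corollary directly from Theorem~\ref{thm:nonexplorative} by upper-bounding the two quantities inside the $\max$ in the definition of~$\Bneone(G(A,v))$ using the standard inequalities $\CF(G)\leqslant N$ and $\CDD(G)\leqslant N-1$, and by invoking Corollary~\ref{cor:n:zero} for the critical-bound term.

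First, apply Theorem~\ref{thm:nonexplorative} to obtain $n_{A,v}\leqslant \Bneone\bigl(G(A,v)\bigr)$. Recall that
\[ \Bneone(G) = \max\Bigl\{ \Bcnc(G)\,,\ \bigl(\CF(G_c)-1\bigr)\cdot \CF(G) + \bigl(\CF(G_c)+1\bigr)\cdot \CDD(G) \Bigr\}\enspace, \]
so it suffices to bound each argument of the maximum separately.

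For the first argument, I would invoke Corollary~\ref{cor:n:zero}, which directly yields
\[ \Bcnc\bigl(G(A,v)\bigr) \leqslant \frac{\lVert w_V\rVert + \bigl(\Delta_\nc(G)-\delta(G)\bigr)\cdot (N-1)}{\varrho(G)-\varrho_\nc(G)}\enspace. \]
Here $G=G(A)$ and the node weights of~$G(A,v)$ are $w_V(i)=v_i$, so $\lVert w_V\rVert = \lVert v\rVert$, giving exactly the first term in the stated corollary.

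For the second argument, set $\CF = \CF(G_c)$. Using the elementary bounds $\CF(G)\leqslant N$ and $\CDD(G)\leqslant N-1$, a short calculation gives
\[ (\CF-1)\cdot \CF(G) + (\CF+1)\cdot \CDD(G) \leqslant (\CF-1)\cdot N + (\CF+1)\cdot (N-1) = (\CF-1) + 2\,\CF\cdot(N-1)\enspace. \]
Combining the two bounds and taking the maximum yields the claimed inequality. There is no genuine obstacle here; the work is purely a bookkeeping step, since all the substantive content (the existence of realizers, the reduction modulo~$\ell(\gamma)$, the critical bound) has already been established in Theorem~\ref{thm:nonexplorative} and Corollary~\ref{cor:n:zero}.
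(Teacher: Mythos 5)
Your proposal is correct and is exactly the intended derivation: the paper states this corollary without proof as an immediate consequence of Theorem~\ref{thm:nonexplorative}, obtained by bounding $\Bcnc$ via Corollary~\ref{cor:n:zero} and plugging $\CF(G)\leqslant N$, $\CDD(G)\leqslant N-1$ into the second term, and your arithmetic $(\CF-1)N+(\CF+1)(N-1)=2\CF N-\CF-1=(\CF-1)+2\CF(N-1)$ checks out.
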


\subsection{Improved Hartmann-Arguelles bound}\label{subsec:improve}


In Corollary~\ref{cor:nonexplorative}, the second term in the maximum is bounded by~$2N^2$.
Hartmann and Arguelles also arrived at the term~$2N^2$ in the corresponding part of their upper bound on the transient of a max-plus system~\cite[Theorem~12]{hartmann:arguelles}.
To prove this upper bound, Hartmann and Arguelles used a different kind of path reduction, which we recall in Theorem~\ref{thm:ha:thm4} below.
This path reduction guarantees that the reduced path is in the same residue class modulo~$\ell(\gamma)$, the length of some visited critical closed path~$\gamma$.

Hartmann and Arguelles, after reducing a realizer with their path reduction, combine elementary critical closed paths in the visited critical component to arrive at the right residue class modulo~$c(G_c)$.
However, this construction is not necessary, for it is not necessary to consider~$c(G_c)$ as the period, as we have shown in Section~\ref{subsec:period}.
We can improve on their proof by considering~$p(G)$ as the period, which is a multiple of~$c(G_c)$.
Avoiding the construction saves one time~$N^2$ in the resulting bound; hence the second term in the maximum can chosen to be~$N^2$ instead of~$2N^2$, as we show now.

\begin{thm}[{\cite[Theorem~4]{hartmann:arguelles}}]\label{thm:ha:thm4}
Let~$G$ be an e-weighted graph with~$N$ nodes and~$\varrho(G)=0$.
Let~$\pi\in\Pa(i,j,G)$ with~$\ell(\pi)\geqslant N^2$ and let~$k$ be a
     critical node of~$\pi$.
Further, let~$\gamma$ be a critical closed path starting from~$k$.
Then there exist a path~$\hat{\pi}\in\Pa(i,j,G)$ such that $\ell(\hat{\pi})\equiv\ell(\pi)\pmod{\ell(\gamma)}$,
     $\wstar(\hat{\pi}) \geqslant \wstar(\pi)$,~$k$ is a node of~$\hat{\pi}$, and
     $\ell(\hat{\pi})<N^2$.
\qed    
\end{thm}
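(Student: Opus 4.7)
The plan is to prove Theorem~\ref{thm:ha:thm4} by iterated excision of closed subpaths. Initialize $\hat{\pi} := \pi$. At each step, I would find a closed subpath $\tau$ of $\hat{\pi}$ with $\ell(\tau) > 0$, $\ell(\tau) \equiv 0 \pmod{\ell(\gamma)}$, and such that removing $\tau$ leaves $k$ on the resulting path; then replace $\hat{\pi}$ by this reduction. Halt once $\ell(\hat{\pi}) < N^2$. Throughout the iteration, the start and end nodes are preserved automatically because excising closed subpaths does not change them; the length modulo $\ell(\gamma)$ is preserved by construction; the weight inequality $\wstar(\hat{\pi}) \geq \wstar(\pi)$ follows because $\varrho(G) = 0$ forces every closed subpath to have $\wstar \leq 0$, so its removal cannot decrease the total weight; and $k$ is maintained by the choice of $\tau$. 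What remains is to show that such a $\tau$ exists whenever $\ell(\hat{\pi}) \geq N^2$.

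For the existence of $\tau$, I would use a pigeonhole argument on \emph{states} along $\hat{\pi}$: to each position $t \in \{0, 1, \dots, \ell(\hat{\pi})\}$ assign the state $(v_t, t \bmod \ell(\gamma))$, where $v_t$ is the node of $\hat{\pi}$ at position $t$. There are at most $N \cdot \ell(\gamma)$ distinct states; if we have more positions than states, two positions $t_1 < t_2$ share a state, and the subpath between them is closed with length $t_2 - t_1 > 0$ divisible by $\ell(\gamma)$. To guarantee that excising this subpath does not erase $k$, I would fix an occurrence of $k$ at a position $p$ in $\hat{\pi}$, split $\hat{\pi} = \alpha \cdot \beta$ with $\End(\alpha) = k$, and apply the pigeonhole either to positions on $\alpha$ alone or to positions on $\beta$ alone, choosing whichever side is long enough. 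Any collision found strictly within one side produces a closed subpath whose removal does not touch the distinguished occurrence of $k$ at the split point, so $k$ survives in the reduced path.

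The main obstacle is achieving the tight constant $N^2$: a naive split yields only a $2N^2$ guarantee, because neither $\alpha$ nor $\beta$ need exceed length $N \cdot \ell(\gamma)$ unless $\ell(\pi) \geq 2N \cdot \ell(\gamma)$. To tighten this, I would observe that the state at position $p$ (where $k$ lives) is common to both sides and should be counted only once in a joint argument, and that if $\gamma$ can be taken elementary then $\ell(\gamma) \leq N$ and the global state count is at most $N^2$ from the outset. Handling a possibly non-elementary $\gamma$ requires first producing an elementary critical closed path through $k$ via Lemma~\ref{lem:xi:is:minimum}, and then translating a reduction performed modulo the elementary cycle's length back into one modulo $\ell(\gamma)$ using the cyclicity structure of the critical component of $k$. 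I expect this translation, together with the joint two-sided counting, to be the most delicate technical step; once it is in place, the iteration terminates after finitely many steps with a path $\hat{\pi}$ of length strictly less than $N^2$ meeting all the required conditions.
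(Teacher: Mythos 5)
First, note that the paper itself gives no proof of this statement: it is quoted verbatim from Hartmann and Arguelles and closed with \verb|\qed| as a citation, so there is no in-paper argument to compare yours against. Your proposal must therefore stand on its own, and as written it does not yet close.

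Your framework (iterated excision of nonempty closed subpaths of length $\equiv 0 \pmod{\ell(\gamma)}$, the weight monotonicity from $\varrho(G)=0$, and the pigeonhole on states $(v_t,\, t \bmod \ell(\gamma))$) is sound and is the natural route. The genuine gap is that the two steps you defer to the end are precisely the content of the theorem, and neither of your proposed fixes works as sketched. (1) The split $\hat{\pi}=\alpha\cdot\beta$ at an occurrence of $k$ guarantees a safe collision only when one of the two sides has more than $N\ell(\gamma)$ positions; otherwise you only get $\ell(\hat{\pi})\leqslant 2N\ell(\gamma)-2$, i.e.\ the $2N^2$ bound that this theorem (and Section~\ref{subsec:improve}) is specifically meant to improve. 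Your remedy of ``counting the shared state at position $p$ only once'' saves an additive $1$, not a multiplicative factor of $2$, so the regime $N^2\leqslant\ell(\hat{\pi})<2N\ell(\gamma)$ is not handled; you would need a further argument there (for instance, exploiting that near-saturation of both sides forces several occurrences of $k$ in distinct residue classes, giving more freedom in which occurrence to protect), and none is supplied. (2) For non-elementary $\gamma$ the translation you propose goes the wrong way: excising closed subpaths whose lengths are multiples of the length of an elementary critical cycle through $k$ does \emph{not} preserve the residue modulo $\ell(\gamma)$ unless $\ell(\gamma)$ divides that elementary length, which is the reverse of what the cyclicity structure gives you. Indeed, if $\ell(\gamma)>N$ there may be no excisable closed subpath at all in a walk of length exactly $N^2$ (there are $N\ell(\gamma)>N^2$ states, so all states can be distinct), which indicates that the theorem is really meant for an elementary critical cycle $\gamma$ (as it is used in Lemma~\ref{lem:realizer:ha}); it would be cleaner to assume this than to attempt the translation. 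In summary: the reduction machinery, termination, congruence, endpoint and weight claims are all fine, but the existence of a suitable $\tau$ whenever $\ell(\hat{\pi})\geqslant N^2$ — the only nontrivial assertion — is exactly what remains unproved.
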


For a nontrivial strongly connected en-weighted graph~$G$ with $N$ nodes define
\[ \Bnetwo(G) = \max\big\{ \Bcnc(G)\,,\,N^2 \big\} \enspace. \]

\begin{lem}\label{lem:realizer:ha}
Let~$G$ be a nontrivial strongly connected en-weighted graph with~$\varrho(G)=0$; let~$B=\Bnetwo(G)$ and~$p=p(G)$.
For all nodes~$i$ and all $n\geqslant B$ exists an $\realrem{B}{n,p}$-realizer for~$i$ of length~$n$.
\end{lem}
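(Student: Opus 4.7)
The plan follows the same template as the proofs of Lemma~\ref{lem:realizer:gen} and Lemma~\ref{lem:realizer:non1}, only substituting the Hartmann--Arguelles path reduction of Theorem~\ref{thm:ha:thm4} for our own reduction $\Red_{d,k}$. First, by Lemma~\ref{lem:realizers:exist} an $\realrem{B}{n,p}$-realizer~$\pi$ for~$i$ exists; since $B\geqslant \Bcnc(G)$, Theorem~\ref{thm:n:zero} lets me arrange that $\pi$ passes through some critical node~$k$. By Lemma~\ref{lem:xi:is:minimum} applied inside the critical component containing~$k$, I fix an elementary critical closed path~$\gamma$ starting at~$k$.

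Next I invoke Theorem~\ref{thm:ha:thm4} on the underlying e-weighted graph, whose hypothesis $\ell(\pi)\geqslant N^2$ is satisfied because $\ell(\pi)\geqslant n\geqslant B\geqslant N^2$. This produces a path $\hat{\pi}\in\Pa\bigl(i,\End(\pi),G\bigr)$ with $\ell(\hat{\pi})<N^2$, $\ell(\hat{\pi})\equiv \ell(\pi)\pmod{\ell(\gamma)}$, $k$ a node of~$\hat{\pi}$, and $\wstar(\hat{\pi})\geqslant \wstar(\pi)$. Since $\End(\hat{\pi})=\End(\pi)$, adding the end-node weight preserves the inequality, giving $w(\hat{\pi})\geqslant w(\pi)$ in the en-weighted sense as well.

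To close the loop exactly as in the earlier lemmas, I use that $\ell(\gamma)$ divides $p=p(G)$ and $\ell(\pi)\equiv n\pmod p$, so $\ell(\hat{\pi})\equiv n\pmod{\ell(\gamma)}$; thus there is an integer $m$ with $n=\ell(\hat{\pi})+m\cdot\ell(\gamma)$, nonnegative because $\ell(\hat{\pi})<N^2\leqslant B\leqslant n$. Decompose $\hat{\pi}=\hat{\pi}_1\cdot\hat{\pi}_2$ with $\End(\hat{\pi}_1)=k$ and set $\pi_n=\hat{\pi}_1\cdot\gamma^m\cdot\hat{\pi}_2$, which has the same shape as in Figure~\ref{fig:lem:realizer:non1}. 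Lemma~\ref{lem:paths:in:crit:comps:are:critical} together with $\varrho(G)=0$ gives $\wstar(\gamma)=0$, so $w(\pi_n)=w(\hat{\pi})\geqslant w(\pi)$ and $\ell(\pi_n)=n$, making $\pi_n$ the desired $\realrem{B}{n,p}$-realizer.

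The real work has been done elsewhere: Theorem~\ref{thm:ha:thm4} already supplies the residue-preserving reduction and Theorem~\ref{thm:n:zero} delivers a critical node on a maximum-weight path. The only conceptual point, and the source of the factor-of-two improvement over the original Hartmann--Arguelles bound, is that adopting $p=p(G)$ as the period (which is a multiple of $\ell(\gamma)$) makes the congruence $\ell(\hat{\pi})\equiv n\pmod{\ell(\gamma)}$ hold immediately, so no auxiliary construction combining elementary critical closed paths to reach the correct residue class modulo $c(G_c)$ is needed---which is precisely what removes the extra $N^2$ term.
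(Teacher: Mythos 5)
Your proof is correct and follows essentially the same route as the paper's: obtain a $\realrem{B}{n,p}$-realizer through a critical node~$k$ via Theorem~\ref{thm:n:zero}, apply the Hartmann--Arguelles reduction of Theorem~\ref{thm:ha:thm4}, and pump with an elementary critical closed path~$\gamma$ at~$k$, using that $\ell(\gamma)$ divides $p(G)$ so the congruence modulo $\ell(\gamma)$ is inherited from the congruence modulo~$p$. The only blemish is the intermediate inequality $\ell(\pi)\geqslant n$, which is not guaranteed for a $\realrem{B}{n,p}$-realizer (its length need only be $\geqslant B$ and $\equiv n\pmod p$); the hypothesis of Theorem~\ref{thm:ha:thm4} is nonetheless met since $\ell(\pi)\geqslant B\geqslant N^2$ directly.
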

\begin{proof}
By Lemma~\ref{lem:realizers:exist}, there exists a $\realrem{B}{n,p}$-realizer~$\pi$ for~$i$, i.e., $\pi\in\Pa(\ito)$ is of maximum weight such that
     $\ell(\pi)\geqslant B$ and $\ell(\pi) \equiv n \pmod{p}$.
By Theorem~\ref{thm:n:zero}, we can choose~$\pi$ such that there is a critical
     node~$k$ on~$\pi$, because~$B\geqslant\Bcnc$.
Let~$\gamma$ be an elementary critical closed path starting at~$k$.

Let~$\hat{\pi}$ be as in Theorem~\ref{thm:ha:thm4}
and let~$\hat{\pi}=\hat{\pi}_1\cdot\hat{\pi}_2$ with~$\End(\hat{\pi}_1)=k$.
It is $\ell(\hat{\pi}) \equiv \ell(\pi)\pmod
     {\ell(\gamma)}$ and also $\ell(\pi)\equiv r \pmod{\ell(\gamma)}$, because~$\ell(\gamma)$ divides~$p(G)$.
Hence $\ell(\hat{\pi})\equiv n\pmod{\ell(\gamma)}$, i.e., there exists an integer~$m$ such that
     $t=m\cdot \ell(\gamma) = n - \ell(\hat{\pi})$.
It is $\ell(\hat{\pi})< N^2\leqslant B$, hence~$m$ is nonnegative.

Setting~$\pi_n=\hat{\pi}_1\cdot\gamma^m\cdot\hat{\pi}_2$ yields~$\ell(\pi_n)=n$
     and~$w(\pi_n)=w(\hat{\pi})\geqslant w(\pi)$.
This concludes the proof.
\end{proof}

\begin{thm}\label{thm:nonexplorative2}
For all irreducible $A\in\M_{N,N}(\IRmax)$ and all~$v\in\IR^N$, we have 
$ n_{A,v} \leqslant \Bnetwo(G)$.
\end{thm}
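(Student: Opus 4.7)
The plan is to follow exactly the template used in the proofs of Theorem~\ref{thm:explorative:nonprimitive} and Theorem~\ref{thm:nonexplorative}, now plugging in Lemma~\ref{lem:realizer:ha} in place of the corresponding realizer lemmas. There are three routine reductions in play, and one key ingredient that does the real work.

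First I would reduce to the zero-rate case. Both quantities $n_{A,v}$ and $\Bnetwo\big(G(A,v)\big)$ are invariant under replacing $A$ by $\overline{A} = (-\varrho(A))\otimes A$: the transient invariance is Lemma~\ref{lem:transient:invariance}, while the invariance of $\Bnetwo\big(G(A,v)\big)$ follows because $N$ is unchanged and $\Bcnc$ is built from differences $\overline{\Delta}_\nc$, $\overline{\delta}$, $\overline{\varrho}_\nc$ already shown to be invariant in Lemma~\ref{lem:invariance} (and from $\lVert w_V\rVert$ and the purely combinatorial parameters $\CDD$, $\CDD_\nc$, $\CF_\cc$, $N_\nc$, which depend only on the underlying graph structure). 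Hence we may assume $\varrho(A)=0$, equivalently $\varrho\big(G(A,v)\big)=0$. Irreducibility of $A$ guarantees that $G(A,v)$ is nontrivial and strongly connected, so Lemma~\ref{lem:realizer:ha} applies.

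Next, set $B = \Bnetwo\big(G(A,v)\big)$ and $p = p\big(G(A,v)\big)$. By Lemma~\ref{lem:realizer:ha}, for every node~$i$ and every integer $n \geqslant B$, there exists an $\realrem{B}{n,p}$-realizer for~$i$ of length exactly~$n$. This is precisely the hypothesis needed to invoke Lemma~\ref{lem:final:step}, which yields
\[
w^{n+p}(\ito,G(A,v)) = w^n(\ito,G(A,v))
\]
for every $n\geqslant B$ and every node~$i$. In other words, the sequence $\big(w^n(\ito,G(A,v))\big)_{n\geqslant 0}$ is eventually periodic with period~$p$ and transient at most~$B$.

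Finally, I translate this back to the linear max-plus system via Lemma~\ref{lem:path:formula}: the $i$-th coordinate of $x_{A,v}(n) = A^{\otimes n}\otimes v$ coincides with $w^n(\ito,G(A,v))$. Applying this coordinatewise, the transient of the sequence $\big(x_{A,v}(n)\big)_{n\geqslant 0}$ (for period~$p$, equivalently for its minimal period) is at most~$B$, which is exactly the statement $n_{A,v} \leqslant \Bnetwo(G)$. The only step with any conceptual content is the application of Lemma~\ref{lem:realizer:ha}, and there the main obstacle has already been discharged in the preceding lemma by combining Theorem~\ref{thm:n:zero} (to force a critical node on the realizer) with the Hartmann-Arguelles path reduction (Theorem~\ref{thm:ha:thm4}) and the fact that $\ell(\gamma)$ divides $p(G)$, which is what removes the spurious factor of two from their original bound.
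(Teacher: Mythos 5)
Your proposal is correct and follows exactly the same route as the paper's own proof: reduce to $\varrho(A)=0$ via Lemmata~\ref{lem:transient:invariance} and~\ref{lem:invariance}, invoke Lemma~\ref{lem:realizer:ha} to produce the required realizers, conclude eventual periodicity with transient at most~$B$ from Lemma~\ref{lem:final:step}, and translate back to $n_{A,v}$ via Lemma~\ref{lem:path:formula}. Your added justification of the invariance of $\Bnetwo$ is a welcome elaboration of a point the paper leaves implicit.
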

\begin{proof}
By Lemma~\ref{lem:transient:invariance} and Lemma~\ref{lem:invariance}, both~$n_{A,v}$ and~$\Bnetwo\big(G(A,v)\big)$ are invariant under substituting~$A$ by~$\overline{A}$. 
We may hence assume without loss of generality that~$\varrho(A)=0$, i.e., $\varrho\big(G(A,v)\big)=0$.

Let~$B=\Bnetwo\big(G(A,v)\big)$ and~$p=p(G)$.
Graph~$G(A,v)$ is nontrivial and strongly connected, because~$A$ is irreducible.
By Lemma~\ref{lem:realizer:ha}, for every node~$i$ and every~$n\geqslant B$, there exists an $\realrem{B}{n,p}$-realizer for~$i$ of length~$n$.
Lemma~\ref{lem:final:step} hence implies that the transient of sequence~$\big(w^n(\ito)\big)_n$ in graph~$G(A,v)$ is at most~$B$.
But, by Lemma~\ref{lem:path:formula}, the transient of~$\big(w^n(\ito)\big)_n$ in~$G(A,v)$ is equal to~$n_{A,v}$.
This concludes the proof.
\end{proof}

\begin{cor}\label{cor:nonexplorative2}
For all irreducible $A\in\M_{N,N}(\IRmax)$ and all~$v\in\IR^N$, we have 
\[ n_{A,v} \leqslant \max \left\{ \frac{\lVert v \rVert + \big(\Delta_\nc(G) - \delta(G)\big) \cdot(N-1) }{ \varrho(G) - \varrho_\nc(G)}  \ ,\ N^2  \right\} \enspace,\]
where~$G=G(A)$.
\end{cor}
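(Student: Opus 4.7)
The plan is to derive Corollary~\ref{cor:nonexplorative2} as a direct consequence of Theorem~\ref{thm:nonexplorative2} together with the already-established bound on $\Bcnc$ from Corollary~\ref{cor:n:zero}. Since $A$ is irreducible, $G=G(A)$ is nontrivial and strongly connected, and the same holds for the en-weighted graph $G(A,v)$, which has $N$ nodes and node weights $w_V(i)=v_i$, so in particular $\lVert w_V\rVert = \lVert v\rVert$.

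First, I would invoke Theorem~\ref{thm:nonexplorative2} to obtain
\[
n_{A,v} \leqslant \Bnetwo\big(G(A,v)\big) = \max\big\{\Bcnc\big(G(A,v)\big),\,N^2\big\}\enspace.
\]
Second, applying Corollary~\ref{cor:n:zero} to $G(A,v)$ (which has the same edge weights as $G$, so its rate, non-critical rate, and the quantities $\Delta_\nc$ and $\delta$ agree with those of $G$), I obtain
\[
\Bcnc\big(G(A,v)\big) \leqslant \frac{\lVert v\rVert + \big(\Delta_\nc(G)-\delta(G)\big)\cdot (N-1)}{\varrho(G)-\varrho_\nc(G)}\enspace.
\]
Combining these two inequalities yields exactly the bound stated in the corollary.

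There is no real obstacle here, since all the heavy lifting has been carried out in the preceding sections: Theorem~\ref{thm:nonexplorative2} packages the construction of realizers via the Hartmann-Arguelles path reduction together with Lemma~\ref{lem:final:step}, and Corollary~\ref{cor:n:zero} already converts the rather technical definition of $\Bcnc$ in~\eqref{eq:n:zero} into the simpler expression involving only $\lVert v\rVert$, $\Delta_\nc(G)$, $\delta(G)$, $\varrho(G)$, $\varrho_\nc(G)$, and $N-1$. The only thing worth double-checking is the identification $\lVert w_V\rVert = \lVert v\rVert$, which is immediate from the definition of $G(A,v)$, and the fact that the edge-graph invariants used on the right-hand side of Corollary~\ref{cor:n:zero} depend only on $G$ rather than on the node weights, so they are the same whether one writes $G$ or $G(A,v)$.
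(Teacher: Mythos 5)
Your proposal is correct and matches the paper's intended derivation: the corollary is stated without proof precisely because it follows immediately from Theorem~\ref{thm:nonexplorative2} (giving $n_{A,v}\leqslant\max\{\Bcnc(G(A,v)),N^2\}$) combined with the bound $\Bcnc\leqslant\big(\lVert w_V\rVert+(\Delta_\nc-\delta)(N-1)\big)/(\varrho-\varrho_\nc)$ already recorded in Corollary~\ref{cor:n:zero}. Your checks that $\lVert w_V\rVert=\lVert v\rVert$ and that the edge-based invariants of $G(A,v)$ coincide with those of $G(A)$ are exactly the right details to verify.
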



\section{Matrix vs.\ System Transients}\label{sec:matrix}

To obtain an upper bound on the  transient  of a max-plus matrix,
	we can follow the same idea as in Sections~\ref{sec:visiting},~\ref{sec:explorative},
	and~\ref{sec:nonexplorative}, and substitute
	$A_{i,j}^{\otimes n}$ and $\Pa^n(i,j,G)$ for 
	$\big(A^{\otimes n}\otimes v\big)_i $ and $\Pa^n(\ito,G)$ in the proofs.
This leads to an upper bound in $O\big((\Delta-\delta)\cdot N^2/(\varrho-\varrho_\nc)\big)$, but gives no hint on the relationships between
	the transient of a max-plus matrix~$A$, and the transients of the max-plus
	systems~$x_{A,v}$.
In this section, we show that up to some constant $\Bunu(G)$, the  transient of matrix~$A$
	is actually equal to the transient of some specific systems~$x_{A,v}$ where 
	$\lVert v \rVert $ is in $O\big((\Delta-\delta)\cdot N^2\big)$.
Combined with the general upper bounds on the system transient established in 
	Theorems~\ref{thm:explorative:nonprimitive},~\ref{thm:nonexplorative}
	or~\ref{thm:nonexplorative2}, this result gives upper bounds on the matrix transient, 
	each of which is  also  in $O\big( (\Delta-\delta)\cdot N^2/(\varrho-\varrho_\nc)\big)$.

First, we derive a general property of strongly connected graphs from
	the definition of exploration penalty. 

\begin{lem}\label{lem:pathpenality}
Let $G$ be a strongly connected graph.
For any pair of nodes $i,j$ of $G$ and any integer  $n \geqslant \EP(G) + c(G) + \CDD(G) -1$,
	there exists a path $\pi$ from $i$ to $j$ such that $n-\ell(\pi) \in \{0, \cdots, c(G)-1\}$.
\end{lem}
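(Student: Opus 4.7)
The plan is to extend a short ``scaffolding'' path from $i$ to $j$ by inserting an appropriately long closed path at $i$, using the exploration penalty to control the length.

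First, I would invoke Lemma~\ref{lem:Ni,j} (which applies since $G$ is strongly connected) to conclude that all paths from $i$ to $j$ have the same residue $r$ modulo $c(G)$. Then I would single out the unique $s \in \{0, \dots, c(G)-1\}$ satisfying $s \equiv n - r \pmod{c(G)}$, and set $m = n - s$; note that by construction $m \equiv r \pmod{c(G)}$ and $m \geqslant n - (c(G)-1)$. The task reduces to building a path from $i$ to $j$ of length exactly $m$.

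Next, since $G$ is strongly connected, there exists a path from $i$ to $j$; taking its simple part (for instance via the $\Simp$ construction introduced earlier, or trivially the empty path if $i=j$) yields a simple path $\sigma \in \Pa(i,j,G)$ of length $\ell_0 \leqslant \CDD(G)$. By Lemma~\ref{lem:Ni,j} again, $\ell_0 \equiv r \pmod{c(G)}$, so $m - \ell_0$ is a nonnegative multiple of $c(G)$, provided it is indeed nonnegative and at least $\EP(G)$. Computing,
\begin{equation*}
m - \ell_0 \;\geqslant\; n - (c(G)-1) - \CDD(G) \;\geqslant\; \EP(G),
\end{equation*}
by the assumed lower bound on $n$.

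Finally, by the very definition of $\EP(G)$, there exists a closed path $\gamma$ starting at $i$ of length $m - \ell_0$. Setting $\pi = \gamma \cdot \sigma$ produces a path in $\Pa(i,j,G)$ of length exactly $m$, so that $n - \ell(\pi) = s \in \{0, \dots, c(G)-1\}$, as required. The only slightly delicate point is the residue bookkeeping, which is why Lemma~\ref{lem:Ni,j} must be applied twice (once to fix $r$ as the common residue of $i$-$j$ paths, and once to deduce that the chosen scaffolding path $\sigma$ shares this residue); all other steps are direct arithmetic on the threshold $\EP(G) + c(G) + \CDD(G) - 1$.
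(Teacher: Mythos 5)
Your proof is correct and essentially the paper's own argument: a simple path from $i$ to $j$ of length at most $\CDD(G)$, padded by a closed path whose length is a multiple of $c(G)$ at least $\EP(G)$, with the same threshold arithmetic. The only cosmetic differences are that you attach the closed path at $i$ instead of at $j$ and invoke Lemma~\ref{lem:Ni,j} for the residue bookkeeping, which is harmless but unnecessary, since one can simply take $r$ to be the residue of $n-\ell(\pi_0)$ modulo $c(G)$ as the paper does.
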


\begin{proof}
Let~$i, j$ be any two nodes, and let $\pi_0$ be a simple path from $i$ to $j$.
For any integer $n$, consider the residue $r$ of $n-\ell (\pi_0)$ modulo $c(G)$.
By definition of $\EP(G)$, if $n-\ell (\pi_0) - r \geqslant \EP(G)$, then there
	exists a closed path $\gamma$ starting at node $j$ with length equal to 
	$n-\ell (\pi_0) - r$.
Then, $\pi_0 \cdot \gamma$ is a path from $i$ to $j$ with length $n-r$, where
	$r \in \{0, \cdots, c(G)-1\}$.
The lemma follows since $n-\ell (\pi_0) - r \geqslant \EP(G)$ as soon as 
	$n \geqslant \EP(G) + \CDD(G) + c(G) - 1$.
\end{proof}	

\begin{lem}\label{lem:ep:g}
Let~$A\in\M_{N,N}(\IRmax)$ irreducible, $G=G(A)$, and let~$n$ be any integer
such that $n \geqslant \EP(G)+ c(G) + \CDD(G) -1$.
Then $A_{i,j}^{\otimes n+c(G)}=-\infty$ if and only if $A_{i,j}^{\otimes n} = -\infty$.
\end{lem}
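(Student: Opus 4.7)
The plan is to translate the statement into a purely graph-theoretic one using Lemma~\ref{lem:path:formula}, which says that $A_{i,j}^{\otimes n}$ is finite if and only if $\Pa^n(i,j,G)$ is nonempty, i.e., there exists a path from~$i$ to~$j$ in~$G$ of length exactly~$n$. So the claim becomes: for $n \geqslant \EP(G)+c(G)+\CDD(G)-1$, there is a path from~$i$ to~$j$ of length $n+c(G)$ if and only if there is a path from~$i$ to~$j$ of length~$n$.

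The key structural fact I would exploit is Lemma~\ref{lem:Ni,j}: since $G$ is strongly connected (because $A$ is irreducible), the set $\mathbf{N}_{i,j}$ is nonempty, and all its elements share a common residue~$r_{i,j}$ modulo~$c(G)$. Consequently a path from $i$ to $j$ of length $m$ can exist only if $m \equiv r_{i,j} \pmod{c(G)}$, and both $n$ and $n+c(G)$ have the same residue modulo~$c(G)$. Thus if one of $n$, $n+c(G)$ has residue different from $r_{i,j}$, then \emph{both} belong to no such path set and the claim is trivial on that side.

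The remaining case is when $n \equiv r_{i,j} \pmod{c(G)}$. Here I would invoke Lemma~\ref{lem:pathpenality}: it produces, for any integer $m \geqslant \EP(G)+c(G)+\CDD(G)-1$, a path from $i$ to $j$ of length $m-r$ for some $r \in \{0,\dots,c(G)-1\}$. Applying it first to $m=n$, the length $n-r$ lies in $\mathbf{N}_{i,j}$ and therefore satisfies $n-r \equiv r_{i,j} \pmod{c(G)}$; combined with $n \equiv r_{i,j} \pmod{c(G)}$, this forces $r=0$, yielding a path of length exactly~$n$. The same argument applied to $m=n+c(G)$ (which is $\geqslant \EP(G)+c(G)+\CDD(G)-1$ as well) produces a path of length exactly $n+c(G)$. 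This gives both implications simultaneously.

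I do not expect a real obstacle here: the proof is essentially the observation that, once one is past the exploration penalty threshold (shifted by $c(G)+\CDD(G)-1$), the existence of paths of a given length from $i$ to $j$ depends only on the residue modulo~$c(G)$. The only point that needs care is making sure the threshold holds for both $n$ and $n+c(G)$, and explicitly pinning down $r_{i,j}$ via Lemma~\ref{lem:Ni,j} before invoking Lemma~\ref{lem:pathpenality}.
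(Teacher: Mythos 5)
Your proof is correct and uses exactly the same two ingredients as the paper's: Lemma~\ref{lem:pathpenality} to produce a path of length within $c(G)-1$ of the target, and Lemma~\ref{lem:Ni,j} to pin the residue class and force the defect $r$ to be zero. The only difference is organizational — you split on whether $n$ matches the common residue $r_{i,j}$ and handle both directions symmetrically (showing both path sets are nonempty in the matching case, both empty otherwise), whereas the paper starts from a hypothesized path of length $n+c(G)$ and derives $r=0$ from it; the substance is identical.
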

\begin{proof}
It is equivalent to claim that $\Pa^{n+c(G)}(i,j,G)=\emptyset$ if and
     only if $\Pa^n(i,j,G)=\emptyset$ for any integer~$n\geqslant \EP(G)+ c(G) + \CDD(G) -1$.

Suppose $\Pa^{n+c(G)}(i,j,G) \neq \emptyset$, and let  $\pi_0\in\Pa^{n+c(G)}(i,j,G)$. 
By Lemma~\ref{lem:pathpenality}, there exists a path $\pi\in\Pa(i,j,G)$ such that 
	$n=\ell(\pi)+r$ with $r\in\{0,1,\dots,c(G)-1\}$.
Lemma~\ref{lem:Ni,j} implies that~$c(G)$ divides $\ell(\pi_0)-\ell(\pi) = (n+c(G)) - (n-r) = c(G)+r$; hence~$c(G)$ divides~$r$.
Therefore,~$r=0$, i.e., $\ell(\pi) =n$ and thus~$\Pa^n(i,j,G)\neq\emptyset$.

The converse implication is proved similarly.
\end{proof}

We now define the matrix/system bound for graph~$G$ by
\[ \Bunu(G) = 2\CDD(G) + \EP(G) + \max_{H \in \SCC(G_c)} c(H) + \max_{H \in \SCC(G_c)}\, \EP(H) - 1 \]
and let
\[ \mu(A) = \sup \left\{ A_{i,k}^{\otimes n} - A_{i,j}^{\otimes n} \mid i,j,k\in V,\ n\geqslant \Bunu(G(A)),\ A_{i,j}^{\otimes n} \neq -\infty \right\}\enspace. \]
Obviously, $\mu \in \IRmax$.

In the following lemma, we fix a node $j\in V$, and a vector~$v\in\IR^N$ such 
	that
	\begin{equation}\label{cond:mu}
	 \forall k\in V\setminus \{ j\},\ \   v_j - v_k\geqslant\mu  \enspace.
	\end{equation}
Such a vector $v$ exists, and we let $x = x_{A,v}$.

\begin{lem}\label{lem:mu}
Let~$A\in\M_{N,N}(\IRmax)$ irreducible and $G=G(A)$.
For any integer~$n\geqslant \Bunu(G)$, any node~$i$, and any positive integer $p$ 
	that is a multiple of $c(G)$,
     if $x_i(n+p)=x_i(n)$, then $A_{i,j}^{\otimes n+p} =
     A_{i,j}^{\otimes n}$.
\end{lem}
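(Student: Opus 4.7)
The plan is to exploit the chosen vector $v$: since $v_j$ exceeds every other $v_k$ by at least $\mu(A)$, the coordinate $k=j$ will always dominate in the max-plus matrix-vector product $x_i(n) = \max_k\{A^{\otimes n}_{i,k}+v_k\}$, whenever $A^{\otimes n}_{i,j}$ is finite. This reduces the equality $x_i(n+p)=x_i(n)$ to an equality between the $(i,j)$-entries of $A^{\otimes n+p}$ and $A^{\otimes n}$. The finiteness requirement is handled separately by invoking Lemma~\ref{lem:ep:g}.

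First, I would observe that $\Bunu(G)\geqslant \EP(G)+c(G)+\CDD(G)-1$: since $\CP(H)\subseteq\CP(G)$ for any critical component~$H$, we have $c(G)\mid c(H)$, so $c(G)\leqslant \max_H c(H)$, and the remaining terms in $\Bunu(G)$ only add nonnegative quantities. Hence for every $n\geqslant \Bunu(G)$ Lemma~\ref{lem:ep:g} applies. Writing $p=q\,c(G)$ and iterating the lemma along the chain $n, n+c(G), n+2c(G),\ldots, n+p$, I obtain the equivalence
\[ A^{\otimes n+p}_{i,j}=-\infty \iff A^{\otimes n}_{i,j}=-\infty \enspace. \]
This immediately handles the case $A^{\otimes n}_{i,j}=-\infty$: both sides of the claimed equality are $-\infty$.

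In the remaining case $A^{\otimes n}_{i,j}$ is finite, so by the very definition of $\mu=\mu(A)$ and since $n\geqslant \Bunu(G)$, we have $A^{\otimes n}_{i,k}-A^{\otimes n}_{i,j}\leqslant \mu$ for every node $k$. Combined with the standing hypothesis $v_j-v_k\geqslant \mu$ for $k\neq j$ (and the trivial equality for $k=j$), this yields
\[ A^{\otimes n}_{i,k}+v_k \;\leqslant\; A^{\otimes n}_{i,j}+v_j \qquad \text{for all } k \enspace, \]
whence $x_i(n)=A^{\otimes n}_{i,j}+v_j$. The equivalence above also gives $A^{\otimes n+p}_{i,j}$ finite, and $n+p\geqslant \Bunu(G)$, so the same argument applied at step $n+p$ gives $x_i(n+p)=A^{\otimes n+p}_{i,j}+v_j$. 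The hypothesis $x_i(n+p)=x_i(n)$ then reduces (cancelling the finite number $v_j$) to the desired $A^{\otimes n+p}_{i,j}=A^{\otimes n}_{i,j}$.

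The only delicate step is the bookkeeping: checking that $\Bunu(G)$ is large enough for Lemma~\ref{lem:ep:g} to apply and that iterating the lemma $q$ times across multiples of $c(G)$ is justified. Everything else is a short manipulation of the definition of $\mu$ together with the constraint on $v$, and I expect no conceptual difficulty there.
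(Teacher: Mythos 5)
Your proof is correct and follows essentially the same route as the paper's: Lemma~\ref{lem:ep:g} handles the $-\infty$ case, and in the finite case the definition of $\mu$ together with condition~(\ref{cond:mu}) forces $x_i(n)=A^{\otimes n}_{i,j}+v_j$ at both $n$ and $n+p$. Your explicit verification that $\Bunu(G)\geqslant \EP(G)+c(G)+\CDD(G)-1$ and the iteration of Lemma~\ref{lem:ep:g} in steps of $c(G)$ are details the paper glosses over, but the argument is the same.
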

\begin{proof}
Let~$i$ be  any node in $G$, and $n$ be any integer such that $n \geqslant \Bunu(G)$.
Since~$\Bunu(G) \geqslant \EP(G) + c(G) + \CDD(G) -1$,  and~$c(G)$ divides~$p$,
	we derive from Lemma~\ref{lem:ep:g} that $A_{i,j}^{\otimes n+p}=-\infty$ if 
	and only if $A_{i,j}^{\otimes n}=-\infty$.
There are two cases to consider:
\begin{enumerate}
\item $A_{i,j}^{\otimes n}= -\infty$ and $A_{i,j}^{\otimes n+p}= -\infty$.
In this case, 	$A_{i,j}^{\otimes n+p} = A_{i,j}^{\otimes n}$ trivially holds.
\item  $A_{i,j}^{\otimes n}\neq-\infty$, and $A_{i,j}^{\otimes n+p}\neq -\infty$.
Recall that 
	$$x_i(n) = \max \big\{A_{i,k}^{\otimes n} + v_k \ | \ k\in \{1,\cdots,N\}\big\} \enspace.$$
By definition of $\mu$ and $v$,  for any node $k\neq j$,
	\[ A_{i,k}^{\otimes n} - A_{i,j}^{\otimes n} \leqslant \mu \leqslant v_j - v_k \enspace. \]
Since the latter inequalities trivially hold for $k=j$, it follows that 
     \[ x_i(n) = A_{i,j}^{\otimes n} + v_j \enspace.\]
As  $n +p \geqslant n$, we also have
\[ A_{i,j}^{\otimes n +p} = x_i(n+p) - v_j = x_i(n) - v_j = A_{i,j}^{\otimes n} \enspace. \]
Thus  $A_{i,j}^{\otimes n+p} =A_{i,j}^{\otimes n}$  holds in this case.
\end{enumerate}
The lemma follows in both cases.
\end{proof}

\begin{lem}\label{lem:mu:upper:bound}
Let $A\in\M_{N,N}(\IRmax)$ be an irreducible max-plus matrix, and let $G=G(A)$. Then
\begin{equation}
\displaystyle \mu(A) \leqslant \overline{\Delta}(G) \cdot \CDD(G) - \overline{\delta}(G)\cdot \big( 2\CDD(G) + \EP(G) 
		+ \max_{H \in \SCC(G_c)} c(H) - 1 \big)\enspace.\notag
\end{equation}
\end{lem}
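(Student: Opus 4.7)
The plan is to reduce to the zero-rate case. Since $(\lambda \otimes A)^{\otimes n}_{i,k} - (\lambda \otimes A)^{\otimes n}_{i,j} = A^{\otimes n}_{i,k} - A^{\otimes n}_{i,j}$, the quantity $\mu(A)$ is invariant under homothety, and so is every parameter on the right-hand side by Lemma~\ref{lem:invariance}. I may therefore assume $\varrho(G) = 0$, so that $\overline{\Delta}(G) = \Delta(G)$, $\overline{\delta}(G) = \delta(G)$, and every critical closed path has e-weight exactly~$0$. The bound then splits into an upper bound on $A_{i,k}^{\otimes n}$ (contributing the $\overline{\Delta}\cdot \CDD(G)$ term) and a matching lower bound on $A_{i,j}^{\otimes n}$ (contributing the $-\overline{\delta}\cdot(2\CDD(G) + \EP(G) + \max_H c(H) - 1)$ term). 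The case $A_{i,k}^{\otimes n}=-\infty$ does not contribute to the supremum defining $\mu(A)$.

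For the upper bound, I would use the first path reduction: for any path $\pi$ of length $n$ from $i$ to $k$, the construction of $\Simp(\pi)$ successively removes nonempty closed subpaths, each of which has e-weight $\leqslant 0$ when $\varrho(G) = 0$. Hence $\wstar(\pi) \leqslant \wstar(\Simp(\pi)) \leqslant \Delta(G)\cdot \CDD(G) = \overline{\Delta}(G)\cdot \CDD(G)$, since $\Simp(\pi)$ is simple. Taking the maximum over $\pi$ gives $A_{i,k}^{\otimes n} \leqslant \overline{\Delta}(G)\cdot\CDD(G)$.

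For the lower bound, I would build an explicit path $\pi = \pi_1 \cdot \gamma_b \cdot \gamma_c \cdot \pi_2$ of length $n$ from $i$ to $j$, where $\pi_1$ and $\pi_2$ are simple paths from $i$ to a fixed critical node $k_0$ (existing by Lemma~\ref{lem:xi:is:minimum}) and from $k_0$ to $j$, so that $L := \ell(\pi_1) + \ell(\pi_2) \leqslant 2\CDD(G)$. The closed path $\gamma_c$ is chosen inside the critical component $H \ni k_0$ with length a multiple of $c(H)$, hence is critical by Lemma~\ref{lem:paths:in:crit:comps:are:critical} and has weight~$0$; it provides the bulk of the length. The closed path $\gamma_b$ is a short closed path at $k_0$ in $G$ used solely to correct a modulus. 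Because $A_{i,j}^{\otimes n}\neq-\infty$ and $\pi_1\cdot\pi_2$ is a length-$L$ path from $i$ to $j$, Lemma~\ref{lem:Ni,j} yields $c(G) \mid (n-L)$; to invoke the definition of $\EP(H)$ for $\gamma_c$ I need $c(H) \mid (n - L - \ell(\gamma_b))$. I would pick $\ell(\gamma_b)$ to be the unique value in $[\EP(G), \EP(G) + c(H))$ congruent to $n-L$ modulo $c(H)$. Since $c(G)\mid c(H)$ and $c(G)\mid (n-L)$, this value is itself a multiple of $c(G)$, so $\gamma_b$ exists by the definition of $\EP(G)$.

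The definition of $\Bunu(G)$ is calibrated precisely so that $n - L - \ell(\gamma_b) \geqslant \max_H \EP(H) \geqslant \EP(H)$, guaranteeing the existence of $\gamma_c$. Combining weights yields $\wstar(\pi) \geqslant \delta(G) \cdot \big(L + \ell(\gamma_b)\big) \geqslant \overline{\delta}(G) \cdot \big(2\CDD(G) + \EP(G) + \max_H c(H) - 1\big)$, using $\overline{\delta}(G)\leqslant 0$ and the upper bound $L + \ell(\gamma_b) \leqslant 2\CDD(G) + \EP(G) + c(H) - 1$. Subtracting from the upper bound on $A_{i,k}^{\otimes n}$ gives the claimed inequality. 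The main obstacle is the potential modulus mismatch between $c(G)$ and the strictly larger $c(H)$; this is exactly what forces the introduction of the correcting detour $\gamma_b$ of length up to $\EP(G) + c(H) - 1$, and is responsible for the $\max_H c(H) - 1$ contribution to the bound.
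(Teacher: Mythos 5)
Your proof is correct and follows essentially the same route as the paper: reduce to $\varrho(G)=0$, bound $A_{i,k}^{\otimes n}$ above by $\Delta(G)\cdot\CDD(G)$ via the simple part, and bound $A_{i,j}^{\otimes n}$ below by an explicit path through a critical component, using $\EP(H)$ for the long critical loop and $\EP(G)$ for a short modulus-correcting loop. The only (harmless) difference is that you attach the correcting loop $\gamma_b$ at the critical node $k_0$, whereas the paper attaches it at the end node $j$; both placements yield the same length budget $2\CDD(G)+\EP(G)+\max_{H}c(H)-1$ for the non-critical portion.
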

\begin{proof}
First observe that $$\mu(A) = \mu(\overline{A}) \enspace .$$ 
Hence each term in the inequality to show  is invariant under substituting $G$ by 
	$\overline{G}$, and we may assume without loss of generality \ that~$\varrho(A) = 0$.
It follows that $$  \delta(G) \leqslant 0  \leqslant \Delta(G) \enspace .$$ 

Then we  prove that 
\begin{equation}\label{eq:mu:upper:bound:upper}
A_{i,k}^{\otimes n} \leqslant \Delta(G)\cdot \CDD(G)\enspace.
\end{equation}
If $A_{i,k}^{\otimes n} = -\infty$, then (\ref{eq:mu:upper:bound:upper}) trivially holds.
Otherwise,  $A_{i,k}^{\otimes n} = \wstar(\hat{\pi})$ for some path $\hat{\pi}\in\Pa^n(i,k,G)$, 
	and $$\wstar(\hat{\pi}) \leqslant \wstar\big(\!\Simp(\hat{\pi}) \big) \leqslant 
	\Delta (G)\cdot \CDD(G)\enspace.$$

We now give a lower bound on~$A_{i,j}^{\otimes n}$ in the case
     that it is finite, i.e., if~$\Pa^n(i,j,G)\neq\emptyset$.
Let~$u$ be a critical node, in critical component~$H$, with minimal
     distance from~$i$ and let~$\pi_1$ be a shortest path from~$i$
     to~$u$.
Further, let~$\pi_2$ be a shortest path from~$u$ to~$j$.
Let $r$ denote the residue of $n - \ell(\pi_1\cdot\pi_2) - \EP(G)$ modulo $c(H)$,
	and let $t= n - \ell(\pi_1\cdot\pi_2) - \EP(G) -r$.
Since~$t\equiv0\pmod{c(H)}$, and 
	$$t \geqslant \Bunu -  2\CDD(G) - \EP(G) - c(H) + 1 \geqslant  \EP(H) \enspace,$$
	there exists a closed path~$\gamma_c$ of length~$t$ in
	component~$H$ starting at node~$u$.
Let $s= \EP(G) + r$; then, $s \geqslant \EP(G)$.
Moreover, $s= n - \ell(\pi_1\cdot\gamma_c\cdot\pi_2)$, and 
	$\pi_1\cdot\gamma_c\cdot\pi_2 \in \Pa(i,j,G)$.
By Lemma~\ref{lem:Ni,j}, it follows that $c(G)$ divides $s$.
Hence there exists a closed path~$\gamma_{nc}$ of length~$s$ starting
     at node~$j$.

Now define~$\pi=\pi_1\cdot\gamma_c\cdot\pi_2\cdot\gamma_{nc}$.
Figure~\ref{fig:lem:mu:upper:bound} shows path~$\pi$.
Clearly,~$\ell(\pi) =n$ and
	$$ \wstar(\pi) \geqslant \delta \cdot( n - t
     ) \geqslant \delta\cdot \big( 2 \CDD(G) + \EP(G) + c(H) -1
     \big) \enspace,$$
and so
	\begin{equation}\label{eq:mu:upper:bound:lower}
		A_{i,j}^{\otimes n} \geqslant  \delta\cdot \big( 2 \CDD(G) + \EP(G) + 
		\max_{H \in \SCC(G_c)} c(H) -1\big) \enspace.
	\end{equation}
Combining~\eqref{eq:mu:upper:bound:upper}
     and~\eqref{eq:mu:upper:bound:lower} concludes the proof.
\begin{figure}[ht]
\centering
\begin{tikzpicture}[>=latex']
	\node[shape=circle,draw] (i) at (-4,-1) {$\scriptstyle i$};
	\node[shape=circle,draw] (k) at (0,0) {$\scriptstyle u$};
	\node[shape=circle,draw] (j) at (4,-1.2) {$\scriptstyle j$};
	\draw[thick,->] (i) .. controls (-3,1) and (-2,-2)  .. node[midway,above]{${\pi}_1$}  (k);
	\draw[thick,->] (k) .. controls (2,0.2) and (3,-2)  .. node[midway,above]{${\pi}_2$}  (j);
	\draw[thick] (k) .. controls (2,1.2) and (-1,2)  .. node[near start,left]{$\gamma_{c}$}  (1,2);
	\draw[thick,->] (1,2) .. controls (3,2.2) and (-1,1)  ..  (k);
	\node[cloud, cloud puffs=16, draw,minimum width=3.5cm, minimum height=4cm] at (0.6,1) {};
	\node at (1.8,1.5) {$H$};
	\draw[thick] (j) .. controls (4,1.2) and (3,1)  .. node[near start,right]{$\gamma_{nc}$}  (4,1);
	\draw[thick,->] (4,1) .. controls (5,1.2) and (3,1)  ..  (j);
\end{tikzpicture}
\caption{Path~$\pi$ in proof of Lemma~\ref{lem:mu:upper:bound}}
\label{fig:lem:mu:upper:bound}
\end{figure}
\end{proof}

From Lemma~\ref{lem:mu:upper:bound} with $\CDD(G) \le N-1$ and $c(H) \le N$ we immediately obtain:
\begin{cor}\label{cor:mu:upper:bound}
Let $A\in\M_{N,N}(\IRmax)$ be an irreducible max-plus matrix, and let $G=G(A)$. Then
\begin{equation}
\displaystyle \mu(A) \leqslant
   (\Delta(G) - \delta(G)) \cdot (N-1) + (\varrho(G)-\delta(G))\cdot \big( 2(N-1) + \EP(G)\big)\enspace.\notag
\end{equation}
\end{cor}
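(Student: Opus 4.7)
The plan is to derive the corollary directly from Lemma~\ref{lem:mu:upper:bound} by plugging in elementary bounds on the graph parameters and then rearranging the result.

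First I would observe the two trivial estimates on the parameters of $G$: since every simple path visits each of the $N$ nodes at most once, $\CDD(G) \leqslant N-1$; and for each critical component $H \in \SCC(G_c)$, the cyclicity satisfies $c(H) \leqslant g(H) \leqslant |V(H)| \leqslant N$, so $\max_{H \in \SCC(G_c)} c(H) \leqslant N$. Substituting these into the bound of Lemma~\ref{lem:mu:upper:bound} and recalling that by definition $\overline{\Delta}(G) = \Delta(G) - \varrho(G)$ and $\overline{\delta}(G) = \delta(G) - \varrho(G)$, we obtain
\begin{equation*}
\mu(A) \leqslant (\Delta(G) - \varrho(G))\cdot (N-1) - (\delta(G) - \varrho(G))\cdot \bigl( 3(N-1) + \EP(G) \bigr)\enspace.
\end{equation*}

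The remaining step is a purely algebraic regrouping. I would split the coefficient $3(N-1)$ of $-(\delta(G)-\varrho(G))$ as $(N-1) + 2(N-1)$ and combine the $(N-1)$-piece with the first summand:
\begin{equation*}
(\Delta(G) - \varrho(G))(N-1) - (\delta(G) - \varrho(G))(N-1) = (\Delta(G) - \delta(G))(N-1)\enspace.
\end{equation*}
The leftover piece is $-(\delta(G) - \varrho(G))\cdot(2(N-1) + \EP(G)) = (\varrho(G)-\delta(G))\cdot(2(N-1) + \EP(G))$. Adding the two yields exactly the bound claimed in the corollary.

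There is no real obstacle here: the corollary is a bookkeeping consequence of Lemma~\ref{lem:mu:upper:bound}, and the only thing that requires any care is tracking the sign in the homothety-normalised parameters $\overline{\Delta}(G)$ and $\overline{\delta}(G)$ so that the two contributions combine as $(\Delta(G) - \delta(G))(N-1)$ rather than as a term involving $\varrho(G)$ on its own.
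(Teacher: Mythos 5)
Your proposal is correct and matches the paper's own derivation: the paper obtains the corollary from Lemma~\ref{lem:mu:upper:bound} precisely by substituting $\CDD(G)\leqslant N-1$ and $c(H)\leqslant N$ (both substitutions being monotone since $\overline{\Delta}(G)\geqslant 0$ and $-\overline{\delta}(G)\geqslant 0$) and then regrouping $3(N-1)=(N-1)+2(N-1)$ so that $(\Delta-\varrho)(N-1)-(\delta-\varrho)(N-1)=(\Delta-\delta)(N-1)$. Nothing further is needed.
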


For an irreducible matrix $A\in\M_{N,N}(\IRmax)$ and~$j$, with $1\le j\le N$,
we define vector~$v^j$ by:
	$$ \forall k \in V,\ \  v^j_k = \left\{  \begin{array}{ll}
	                                              0 & \mbox{ if } k=j \\
	                                              -\mu(A) & \mbox{ otherwise\enspace. }
	                                          \end{array} \right.$$

\begin{thm}\label{thm:nAvsnAv}
Let~$A\in\M_{N,N}(\IRmax)$ be an irreducible max-plus matrix and $G=G(A)$.
Then $$   n_{A}  \leqslant \
		\max \{ \Bunu(G) \ ,\  n_{A, v^1} \ , \cdots,\ n_{A, v^N} \} \enspace .$$
\end{thm}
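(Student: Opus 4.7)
The strategy is to use Lemma~\ref{lem:mu} to lift the periodicity of each system $x_{A,v^j}$ to the corresponding column of the matrix sequence $A^{\otimes n}$. First I would reduce to the case $\varrho(A) = 0$ by Lemma~\ref{lem:transient:invariance}: both $n_A$ and $n_{A,v^j}$ are invariant under replacing $A$ by $\overline{A}$, and $\Bunu$ depends only on the underlying graph $G(A)$, which is preserved (since $\mu(A) = \mu(\overline{A})$ as noted in the proof of Lemma~\ref{lem:mu:upper:bound}). Under this normalization, the Perron statement of Theorem~\ref{thm:perron} reads $A^{\otimes n + c(A)} = A^{\otimes n}$, and for any period $p$ (in particular $p = p(G)$, a multiple of $c(A)$; see Section~\ref{subsec:period}), $A^{\otimes n+p} = A^{\otimes n}$ for $n$ past the transient, and similarly $x_{A,v^j}(n+p) = x_{A,v^j}(n)$ for $n \geq n_{A,v^j}$.

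Next, set $B = \max\{\Bunu(G),\, n_{A,v^1},\, \dots,\, n_{A,v^N}\}$ and fix any $n \geq B$. I would then verify that each $v^j$ is an admissible choice in Lemma~\ref{lem:mu}: by construction $v^j_j - v^j_k = \mu(A)$ for every $k \neq j$, which is exactly condition~\eqref{cond:mu}. Moreover $\mu(A) \geq 0$, taking $k = j$ in its defining supremum (which is nonempty since irreducibility of $A$ forces finite entries of $A^{\otimes n}$ for large $n$), and $\mu(A) < \infty$ by Lemma~\ref{lem:mu:upper:bound}; hence $v^j \in \IR^N$ and $n_{A,v^j}$ is well-defined. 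Since $n \geq n_{A,v^j}$ and $\varrho(A) = 0$, we get $(x_{A,v^j})_i(n+p) = (x_{A,v^j})_i(n)$ for all $i$. Combining this with $n \geq \Bunu(G)$ and the fact that $p$ is a multiple of $c(G)$, Lemma~\ref{lem:mu} yields $A^{\otimes n+p}_{i,j} = A^{\otimes n}_{i,j}$ for every $i$.

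Letting $j$ range over $\{1, \dots, N\}$ then gives $A^{\otimes n + p} = A^{\otimes n}$ for all $n \geq B$, so $n_A \leq B$, which is the claim. The only real content is the design of $v^j$, which is tailored so that in the max-plus expansion $x_i(n) = \max_k \bigl(A_{i,k}^{\otimes n} + v^j_k\bigr)$ the maximum is forced to be attained at the index $k=j$ once $n \geq \Bunu(G)$; the uniform gap $\mu(A)$ is exactly what makes this forcing work across all $i$ and $n$, and this is the point isolated by Lemma~\ref{lem:mu}. With that lemma in hand the theorem is a short combinatorial bookkeeping argument.
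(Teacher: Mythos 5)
Your proof is correct and follows essentially the same route as the paper's: verify that each $v^j$ satisfies condition~\eqref{cond:mu}, apply Lemma~\ref{lem:mu} columnwise, and conclude via the period-independence of the transient. The only (harmless) differences are that you work with $p=p(G)$ instead of $p=c(A)$ and that you make explicit the normalization to $\varrho(A)=0$ needed so that $n\geqslant n_{A,v^j}$ literally yields the hypothesis $x_i(n+p)=x_i(n)$ of Lemma~\ref{lem:mu} --- a point the paper leaves implicit.
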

\begin{proof}
We easily check that each vector $v^j$ satisfies condition (\ref{cond:mu}), 
	and $\lVert v^j \rVert = \mu(A)$.
Since $c(G)$ divides $c\big(A\big)$, we can apply 
	Lemma~\ref{lem:mu} with all vectors $v^j$ and $p= c\big(A \big)$.
Then, we obtain
	$$ \forall n \in \IN  :\   n \geqslant \
\max \{ \Bunu(G) \ ,\  n_{A, v^1}\ , \cdots,\ n_{A, v^N} \} 
\Longrightarrow 
n \geqslant n_{A}  \enspace ,$$
	which shows 
$$   n_{A}  \leqslant \
\max \{ \Bunu(G) \ ,\  n_{A, v^1} \ , \cdots,\ n_{A, v^N} \} \enspace .$$
\end{proof}

Hence up to  $\Bunu$, the transient of an irreducible matrix $A$ is equal to 
	the transient of some of the systems $(A, v^j)$.
Interestingly, this result could be compared to the equality 
	$$ n_A = \max \{ n_{A, e^1} \ , \cdots,\ n_{A, e^N} \} $$
	established in Section~\ref{subsec:period}, where $e^j$ denotes the vector in $\IRmax^N$
	which is similar to vector $v^j$ except the $j$th component equals to $-\infty$
	instead of $-\mu(A)$.

Combination of Theorem~\ref{thm:nAvsnAv} and Corollary~\ref{cor:mu:upper:bound} finally yields:

\begin{cor}\label{cor:Abound}
Let~$A\in\M_{N,N}(\IRmax)$ be an irreducible max-plus matrix and $G=G(A)$.
Then
\begin{equation}
n_{A}  \leqslant \max \big\{ 3(N-1) + \EP(G) + \max_{H\in\SCC(G_c)} \EP(H)\ ,\ B(G) \big\} \enspace ,\notag
\end{equation}
where $B(G)$ is the minimum of the bounds stated in Corollaries~\ref{cor:explorative:nonprimitive},~\ref{cor:nonexplorative},
and~\ref{cor:nonexplorative2}, with $\lVert v\rVert$
replaced by $(\Delta(G) - \delta(G)) \cdot (N-1) + (\varrho(G)-\delta(G))\cdot \big( 2(N-1) + \EP(G)\big)$.
\end{cor}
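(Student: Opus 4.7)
The plan is to combine Theorem~\ref{thm:nAvsnAv} (matrix transient is controlled by system transients of the special vectors~$v^j$) with Corollary~\ref{cor:mu:upper:bound} (upper bound on~$\mu(A)$), and then to substitute the resulting bound on $\lVert v^j\rVert = \mu(A)$ into each of the three system-transient bounds previously obtained. This is essentially a bookkeeping argument, so the difficulty is mostly notational rather than conceptual.

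First, I would apply Theorem~\ref{thm:nAvsnAv} directly, so that $n_A$ is bounded by the maximum of~$\Bunu(G)$ and the system transients~$n_{A,v^j}$ for $j=1,\dots,N$. The two terms in that maximum have to be controlled separately. For the first term, I would unfold the definition $\Bunu(G) = 2\CDD(G) + \EP(G) + \max_{H\in\SCC(G_c)} c(H) + \max_{H\in\SCC(G_c)} \EP(H) - 1$ and use the elementary estimates $\CDD(G)\leqslant N-1$ and $c(H)\leqslant N$ (each critical component has at most~$N$ nodes, and the cyclicity of a strongly connected graph is bounded by its girth, hence by its number of nodes). A quick arithmetic simplification then yields $\Bunu(G) \leqslant 3(N-1) + \EP(G) + \max_{H\in\SCC(G_c)} \EP(H)$, which matches the first term in the max in the statement.

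For the second term, the key observation is that by definition $\lVert v^j \rVert = \mu(A)$ for every~$j$. I would then apply each of Corollaries~\ref{cor:explorative:nonprimitive},~\ref{cor:nonexplorative}, and~\ref{cor:nonexplorative2} in turn to~$n_{A,v^j}$; each of these bounds takes the form of a maximum in which $\lVert v\rVert$ appears only in the first summand of the first term. Substituting the upper bound on~$\mu(A)$ provided by Corollary~\ref{cor:mu:upper:bound} for~$\lVert v^j\rVert$ gives a bound on~$n_{A,v^j}$ that is independent of~$j$, namely the quantity~$B(G)$ defined in the statement (taking the best, i.e.\ minimum, of the three resulting bounds). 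Combining the two contributions under a single maximum and noting that the result no longer depends on~$j$ completes the proof.

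The main (and only) subtlety is to verify that the replacement of $\lVert v\rVert$ by the upper bound from Corollary~\ref{cor:mu:upper:bound} inside the three system-transient bounds is legitimate. This is immediate because in each of the three corollaries $\lVert v\rVert$ enters monotonically (it only increases the first term in the maximum), so replacing it by any upper bound preserves the inequality. No additional path-theoretic work is required beyond what has already been assembled in the preceding sections.
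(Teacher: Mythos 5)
Your proposal is correct and follows exactly the route the paper takes: the paper derives this corollary by "combination of Theorem~\ref{thm:nAvsnAv} and Corollary~\ref{cor:mu:upper:bound}", and your bookkeeping (bounding $\Bunu(G)$ via $\CDD(G)\leqslant N-1$ and $c(H)\leqslant N$ to get $3(N-1)+\EP(G)+\max_H\EP(H)$, then substituting the bound on $\mu(A)=\lVert v^j\rVert$ monotonically into the three system-transient corollaries) is precisely the intended argument.
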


\section{Discussion}\label{sec:discussion}

In this section, we discuss the relation to previous work on system and matrix transients and show how to apply our results to the analysis of the Full Reversal algorithm.
In particular, we also obtain a new result on the transient of Full Reversal scheduling on trees.

\subsection{Relation to previous work}\label{subsec:relation:work}

\subsubsection{Even and Rajsbaum}

Even and Rajsbaum~\cite{even:rajs} proved an upper bound on the
transient of $x_{A,v}$ for an irreducible matrix $A \in
\M_{N,N}\big(\IN\cup\{-\infty\}\big)$ and a vector $v \in \IN^N$.
With our notation and~$G=G(A,v)$,
     their bound reads    
\begin{equation}
n_{A,v}^{\mathrm{ER}} = l_0(G) + N + 2N^2\enspace,\label{ERbound:Av}
\end{equation}
where $l_0(G)$ is an upper bound on the length~$n$ of maximum weight
     paths that contain only non-critical nodes.
It thus corresponds to our $\Bcnc(G)$ bound, and is given by,
\begin{equation}
 l_0(G) = \frac{N}{f(G)}\Big(\lVert v\rVert+(\Delta-\delta)\,(N-1) \Big)+(N-1)\enspace,\notag
\end{equation}
where $f(G)$ is defined by
\begin{align}
f(G) &= \inf\Big\{ \ell(\gamma)\varrho(G)- \wstar(\gamma) \mid \gamma\in\CP(G) \wedge\gamma\text{ is non-critical}\Big\}\enspace.\notag\\
\intertext{Since a path that has no critical nodes is non-critical, this can be bounded by}
f(G) &\le N \cdot \big(\varrho(G)-\varrho_\nc(G)\big)\enspace.\notag
\end{align}
Together with Corollary~\ref{cor:n:zero} it thus follows that,
\begin{equation}
 l_0(G) \ge \frac{\lVert v\rVert+(\Delta-\delta)\cdot(N-1)}{\varrho-\varrho_\nc}+(N-1) \ge \Bcnc(G)\enspace.\label{ER:comp1}
\end{equation}

The $N+2N^2$ term in~\eqref{ERbound:Av} corresponds to the second term
     in the maximum of our explorative and repetitive bounds.
Even and Rajsbaum extend realizers that contain critical nodes by
     adding a {\em spanning Eulerian derivative\/} (SED) of
     length~$O(N^2)$ for each critical component visited by the
     original realizer.
An SED is a closed path that visits each node in the critical
     component.
They add SEDs because of the way they reduce paths: Their {\em
     non-critical part\/} construction may disconnect the original
     path by removing collections of critical edges that may be
     combined to a critical closed path.
This is a major difference to our approach: While we, too, reduce
     realizers, our construction does not disconnect paths.
     
With the SED construction Even and Rajsbaum can pump the length of the
     constructed path in multiples of~$c(G_c)$, by adding paths from
     the set of {\em all\/} elementary closed paths of {\em all\/}
     visited critical component, while still maintaining the property
     of being a realizer.
The major difference to our approach here is that we pump the reduced
     realizers' length with either the (i) explorative method, or (ii)
     the repetitive method such that the resulting paths remain
     realizers, where the idea of both (i) and (ii) is that we extend
     the reduced realizers by adding a closed path at a {\em single\/}
     critical node whose subpaths are from a {\em small restricted
     set\/} of elementary closed paths of the node's critical
     component.
In case of (ii) we even restrict the set of elementary closed paths,
     which are used for pumping, to a {\em single\/} elementary closed
     path the critical node lies on.

Not resorting to the SED construction, we obtain upper bounds in which
     the critical path contribution may be linear in the number of
     nodes (cf.\ Section~\ref{subsec:fr} below).

\subsubsection{Hartmann and Arguelles}

Hartmann and Arguelles~\cite{hartmann:arguelles} state the following
bounds
\begin{align}
n_{A,v}^{\mathrm{HA}} &= \max\Bigg\{ \frac{\lVert v\rVert + N\cdot (\Delta-\delta)}{\varrho-\varrho^0} \ ,\ 2N^2 \Bigg\}\enspace,\label{HAbound:Av}\\
n_A^{\mathrm{HA}} &= 2N^2\cdot\frac{\Delta-\delta}{\varrho-\varrho^0}\enspace,\label{HAbound:A}
\end{align}
where $\varrho^0(G(A))$ is defined on the max-balanced reweighted graph~\cite{schneider:schneider} of graph~$G(A)$, in the
     following called~$G_{\max}(A)$, as the supremum of $\varrho' \in
     \IRmax$ such that the subgraph of~$G_{\max}(A)$ induced by the
     edge set of edges of~$G_{\max}(A)$ with weight at
     least~$\varrho'$ has a strongly connected component with only
     non-critical nodes.
In case $\varrho^0=-\infty$, they set $\varrho-\varrho^0 =
     \Delta(G)-\delta(G)$.
Note that the first term in \eqref{HAbound:Av} corresponds to
     our~$\Bcnc(G)$ bound, however, is incomparable with it in
     general.
Hartmann and Arguelles reduce a realizer similar to Even and Rajsbaum, i.e., the reduced path may be disconnected.
To establish connectedness, they add at most~$N$ previously removed closed paths.
They then add additional critical closed paths to arrive at the right residue class modulo~$c(G_c)$.
We have shown in Section~\ref{subsec:improve} that this last step is, in fact, unnecessary and
that the term~$2N^2$ in~$n_{A,v}^{\mathrm{HA}}$
is therefore improvable
     to~$N^2$.
A major difference of our bounds to~\eqref{HAbound:Av}
     and~\eqref{HAbound:A} is that~\eqref{HAbound:Av}
     and~\eqref{HAbound:A} cannot become linear in~$N$.

\subsubsection{Soto y Koelemeijer}

Soto y Koelemeijer~\cite{koelemeijer} presented an upper bound on both
      the transient of system~$x_{A,v}$, in the following denoted by
$n_{A,v}^{\mathrm{SyK}}$, and the transient of matrix~$A$, denoted by~$n_A^{\mathrm{SyK}}$,
for irreducible $A\in \M_{N,N}(\IRmax)$ and $v\in \IR^N$.
With our notation they read, 
\begin{align}
n_{A,v}^{\mathrm{SyK}} &= \max\Bigg\{ \frac{\lVert v\rVert + N\cdot (\Delta-\delta)}{\varrho-\varrho_1} \ ,\ 2N^2 \Bigg\}\enspace,\label{Kbound:Av}\\
n_A^{\mathrm{SyK}} &= \max\Bigg\{ N^2\cdot\frac{\Delta-\delta}{\varrho-\varrho_1} \ ,\ 2N^2 \Bigg\}\enspace,\label{Kbound:A}
\end{align}
where, by setting $G=G(A)$,
\begin{equation}
 \varrho_1(G) = \sup\Bigg\{ \frac{\wstar(\gamma)}{\ell(\gamma)} \mid \gamma\in\CP(G)\ \wedge\ \gamma \text{ is non-critical}\Bigg\}\enspace.\notag
\end{equation}
We start our comparison with the bound $n_{A,v}^{\mathrm{SyK}}$.
By the argument that a path that has no critical nodes is
     non-critical, we obtain, $\varrho_1(G) \ge \varrho_{\nc}(G)$, and
     thereby,    
\begin{equation}
 \varrho(G)-\varrho_1(G) \le \varrho(G)-\varrho_\nc(G)\enspace.
\end{equation}
Thus the first term in \eqref{Kbound:Av} is greater or equal to~$\Bcnc(G)$.
In contrast to the bounds stated in
     Theorems~\ref{thm:explorative:nonprimitive}, \ref{thm:nonexplorative}
     and~\ref{thm:nonexplorative2}, the term~$2N^2$ in
     \eqref{Kbound:Av} prevents the bound from potentially becoming
     linear in~$N$.
Further Corollary~\ref{cor:nonexplorative2} shows that our upper
     bound is strictly less than the upper bound in~\eqref{Kbound:Av}.

\subsubsection{Matrix vs.\ system transients}

In contrast to Soto y Koelemeijer~\cite{koelemeijer} and Hartmann and
     Arguelles~\cite{hartmann:arguelles}, we proved the bound on the
     matrix transient~$n_{A}$ by {\em reduction\/} to the system
     transient~$n_{A,v}$, for properly chosen~$v$, shedding some light
     on how matrix and system transient relate.
With this method we obtain the following bound from
     Corollaries~\ref{cor:Abound} and~\ref{cor:nonexplorative2}, 
\begin{equation}
 n_A \le \max\bigg\{ (N-1)\cdot \frac{2(\Delta-\delta)+(\varrho-\delta)(N+2)}{\varrho-\varrho_\nc} \ ,\ N^2 \bigg\}\enspace,\notag
\end{equation}
which is strictly better than the bounds~$n_A^{\mathrm{HA}}$ and~$n_A^{\mathrm{SyK}}$ with respect to the
     second term, however, in general is incomparable to it with
     respect to the first term.
By conservatively bounding~$\varrho\leqslant \Delta$, we see that our resulting bound on~$n_A$ is close to the bound~$n_A^{\mathrm{SyK}}$ by Soto y Koelemeijer.
Note that, however, our bound can become linear in~$N$, since both
     $\Bms$ and our bounds on~$n_{A,v}$ can become linear in~$N$.

%
%
%

\subsection{Integer matrices}

In the case that all weights of the nontrivial strongly connected e-weighted graph~$G$ are integers, we can derive a lower bound on the term~$\varrho(G) - \varrho_\nc(G)$, which appears in all our upper bounds.
Let~$N$ be the number of nodes in~$G$.
We show that~$1/\big(\varrho(G) - \varrho_\nc(G)\big)$ is in~$O(N^2)$.
This statement is trivial if~$\varrho_\nc(G)=-\infty$; so we assume that~$\varrho_\nc(G)$ is finite.

Let~$\varrho(G)=x/y$ where~$x$ and~$y$ are coprime integers and~$y$ is positive.
Then for every critical closed path~$\gamma$, we have~$\wstar(\gamma)/\ell(\gamma)=x/y$, i.e., $y\cdot\wstar(\gamma)=\ell(\gamma)\cdot x$.
Because~$x$ and~$y$ are coprime, this implies that~$y$ divides~$\ell(\gamma)$.
Hence~$y$ divides all critical closed path lengths, i.e., it divides their greatest common divisor~$\gcd\{ c(H) \mid H\in\SCC(G_c) \}$.
In particular,
\[ y \leqslant \gcd\{ c(H) \mid H\in\SCC(G_c) \}\enspace. \]

If~$\varrho_\nc(G) = z/u$ where~$z$ and~$u$ are coprime integers and~$u$ is positive, then necessarily~$u\leqslant \CF_\nc(G)$, where~$\CF_\nc(G)$ is the maximum path length of elementary closed paths whose nodes are non-critical.
Because~$\varrho(G) - \varrho_\nc(G)>0$, we have
\[ \varrho(G) - \varrho_\nc(G) = \frac{x\cdot u - z\cdot y}{y\cdot u} \geqslant \frac{1}{y\cdot u}\enspace, \]
which implies, by combining the above inequalities, that
\begin{equation}
\frac{1}{\varrho(G) - \varrho_\nc(G)} \leqslant y\cdot u \leqslant \gcd\{ c(H) \mid H\in\SCC(G_c) \}\cdot \CF_\nc(G) \le
(N-N_\nc)\cdot N_\nc \le \frac{N^2}{4} \enspace.\label{eq:int_bound}
\end{equation}
Combination with our bounds on~$n_{A,v}$, for constant~$\lVert v
     \rVert$ and~$\Delta-\delta$, thus yields
     $n_{A,v} = O(N^3)$.

\subsection{Full Reversal routing and scheduling}\label{subsec:fr}

Full Reversal is a simple algorithm on directed graphs used in
     routing~\cite{GB87} and scheduling~\cite{BG89}.
It comprises only a single rule: Each sink reverses all its (incoming)
     edges.
Given an initial graph~$G_0$ with~$N$ nodes, we define a {\em greedy
     execution\/} of Full Reversal as a sequence $(G_t)_{t\ge 0}$ of
     graphs, where~$G_{t+1}$ is obtained from $G_t$ by reversing the
     edges of all sinks in~$G_t$.
As no two sinks in~$G_t$ can be adjacent, $G_{t+1}$ is well-defined.
For each $t\ge 0$ we define the {\em work vector}~$W(t)$ by setting
     $W_i(t)$ to the  number of reversals of node~$i$ until
     iteration~$t$, i.e., the number of times node~$i$ is a sink in
     the execution prefix $G_0,\dots,G_{t-1}$.

Charron-Bost et al.\ \cite{fr:sirocco} have shown that the sequence of
     work vectors can be described as a min-plus linear dynamical
     system: Define a {\em min-plus\/} matrix as a matrix with entries
     in~$\IRmin = \IR\cup \{+\infty\}$.
We, analogously to max-plus, define the matrix multiplication
     \[(A\otimes' B)_{i,j} = \min\{ A_{i,k} + B_{k,j} \mid 1\leqslant
     k\leqslant N\}\enspace.\]
It is $A \otimes' B = -\big((-A) \otimes (-B)\big)$, where
     $(-M)_{i,j}$ is $-M_{i,j}$ for matrix~$M$.
Generalizing a result by Charron-Bost et
     al.~\cite[Corollary~2]{fr:sirocco}, we obtain
\begin{equation}
W(0) = (0,\dots,0) \quad \text{and} \quad -W(t+1) = (-A) \otimes (-W(t))\enspace,\notag
\end{equation}
where
\begin{equation}
A_{i,j} =
\begin{cases}
  0 & \text{ if } (j,i) \text{ is an edge of } G_0\\
  1 & \text{ if } (j,i) \text{ is not an edge of } G_0 \text{ and } (i,j) \text{ is an edge of } G_0\\
+\infty & \text{ otherwise}\enspace.\notag
\end{cases}
\end{equation}

We distinguish two cases that differ in the initial graph~$G_0$: using Full Reversal as a routing algorithm,
     and using Full Reversal as a scheduling algorithm.

\subsubsection{Full Reversal routing}
In the routing case, the initial graph~$G_0$ contains a nonempty set of {\em
     destination nodes}, which are characterized by having a self-loop.
The initial graph without these self-loops is required to be weakly
     connected and acyclic.
It was shown that for such initial graphs, the execution terminates
     (eventually all~$G_t$ are equal), and after termination, the
     graph is destination-oriented, i.e., every node has a path to
     some destination node~\cite{fr:sirocco,GB87}.

In the following we specialize our bounds on the transient of a linear max-plus
     system to obtain bounds on the transient of~$(W(t))_{t\ge
     0}$, i.e., on the termination time~$\theta(G_0)$ of greedy Full Reversal
     routing executions starting from inital graph~$G_0$.
For that we define en-weighted graph~$G = G(-A,0)$ for the max-plus
     matrix~$-A$ obtained from initial graph~$G_0$.
Because $-A_{i,i} = 0$ if $i$ is a destination node, $\varrho(G)=0$.
The set of critical nodes in~$G$ is equal to the set of destination
     nodes in~$G_0$ and each critical component of~$G$ is trivial.
Observe that $-A$ is an integer max-plus matrix with $\Delta_\nc\le 0$
     and $\delta = -1$.
Hence $\varrho_\nc(G) \le -{1}/{N_\nc} \le -{1}/(N-1)$,
     where~$N_\nc$ is the number of non-critical nodes in~$G$.
By Corollary~\ref{cor:n:zero}, $$\Bcnc(G) \le (N-1)^2\enspace.$$
Since for the critical components~$H$ of~$G$, $c(H) = g(H) = 1$, we
     obtain from Theorem~\ref{thm:EP} and
     Corollary~\ref{cor:explorative:primitive}, an upper bound on the
     termination time of Full Reversal routing starting from initial
     graph $G_0$, for $N\geqslant3$, is 
$$\theta(G_0) \leqslant (N-1)^2\enspace,$$
since the second term in the maximum in~$\Bep(G)$ is at most~$2(N-1)$.

If the undirected support of initial graph~$G_0$ without the self-loop
at the destination nodes is a {\em tree}, we can use our bounds to give a new proof
     that the termination time of Full Reversal routing is linear
     in~$N$ \cite[Corollary~5]{fr:sirocco}.
In that case it holds that $\varrho(G)=0$, and either
     $\varrho_\nc(G)=-{1}/{2}$ or $\varrho_\nc(G)=-\infty$.
In both cases we obtain from Corollary~\ref{cor:n:zero},
$$\Bcnc(G) \le 2(N-1)\enspace.$$
In the same way as above, we obtain from Theorem~\ref{thm:EP} and
     Corollary~\ref{cor:explorative:nonprimitive},  
$$\theta(G_0) \le 2(N-1)\enspace,$$
which is linear in~$N$.

\subsubsection{Full Reversal scheduling}

When using the Full Reversal algorithm for scheduling, the undirected support of the weakly connected initial graph~$G_0$ is interpreted as a conflict graph:
nodes model processes and an edge between two processes signifies the existence of a shared resource whose access is mutually exclusive.
The direction of an edge signifies which process is allowed to use the resource next.
A process waits until it is allowed to use all its resources---that is, it waits until it is a sink---and then performs a step, that is, reverses all edges to release its resources.
To guarantee liveness, the initial graph~$G_0$ is required to be acyclic.

Contrary to the routing case, critical components have at least two nodes, because there are no self-loops.
But it still holds that $-A$ is an integer max-plus matrix with $\Delta_\nc\le 0$
and $\delta = -1$.
Hence, by using~\eqref{eq:int_bound} and Corollary~\ref{cor:n:zero}, we get
\[ \Bcnc(G) \le \frac{N^2(N-1)}{4}\enspace, \]
which, by using either Corollary~\ref{cor:explorative:nonprimitive}, \ref{cor:nonexplorative}, or \ref{cor:nonexplorative2} shows that the transient for Full Reversal scheduling is at most cubic in the number~$N$ of processes.
This is an improvement over a result by Malka and Rajsbaum \cite[Theorem~6.4]{malka:rajs} who proved that the transient is at most in the order of~$N^4$.

In the case of Full Reversal scheduling on {\em trees}, it holds that
     $\varrho=-{1}/{2}$, and $\varrho_\nc=-\infty$.
Thus by Corollary~\ref{cor:n:zero},~$\Bcnc(G) = 0$.
Further, $G_\cc = G$ and $c(G)=d(G)=g(G)=2$.
Theorem~\ref{thm:EP} and Corollary~\ref{cor:explorative:nonprimitive}
then imply,
$$\Benp(G) \le 6(N-3)\enspace.$$
The resulting upper bound on the transient of Full Reversal
scheduling on trees is hence linear in~$N$, which was previously unknown.


\section*{Acknowledgments}

The authors would like to thank Sergey Sergeev and Fran\c{c}ois Baccelli for providing useful references.

\end{document}